\newcommand{\bea}{\begin{eqnarray}}
\newcommand{\eea}{\end{eqnarray}}
\def\beaa{\begin{eqnarray*}}
\def\eeaa{\end{eqnarray*}}
\def\ba{\begin{array}}
\def\ea{\end{array}}
\def\be#1{\begin{equation} \label{#1}}
\def \eeq{\end{equation}}
\newcommand{\nn}{\nonumber}
\def\a{{\alpha}}
\def\b{{\beta}}
\def\be{{\beta}}
\def\ga{\gamma}
\def\Ga{\Gamma}
\def\de{\delta}
\def\De{\Delta}
\def\ep{\epsilon}
\def\la{\lambda}
\def\si{\sigma}
\def\Si{\Sigma}
\def\om{\omega}
\def\th{\theta}
\def\varep{\varepsilon}
\def\pr{{\partial}}
\def\al{\alpha}
\def\rh{{\rho}}
\def\MM{{\mathcal M}}
\def\NN{{\mathcal N}}
\def\II{{\mathcal I}}
\def\FF{{\mathcal F}}
\def\HH{{\mathcal H}}
\def\SS{{\mathcal S}}
\def\NN{{\mathcal N}}
\def\KK{{\mathcal K}}
\def\Lie{{\mathcal L}}
\def\Lieh{\widehat{\Lie}}
\def\RR{{\mathcal R}}
\def\QQ{{\mathcal Q}}
\def\Lie{{\mathcal L}}
\def\B{{\bf B}}
\def\D{{\bf D}}
\def\F{{\bf F}}
\def\H{{\bf H}}
\def\J{{\bf J}}
\def\M{{\bf M}}
\def\O{{\bf O}}
\def\R{{\bf R}}
\def\P{{\bf P}}
\def\U{{\bf U}}
\def\W{{\bf W}}
\def\Z{{\bf Z}}
\def\K{{\bf K}}
\def\T{{\bf T}}
\def\E{{\bf E}}
\def\g{{\bf g}}
\def\piX{\,^{(X)}\pi}
\def\GaX{\,^{(X)}\Ga}
\def\piZ{\,^{(Z)}\pi}
\def\PZ{\,^{(Z)}P}
\def\RRR{{\mathbb R}}
\def\f12{{\frac 1 2}}
\def\dual{{\,^*\,}}
\def\th{\theta}
\def\f{\widetilde{f}}
\def\dual{\,^*}
\def\NNb{\underline{\NN}}
\def\PX{\,^{(X)} P}
\newtheorem{theorem}{Theorem}[section]
\newtheorem{lemma}[theorem]{Lemma}
\newtheorem{proposition}[theorem]{Proposition}
\newtheorem{corollary}[theorem]{Corollary}
\newtheorem{definition}[theorem]{Definition}
\newtheorem{remark}[theorem]{Remark}
\numberwithin{equation}{section}
\begin{document}
\title{Rigidity Results   in General Relativity:  a    Review}
\author{Alexandru D. Ionescu}
\address{Princeton University}
\email{aionescu@math.princeton.edu}
\author{Sergiu Klainerman}
\address{Princeton University}
\email{seri@math.princeton.edu}


\begin{abstract}
Despite  a    common perception in the physics  community,    the Black Hole Rigidity   problem   remains     wide open   when   one removes the 
 highly restrictive real analyticity assumption  underlying the classical  results.
 In  this survey   we  review  the  progress made   in the last ten years   in      understanding the conjecture  in      the  more realistic 
 setting of smooth  spacetimes.     We  review 
  both local and global results and discuss the     new mathematical ideas  behind them.   We present  three types of  global  results which assert,
  under   somewhat different assumptions, that any  stationary solution      closed   to a non-extremal Kerr   must be  isometric   to  a  a non-extremal Kerr,    whose 
         parameters $a, M$ are determined  by  their    ADM mass   and angular momentum.  The results illustrates  an important  geometric 
          obstruction          in   understanding the  full rigidity  problem,        the  possible  presence  of                    trapped null geodesics perpendicular to  
            the stationary Killing  vectorfield.  The key insight in all these results  is that such   null geodesics are    non-existent  in  any  non-extremal Kerr 
            and thus, roughly, in any small  perturbation  of  it.

\end{abstract}
\maketitle
\tableofcontents
\section{Introduction} 
A fundamental conjecture in  General Relativity\footnote{See reviews 
by  B. Carter \cite{Ca2} and   P. Chusciel   \cite{Chrusc-Rev}, \cite{Chrusc-Rev2}, 
 for a history  and  review of the current status of the conjecture.}
asserts that the domains of outer communication
of regular\footnote{The notion of regularity 
needed here  requires a careful discussions
 concerning the geometric hypothesis 
 on the space-time. }, stationary, four dimensional,  vacuum black hole solutions
are isometrically diffeomorphic to those of 
Kerr black holes.  One expects,  due to gravitational radiation,
 that general, asymptotically flat, dynamic,  solutions of the Einstein-vacuum
  equations settle down, asymptotically,  into a stationary regime. A similar scenario is expected   to hold true in the presence of matter. Thus
  the conjecture, if true, would characterize all possible asymptotic states 
  of the general evolution.

 So far the conjecture has  been
resolved, by combining results of Hawking \cite{HE},
 Carter \cite{Ca1}, and Robinson \cite{Rob},   under the additional hypothesis of  non-degenerate
horizons and \textit{real analyticity}  of the space-time.  
The assumption of real analyticity, however,   is both hard to justify 
and difficult to dispense of.   
 One  can show, using standard elliptic theory, 
that stationary  solutions are real analytic in regions 
where the corresponding  Killing vector-field  $\T$  is time-like, 
but there is no reason to expect the same result to hold true 
in the ergo-region (in a  Kerr spacetime $\KK(a, m),  0<a<m$
  the Killing  vector-field $\T$, which is time-like in 
the asymptotic region, becomes space-like in the ergo-region).
In view of the  relevance    of the conjectured result to the
 general problem of evolution,  i.e.    the final state conjecture, 
  there is also no reason to expect that, by losing gravitational radiation,
 general  solutions become, somehow, analytic. Thus the assumption
  of analyticity is a   fundamental   limitation of the  present uniqueness results\footnote{  The results  based on analyticity
   can be reformulated as     a  proof of the fact that there can be no   other explicit   stationary  solutions.   Note also that the case of static solutions  
      has been treated in full generality, without assuming analyticity,  by  Israel \cite{I}  and   Bunting-Masood ul Alam.   \cite{Bu-M}.}.
    Here is a more precise  version of the   Carter-Robinson-Hawking result.  
   
\begin{theorem}[Carter-Robinson-Hawking]
The domain of outer communications of a \textit{real analytic}  regular, stationary (i.e. there exists a killing vectorfield $\T$ which is timelike in the asymptotic region),  four dimensional,  vacuum black hole solution is isometrically diffeomorphic to the domain of outer communications of a Kerr black hole. 
\end{theorem}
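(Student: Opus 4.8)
The plan is to follow the classical three-step strategy, combining Hawking's rigidity theorem, Carter's reduction to a two-dimensional boundary value problem, and Robinson's uniqueness identity. The first and most delicate step is Hawking rigidity. Starting from the stationary Killing field $\T$, which is timelike in the asymptotic region, I would analyze the event horizon, which is invariant under the flow of $\T$ and hence a null hypersurface to which $\T$ is tangent. By the vacuum equations together with the area theorem and the regularity (non-degeneracy) hypothesis, the null generators of the horizon are shear-free and non-expanding, so the horizon is a Killing horizon. The aim is to manufacture a second Killing field. If $\T$ is itself tangent to the generators, the solution is static and one is reduced to the Israel--Bunting--Masood-ul-Alam case; otherwise I would produce a field $\K = \T + \Om_H\Z$ tangent to the generators, with $\Om_H$ the constant horizon angular velocity, and show that both $\K$ and the auxiliary rotational field $\Z$ are Killing. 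The decisive point, and where analyticity enters, is that one first establishes $\K$ as a Killing field in a neighborhood of the horizon by propagating the horizon symmetry off the horizon, and then extends it to the entire domain of outer communication. Real analyticity converts this local extension into a global one by analytic continuation, and after verifying that the orbits of $\Z$ close up, one obtains an $\RRR\times\SSS^1$ isometry group, so that the spacetime is stationary and axisymmetric with commuting Killing fields $[\T,\Z]=0$.

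The second step is Carter's reduction. With two commuting Killing fields in hand, I would first establish the circularity (orthogonal transitivity) of the metric, that is, the integrability of the $2$-planes orthogonal to the Killing orbits; in vacuum this follows from the structure of the Ricci tensor restricted to the orbit $2$-planes. Circularity permits the introduction of Weyl--Papapetrou coordinates and reduces the full Einstein vacuum system to an elliptic system on a half-plane. Encoding the norms and inner product of $\T$ and $\Z$ into the Ernst potentials, the reduced equations become a harmonic map, a $\sigma$-model, from the half-plane into the hyperbolic plane $\HHH^2\cong SL(2,\RRR)/SO(2)$, subject to prescribed behavior along the rotation axis, along the horizon, and at spatial infinity.

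The third step is Robinson uniqueness: any two solutions realizing the same boundary data must coincide. I would invoke Robinson's divergence identity, or, more conceptually, the fact that harmonic maps into the negatively curved target $\HHH^2$ obey a maximum-principle and convexity property, so that the distance function between two such solutions is subharmonic and vanishes on the boundary, hence vanishes identically. Since the Kerr family $\KK(a,m)$ with $0<a<m$ realizes every admissible set of boundary data, and the boundary data are fixed by the ADM mass and angular momentum, the given solution must coincide with a member of the Kerr family, which yields the asserted isometric diffeomorphism of the domains of outer communication.

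The main obstacle is squarely the first step. The construction of the rotational Killing field $\Z$ rests on propagating a symmetry that is known only near the horizon into the full bulk of the domain of outer communication, and it is precisely this global extension that is purchased by real analyticity through analytic continuation. In the smooth category there is no unique-continuation mechanism guaranteeing that a Killing field defined near the horizon persists throughout the ergoregion, where $\T$ becomes spacelike; overcoming this is the entire thrust of the survey, and the remaining two steps, while technically substantial, are by comparison structurally rigid once the second symmetry is secured.
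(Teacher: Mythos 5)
Your proposal follows essentially the same route as the paper's own four-step outline: Hawking's construction of a second Killing field on the horizon, extension into the domain of dependence by hyperbolic methods, extension into the domain of outer communication by analytic continuation (the only place analyticity is used, and correctly identified by you as the crux), and finally the Carter--Robinson axisymmetric uniqueness theorem. Your added detail on circularity, the Weyl--Papapetrou/Ernst reduction, and the harmonic-map formulation of Robinson's identity is consistent with what the paper delegates to the cited literature, so the proposal is correct and matches the paper's approach.
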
 
The theorem  relies       on the following  steps.
\begin{enumerate}
\item     Bases  on the observation that,    \textit{though a general
  stationary space  may seem  quite complicated, its behavior along the
   event horizon is remarkably  simple,}  Hawking has shown that
   in addition to the original, stationary, Killing field, which has to be tangent 
   to the event horizon, there must exist, infinitesimally along the horizon and tangent to its generators,  an additional Killing  vector-field.  
   \item   In the case of a non-degenerate    horizon       the Hawking         vectorfield  can be \textit{extended}, 
    by standard hyperbolic  PDE  techniques,   to  the   full   domain of dependence  of  the     horizon, see \cite{FrRaWa}.  
     \item    The   extension  problem is   however \textit{ill posed}   in the complement   of the domain of dependence, i.e. in the domain of outer communication  of the black hole.  To overcome this difficulty   Hawking assumes analyticity     and extends     the vectorfield by
  a    Cauchy-Kowalewski type argument.     In this step           the  field  equations   are no longer used; 
   the assumption  of analyticity, which in effect replaces  the Einstein equations by the Cauchy-Riemann  equations,    completely trivializes the problem.  
       \item   As a consequence of the previous step,   the   space-time under consideration is not just stationary
    but also axi-symmetric, situation     for which  Carter-Robinson's  uniqueness theorem \cite{Ca1}, \cite{Rob} applies.  It is interesting to remark that  this   final  step  does not require analyticity.   
\end{enumerate}
A    similar result holds    for the Einstein-Maxwell equations.    Namely   the only   real analytic, stationary,  regular asymptotically flat    solutions    of the Einstein-Maxwell equations   belong to the Kerr-Newman family. The reduction to  the axially symmetric case, due to        Hawking, follows precisely the same argument  as in the vacuum case. The  rigidity of  stationary, axially symmetric    solutions  is due to 
Mazur  \cite{Mazur}, see also \cite{Chrusc-Rev2}.

  The goal of this   article  is to       review       recent  results    which aim to    prove the conjecture 
   without   appealing to analyticity.   We focus our  discussion  to the   case of the  vacuum, but we will also mention
   some of the more interesting    extensions to the   case of the Einstein-Maxwell  equations.

      We start  with a    discussion, in section 2,     of      local extension results 
    for Killing vectorfields.   The setting is very general; we consider  a   Killing vectorfield   $\Z$   defined in a   domain
    O of a   Ricci flat, smooth,   pseudo-riemannian   manifold   $(\M, \g)$   and consider the question    of whether  
      $\Z$  admits  a   smooth  Killing extension 
  in  a full neighborhood   of a point $p$ in the boundary $\pr O$.   It turns out    that the  answer is affirmative if  
  the boundary verifies    what we  call  the \textit{strict null convexity  condition}. This condition,  concerning   the behavior of
  null   geodesics   tangent to $\pr O$ at $p$,   is  automatically satisfied       on  a  Riemannian manifold, but imposes 
   a serious restriction    if $\g$ is  Lorentzian.  If in addition  the manifold    
   admits a nowhere vanishing\footnote{It suffices  to      consider a  Killing vectorfield   defined     in a neighborhood of the point
    $p\in\pr O$.}   Killing vectorfield  $\T$,  which commutes  with $\Z$ in $O$,   we  show that  $\Z$ can be extended  past $p$
    under  a weaker assumption which we call $\T$-strict  null convexity.        This is a condition which  affects only 
    the null geodesics   at $T_p(\pr O)$  which are orthogonal to  $\T$. 
    
     It is important to stress  here  that    in the particular case      of   Kerr  space-time  $\KK(a, m)$,  
      domains of the form $r_* <   r < R$ with $r$     
        the   usual Boyer-Lindquist  coordinate and $r_*$ its value on the horizon   
         are  not, in general, strictly null-convex     at    $r=R$   but are   all  strictly $\T$-null convex,  where $\T$ is
           the  stationary Killing  field    of the Kerr solution. This   fact, first discovered in \cite{Ion-K2},\cite{Ion-K1},    plays a fundamental role      in the    global results 
           discussed  in section 3  of this  paper.  
           
            The null convexity condition   is a particular      instance of the more general  
    pseudoconvexity   condition\footnote{Which applies    to general, scalar linear partial differential operators} 
       of Calderon-H\"ormander.             It is  a necessary condition  to derive uniqueness  
        results     for ill posed problems\footnote{Problems  where  existence is   by no means guaranteed.}
         based on Carleman  type estimates.  It is not a priori clear  that   the same  necessary condition  is relevant to  our extension problem.      The main   goal of section 2  is to describe the   geometric ideas by which  
            the extension problem can  in fact   be turned into an  unique continuation problem.     The results  are 
              stated in theorems \ref{extthm0}, \ref{extthm1}.      Though they  are both very  general (they hold for arbitrary semi-Riemannian 
              manifolds!) they     rely in an essential way of   the Ricci flat condition.    We also review      related   local extension results,
               see  theorems  \ref{Hawking1}, \ref{Hawking2},               for the  Hawking  vectorfield                 in a neighborhood  of   a bifurcate horizon.

            In section 3  we     discuss     three   global     results, see   \cite{Ion-K2},    \cite{AlIoKl2}, \cite{AlIoKl3},  concerning  the black hole uniqueness  problem, 
            which assert,  under   somewhat different assumptions, that any  stationary solution      closed   to a non-extremal Kerr   must be  isometric   to  a  a non-extremal Kerr,    whose          parameters $a, M$ are determined  by  their    ADM mass   and angular momentum. 
                    They are all  based           on    specific    regularity, non-degeneracy  and asymptotic  flatness    assumptions  
                   discussed in subsection \ref{sect:stationary}.

              The  first two         results       are   based on the local  characterization     of the    Kerr  solution, due to Mars  \cite{Ma1},   by the vanishing
               of the  so called   \text{Mars-Simon tensor}  $\SS$.      In  theorem \ref{Thm:Gl1}   we   make an assumption   on the bifurcation
                sphere of the          horizon   which  implies  that $\SS$ vanishes   along the horizon.    We then derive a    wave equation  for
                $\SS$  and      show, by unique continuation results,   that $\SS$  must vanish everywhere.   In  theorem  \ref{Thm:Gl1} 
                 we assume instead that $\SS$   is     sufficiently small  and rely  on the extension results    discussed in section 2  to show
                  that the spacetime is  axially symmetric. The rigidity  result then follows   by    applying    the Carter-Robinson theorem. 
                  Both results assume the presence of a unique non-degenerate horizon.   This condition was  later  removed     by Wong and Yu in   \cite{Wong-Yu} by an ingenious argument based on the mountain pass lemma.
               
               The third rigidity   result differs substantially from the other two in that we only make a smallness  assumption on  
                the bifurcate sphere. More precisely we assume that    the stationary vectorfield is small on the bifurcate  sphere
                 and deduce  that the entire  domain of outer communication is  isometric  to that     of a Kerr solution with small angular momentum.
                 This   is  first  uniqueness    result, in the  framework of smooth, 
asymptotically flat,   stationary solutions,   which combines local considerations near the horizon, via Carleman estimates, 
with information obtained by    global  elliptic   estimates.  

    These  results illustrates  an important  geometric 
          obstruction          in   understanding the  full rigidity  problem,        the  possible  presence  of                    trapped null geodesics perpendicular to  
            the stationary Killing  vectorfield.  The key insight\footnote{ A related  fact     plays   a fundamental role  in      recent   linear stability  results     
             concerning solutions  of the scalar  wave  equation in  a non-extremal  Kerr, see   \cite{DRS}   and the references therein. } in all these results  is that such   null geodesics are    non-existent  in  any  non-extremal Kerr 
            and thus, roughly, in any small  perturbation  of  it.

In the last section we  formulate,   together with  S. Alexakis,   a   general conjecture  which illustrates    the importance of   trapped  null geodesics  
 perpendicular to $\T$ and thus the importance of developing  strategies    based on global considerations, not just    on unique continuation methods starting 
  from the horizon.    
   
    \noindent    {\bf Acknowledgement.} We would like to thank S. Alexakis for reading the    paper and  making very useful suggestions.

          \section{Local Rigidity Results} In this  section  we revisit the  extension problem for Killing vector-fields
in  smooth Ricci flat Lorentzian manifolds and its  relevance to the  black hole rigidity
 problem. In the most general
situation the problem can be stated as follows:

\textit{Assume $(\M,\g)$ is a given smooth pseudo-riemannian manifold, $O\subseteq\M$  is an open subset, and $Z$ is a  smooth Killing  vector-field in $O$.  Under what assumptions does $Z$ extend (uniquely) as a Killing vector-field in $\M$?}

A classical result\footnote{See \cite{No}. We rely here on the version of  the theorem given in \cite{Chrusc2} .} of Nomizu establishes such a unique  extension   provided that the metric is real analytic, $\M$ and $O$ are connected and $\M$ is simply connected.  The result has been used, see \cite{HE} and \cite{Chrusc2},    to reduce  the 
black hole rigidity problem, for  real analytic stationary solutions of the Einstein field equations, to the 
 simpler case  of axial symmetry  treated  by the Carter-Robinson theorem.  This reduction  
 has been    often   regarded as  decisive,  especially in the physics literature,    without  a clear understanding of the sweeping  simplification   power  of the analyticity assumption. 
  Indeed the remarkable thing about Nomizu's
 theorem,  to start with,  is the fact  the metric is not assumed to satisfy any specific equation. Moreover no assumptions are needed  about  the boundary of $O$
 in $\M$ and the result is  global with only minimal assumptions on the topology
 of $\M$ and $O$.
The result  is   clearly wrong in the case of smooth manifolds $(\M, \g)$ which are not real analytic. To be able to say anything meaningful we need to  both  restrict the metric $\g$   by realistic  equations and   make specific assumptions about the boundary of $O$.   Local and global assumptions   are also need to be carefully separated.

\medskip
 In this  section  we limit our  attention  to a purely local  description of the extension problem in the 
 smooth case. 
We  assume that $(\M, \g)$ is a non-degenerate Ricci flat, pseudo-riemannian metric  i.e.
\begin{equation}\label{Ricci}
\mathrm{\bf Ric}(\g)=0.
\end{equation}

We define  the following crucial concept\footnote{In our previous papers we  have   
   used  the broader   terminology   of   \textit{pseudo-convexity}  condition, which applies  to a  given scalar  linear PDE.   }.

 \begin{definition}\label{def:null-convex}
 A domain $O\subset \M$ is said to be   strictly     null-convex at a boundary point  $p\in \pr O$
 if there exists a    small  neighborhood $U$  of $p$   and  a  smooth    (defining)     function
   $h :U\to\RRR$   such that $O\cap U=\{x\in U: h(x)<0\}$, non degenerate at $p$   (i.e. $dh(p)\neq 0$)   verifying  the  following  
    null-convexity  condition at  $p$, for all null vectors  $X\in T_p(\M)$  tangent to $\pr O$  ( i.e. $X(h)=0$),
 
\begin{equation}\label{qual1.1}
\D^2h (X,X)(p)<0
\end{equation}

\end{definition}
It is easy to see that   \eqref{qual1.1}, does not depend on the choice of the defining function $h$. The strict  null-convexity  condition is automatically satisfied  if the metric $\g$
  is Riemannian. It is also satisfied   for  Lorentzian  metrics $\g$
    if  $\pr O$ is space-like at $p$, but it imposes serious restrictions for time-like  hypersurfaces.    It  clearly  fails
 if $\pr O$ is null in a neighborhood of $p$.
  Indeed in that case  we can choose  the defining function
 $h$ to be optical,  i.e.,
 \begin{equation}\label{fail}
\D^\a h \D_\a  h =0\quad
\end{equation}
 at all  points  of $\pr O$  in a neighborhood of  $p$, and thus, choosing $X^\a=\D^\a  h $, we  have,  $$X^\a X^\b  \D_\a \D_\b h =\frac 1 2 X (\D^\a  h \D_\a h)=0.$$
 One can also  show that  unique  continuation fails    in this case.

Under the assumption  that  $\M$ contains a   Killing vectorfield  $\T$  we also define  the following variant of the null convexity condition.
 \begin{definition}\label{def:null-convex-T}
 The  domain $O\subset \M$ is said to be   strict  $\T$- null-convex at a boundary point  $p\in \pr O$
 if the  defying function $h $  at $p$ is $\T$ invariant  and    verifies   the   convexity  condition \eqref{qual1.1}
  for all null vectors  $X\in T_p(O)$   which are  orthogonal to $\T$.

\end{definition}
The following general   extension principle  was proved in \cite{Ion-K3}. A previous, related,       version  appeared in \cite{AlIoKl}.

 \begin{theorem}\label{extthm0}
Assume that $(\M, \g) $ is a smooth $d$-dimensional  Ricci flat,  pseudo-riemannian manifold  and  $O\subseteq \M$ is a strongly null-convex domain at a point $p\in\pr O$. We assume that $Z$   is a  Killing vectorfield  in $O$. Then $Z$ extends as a Killing to a neighborhood of the point $p$ in $\M$.    
 \end{theorem}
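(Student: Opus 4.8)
The plan is to convert the extension problem into a unique-continuation problem across $\pr O$, to which a Carleman estimate made possible by the null-convexity condition \eqref{qual1.1} can be applied. The starting point is the closed first-order system obeyed by any Killing field on a Ricci-flat manifold. Writing $\om_{\a\b}=\D_\a Z_\b$, which is antisymmetric (its symmetric part being the Killing equation), one has
\begin{equation}
\D_\a Z_\b=\om_{\a\b},\qquad \D_\mu\om_{\a\b}=R_{\a\b\mu\nu}Z^\nu .
\end{equation}
Differentiating once more and using \eqref{Ricci} together with the second Bianchi identity, one sees that $Z$ and $\om$ satisfy covariant wave equations; in particular $\Box Z_\a=0$, where $\Box=\g^{\a\b}\D_\a\D_\b$. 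This closed structure is exactly what makes the analytic (Nomizu) extension work, and it will also drive the smooth argument.

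First I would construct a candidate extension. Fix a small neighborhood $U$ of $p$ and a spacelike hypersurface $\Sigma\subset U$ through $p$ with $\Sigma\cap O\neq\emptyset$, and let $\widetilde{Z}$ solve the Cauchy problem $\Box\widetilde{Z}=0$ on $U$, with data for $(\widetilde{Z},\D\widetilde{Z})$ on $\Sigma$ prescribed to agree with $(Z,\D Z)$ on $\Sigma\cap O$ and extended smoothly. Since $Z$ itself solves $\Box Z=0$ with the same data, domain-of-dependence uniqueness for $\Box$ gives, after shrinking $U$, that $\widetilde{Z}=Z$ on $O\cap U$; thus $\widetilde{Z}$ is a genuine smooth extension. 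Now set $\pi_{\a\b}=\D_\a\widetilde{Z}_\b+\D_\b\widetilde{Z}_\a$, the deformation tensor, which vanishes on $O\cap U$. The crucial point is that, because $\Box\widetilde{Z}=0$ and $\mathrm{\bf Ric}(\g)=0$, the tensor $\pi$ — together with a finite collection of auxiliary unknowns built from $\pi$, $\D\pi$ and the curvature — satisfies a \emph{closed, homogeneous} system of covariant wave equations, schematically
\begin{equation}
|\Box\Psi|\les |\Psi|+|\D\Psi|,\qquad \Psi=(\pi,\ldots),
\end{equation}
with smooth coefficients and no source. Ricci-flatness is essential here: it is what lets one re-express the second derivatives of $\widetilde{Z}$ arising in the commutators through $\pi$ and curvature, so that no uncontrolled inhomogeneity survives. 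By construction $\Psi=0$ on $O\cap U$.

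The extension problem is thereby reduced to the following unique-continuation statement: a solution $\Psi$ of $|\Box\Psi|\les|\Psi|+|\D\Psi|$ that vanishes on the side $O\cap U=\{h<0\}$ of the hypersurface $\pr O$ must vanish in a full neighborhood of $p$. This holds exactly when $\pr O$ is pseudoconvex at $p$ with respect to $\Box$; since the characteristic directions of $\Box$ are the null ones, the relevant condition is precisely $\D^2 h(X,X)(p)<0$ for all null $X\in T_p(\M)$ tangent to $\pr O$, i.e. the strict null-convexity condition \eqref{qual1.1}. One then runs a Carleman estimate with weight built from $h$, applied componentwise to $\Psi$ (the first-order coupling being absorbed on the left), and concludes $\Psi\equiv 0$, hence $\pi\equiv 0$, near $p$; that is, $\widetilde{Z}$ is Killing in a neighborhood of $p$. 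I expect the main obstacle to lie in the closure step of the previous paragraph: one must choose the collection of unknowns so that every second-order derivative of $\widetilde{Z}$ produced by commuting $\Box$ with $\D$ is absorbed into the system, and verify, using only $\mathrm{\bf Ric}(\g)=0$, that no genuine source remains. The Carleman estimate itself is, once pseudoconvexity is in hand, standard; but the geometric identification of null-convexity with the correct pseudoconvexity at the (timelike) boundary point $p$, together with its interplay with the weight, must be checked with care.
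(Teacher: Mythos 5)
There is a genuine gap at the very first step, the construction of the candidate extension. You solve a Cauchy problem for $\square\widetilde{Z}=0$ with data on a spacelike hypersurface $\Sigma$ through $p$ and invoke domain-of-dependence uniqueness to conclude $\widetilde{Z}=Z$ on $O\cap U$. But uniqueness only gives $\widetilde{Z}=Z$ on the domain of dependence of $\Sigma\cap O$, and when $\pr O$ is timelike near $p$ (the only case where the theorem has content) this set does \emph{not} contain a full one-sided neighborhood $\{h<0\}\cap U$ of $p$: points of $O$ close to $p$ but off $\Sigma$ have causal past (or future) meeting $\Sigma$ outside $O$, where your data were prescribed arbitrarily. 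Consequently you only know $\piZ=0$ on a lens-shaped subset of $O\cap U$, whereas the Carleman/unique-continuation argument requires vanishing on all of $\{h<0\}$ near $p$ as its hypothesis. The paper avoids this by extending $Z$ through the Jacobi-type \emph{transport} equation $\D_L\D_LZ=\R(L,Z)L$ along a geodesic congruence transversal to $\pr O$: since a Killing field restricted to a geodesic satisfies exactly this ODE, uniqueness for ODEs gives agreement with $Z$ on the whole of $O\cap U$, not just on a causal lens.

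The second problem is that the closure step, which you correctly identify as the main obstacle, is asserted rather than carried out, and the paper's resolution is structurally different from the ``closed homogeneous system of covariant wave equations'' you posit. The source of the wave equation for $\Lie_Z\R$ involves two derivatives of $\piZ$, so a pure wave system for $(\piZ,\ldots)$ does not close without further ideas. The paper instead derives a \emph{mixed} system: transport equations $\D_L\B=\MM(\W,\B,\P)$ and $\D_L\P=\MM(\W,\B,\P)$ along the congruence, coupled to a single wave equation $\square\W=\MM(\W,\B,\D\B,\P,\D\P)$ for the modified Lie derivative $\W=\Lieh_{Z,\om}\R$. Making the transport equations work requires precisely the structural identities $\piZ_{\a\b}L^\b=0$ (a consequence of the Jacobi extension, unavailable for your wave-equation extension) and $\P_{\a\b\mu}L^\mu=0$, the latter forcing the introduction of the auxiliary $2$-form $\om$ solving \eqref{maincor}; the modification $\Lieh_{Z,\om}$ is needed because $\Lie_Z\R$ fails to remain trace-free. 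The unique continuation statement is then Proposition \ref{prop:uniq1} for this ODE--wave system, which needs the Carleman estimate \eqref{ca1} for $\square_\g$ (where strict null convexity enters) together with a separate, much easier Carleman estimate for the operator $\D_L$. Your final paragraph correctly describes the role of null convexity in the wave-equation Carleman estimate, but without the transport structure and the auxiliary unknowns $\om$, $\B$, $\P$ the reduction to that estimate is not achieved.
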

 Using  similar  techniques one can also prove the following.
\begin{theorem}
\label{extthm1}
If $(\M, \g) $ admits a (nowhere vanishing)  Killing vectorfield   $\T$and   $Z$    is a Killing vectorfield   in $\O$  which commutes   with $\T$,    then the  same extension result  holds true
 if we replace    strict   null convexity by the weaker 
   strict $\T$-null convexity condition. Moreover the extended $Z$ continues  to   commute with $\T$.

\end{theorem}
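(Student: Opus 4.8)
The plan is to mimic the proof of Theorem \ref{extthm0}, reducing the extension to a unique continuation problem, and then to use the symmetry $\T$ to relax the pseudoconvexity requirement. First I would construct a candidate extension $\tZ$ by solving the covariant wave equation $\Box_\g \tZ = 0$ across $\pr O$ near $p$, prescribing Cauchy data inherited from $Z$ on a hypersurface through $p$ contained in $\overline{O}$. Since a Killing field in a Ricci-flat manifold satisfies $\Box_\g Z = 0$ (because $\Box_\g Z_\a = -\mathbf{Ric}_{\a\b}Z^\b$), uniqueness for the wave equation gives $\tZ = Z$ throughout $O$ near $p$, so the deformation tensor $\pi := \LL_{\tZ}\,\g$ vanishes in $O$. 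The goal then becomes to propagate the vanishing of $\pi$ from $O$ to a full neighborhood of $p$.

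The decisive algebraic fact, exactly as for Theorem \ref{extthm0}, is that in a Ricci-flat background the pair $(\pi, \WW)$, with $\WW := \LL_{\tZ}\,\R$ the Lie derivative of the Riemann curvature, satisfies a closed, homogeneous system of covariant wave equations,
\begin{equation*}
\Box_\g \pi = \MM_1(\pi,\D\pi,\WW), \qquad \Box_\g \WW = \MM_2(\pi,\D\pi,\WW,\D\WW),
\end{equation*}
with smooth coefficients and no curvature source term (here the Ricci-flat hypothesis is used a second time, to close the system). Treating $\WW$ as an independent unknown absorbs the second derivatives of $\pi$ produced by the variation-of-curvature identity. Since $\pi$ and $\WW$ vanish on the open set $O$, a Carleman-estimate-based unique continuation theorem forces $\pi \equiv 0$ near $p$, so that $\tZ$ is Killing there. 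The Carleman weight is built from the defining function $h$, and the single term one must dominate to close the estimate is $\D^2 h(X,X)$ along null vectors $X$ tangent to the level sets of $h$; its negativity is precisely strict null convexity \eqref{qual1.1}.

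To obtain the $\T$-version I would keep the same scheme but work equivariantly. Since $h$ is $\T$-invariant and $\T$ commutes with $Z$ in $O$, I would choose $\T$-invariant Cauchy data for $\tZ$; because $\T$ is Killing it commutes with $\Box_\g$, so $\T$-invariance is propagated, and in fact $[\T,\tZ] = 0$ holds near $p$ (the commutator solves a homogeneous wave equation and vanishes in $O$, which also yields the final assertion that the extension commutes with $\T$). Consequently $\pi$ and $\WW$ are $\T$-invariant. The gain is a $\T$-adapted Carleman inequality: applied to $\T$-invariant quantities, every derivative of the unknown along $\T$ drops out of the conjugated operator $e^{\lambda h}\,\Box_\g\,e^{-\lambda h}$, so the $\T$-component of a null vector $X$ no longer contributes to the leading positive term. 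The convexity one must verify is thereby reduced to \eqref{qual1.1} for null $X$ \emph{orthogonal} to $\T$, i.e. exactly the hypothesis of Definition \ref{def:null-convex-T}.

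The main obstacle I expect is the $\T$-adapted Carleman estimate itself. Two points require care: showing that the $\T$-direction genuinely decouples from the principal symbol of the conjugated operator, so that only $\T$-orthogonal null directions enter the pseudoconvexity computation; and doing so without assuming $\T$ timelike, since in the motivating application $\T$ turns spacelike in the ergoregion and the formal quotient by its flow changes signature. Extracting a positive lower bound from $\T$-null convexity in this indefinite reduced geometry is the technical heart of the argument; by comparison, the construction of $\tZ$, the derivation of the wave system, and the propagation of $\T$-invariance are routine once Theorem \ref{extthm0} is available.
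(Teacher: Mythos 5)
Your overall architecture (extend $Z$ across $\pr O$, derive a closed system for the deformation tensor and the Lie derivative of the curvature, conclude by Carleman-based unique continuation, and exploit $\T$-invariance to weaken pseudoconvexity to null directions orthogonal to $\T$) matches the paper's outline, and your last paragraph correctly locates where Definition \ref{def:null-convex-T} enters. But two of your concrete steps contain genuine gaps. First, the construction of the candidate extension: a hypersurface through $p$ contained in $\overline{O}$ is necessarily tangent to $\pr O$ at $p$ (since $h\le 0$ on it and $h(p)=0$, $p$ is a critical point of $h$ restricted to it), so for the Cauchy problem for $\square_\g\tZ=0$ to be well posed and to fill a full neighborhood of $p$, that hypersurface --- hence $T_p(\pr O)$ --- would have to be spacelike. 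That is exactly the case in which strict null convexity is automatic and the theorem has no content; in the relevant case ($\pr O$ timelike, or $\g$ of general signature, where $\square_\g$ is not even hyperbolic) your construction produces nothing. The paper instead extends $Z$ by the Jacobi-type ODE $\D_L\D_L Z=\R(L,Z)L$ along a geodesic congruence transversal to $\pr O$ (chosen, for Theorem \ref{extthm1}, so that $L$ commutes with $\T$); this is always solvable and, crucially, yields the algebraic identity $\piZ_{\a\b}L^\b=0$.

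Second, and more seriously, the closed homogeneous wave system you posit for $(\pi,\WW)$ does not exist. The Ricci-type identity \eqref{nice1} only relates the \emph{antisymmetrized} second derivatives of $\pi$ to $\Lie_Z\R$; the full Hessian of $\pi$, hence $\square_\g\pi$, is not a smooth multiple of $(\pi,\D\pi,\WW)$. Moreover the wave equation for the (suitably modified) Lie derivative of the curvature unavoidably carries source terms with \emph{two} derivatives of $\pi$ (through $\D\P$), which the Carleman estimate \eqref{ca1} --- controlling only one derivative of each unknown --- cannot absorb if $\pi$ itself merely solves a wave equation. This loss of derivatives is precisely what the paper's mixed structure is designed to defeat: thanks to $\piZ_{\a\b}L^\b=0$ and the auxiliary $2$-form $\om$ solving \eqref{maincor}, the quantities $\B=\pi+\om$ and $\P$ satisfy first-order \emph{transport} equations along $L$ (Proposition \ref{pro5}), which can be differentiated at no cost, while only $\W=\Lieh_{Z,\om}\R$ satisfies a wave equation; unique continuation is then proved for the coupled ODE--wave system \eqref{gen-uniqueness}, using a second, elementary Carleman estimate for the transport part. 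Without this transport structure your argument cannot close, independently of how carefully the $\T$-adapted Carleman estimate is set up.
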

The proof of both theorems  is  based on the following ideas\footnote{  In \cite{AlIoKl}     similar results were proved    using  a frame   dependent approach.  }.
\begin{enumerate}
\item  Extend the vectorfield  $Z$  in a full  neighborhood of the point $p$ by solving a Jacobi type equation
along    a family of  congruent  geodesics     transversal to $\pr O$. In the case of theorem \eqref{extthm0}
one needs   to make sure    that  extended  $Z$  still commutes  with $\T$.  This  can easily be done   by  
 choosing a congruence     left invariant  by $t$, i.e. such that  the     generator $L$ of  the   congruence    commutes with $\T$. t
\item  Derive   a closed  system of      covariant wave equations for   a modified version   of  the Lie  derivative
of  the curvature tensor $R$, denoted $W$,    coupled with transport equations  for
   the  deformation tensor  $\piZ$  of the extended $Z$.
\item Use  a  unique continuation argument   to show  that both     $W$ and     $\piZ$  have to vanish in a full neighborhood
of $p$.  To implement the continuation criterion  one needs      the  strict   null convexity  conditions     in  the definitions 
\ref{def:null-convex} and \ref{def:null-convex-T}.
\end{enumerate}

We start with a few  general results:
   \begin{lemma}\label{le:commute}
For  arbitrary $k$-covariant  tensor-field  $V$ and vector-field  $X$ we have,
\begin{equation}\label{commute}
\D_\b(\Lie _X V_{\a_1\ldots\a_k})-\Lie_X(\D_\b V_{\a_1\ldots\a_k})=
\sum_{j=1}^k\GaX_{\a_j\b \rho}V_{\a_1\ldots} \,^\rho\,_ {\ldots  \a_k}
\end{equation}
where $\piX=\Lie_X\g$  is the deformation tensor of $X$ and,
\begin{equation*}
\GaX_{\a\b\mu}:=\frac 1 2 (\D_\a\piX_{\b\mu}+\D_\b\piX_{\a\mu}-\D_\mu\piX_{\a\b}).
\end{equation*}
\end{lemma}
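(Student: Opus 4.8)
The plan is to read the left-hand side as a commutator operator and exploit its algebraic structure before touching indices. Write $P_X V := \D(\Lie_X V) - \Lie_X(\D V)$, where $\D V$ is regarded as the covariant tensor whose first slot carries the derivative index $\b$, so that $(P_X V)_{\b\a_1\ldots\a_k}$ is exactly the quantity in \eqref{commute}. First I would check that $P_X$ is $C^\infty(\M)$-linear in $V$: replacing $V$ by $fV$ and using the Leibniz rules $\Lie_X(fV) = (Xf)V + f\,\Lie_X V$ and $\D(fV) = df\otimes V + f\,\D V$, together with $\Lie_X(df) = d(Xf)$, every term carrying a derivative of $f$ cancels, leaving $P_X(fV) = f\,P_X V$. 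Hence $P_X$ is \emph{tensorial} in $V$ and, by the same Leibniz bookkeeping applied to tensor products, is a derivation acting slot by slot. Since $P_X$ visibly annihilates functions ($\D_\b(Xf) - (\Lie_X df)_\b = 0$), it suffices to prove the identity on a single covector $\omega_\a$; the general $k$-covariant case then follows by summing the one-slot contribution over $j=1,\dots,k$.

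For the covector computation I would use the torsion-free expression $\Lie_X\omega_\a = X^\mu\D_\mu\omega_\a + \omega_\mu\D_\a X^\mu$. Applying $\D_\b$ to this and, separately, computing $\Lie_X(\D_\b\omega_\a)$ by treating $\D\omega$ as a $2$-covariant tensor, I subtract the two expressions. The terms of schematic form $(\D\omega)(\D X)$ cancel in pairs and the second covariant derivatives of $\omega$ assemble into a commutator, leaving
\[
(P_X\omega)_{\b\a} = X^\mu[\D_\b,\D_\mu]\omega_\a + \omega_\mu\,\D_\b\D_\a X^\mu,
\]
a curvature term plus a pure second-derivative-of-$X$ term.

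It remains to match this against $\GaX_{\a\b\rho}\omega^\rho$. Expanding $\piX_{\mu\nu} = \D_\mu X_\nu + \D_\nu X_\mu$ in the definition of $\GaX$, the symmetric double derivatives reproduce $\omega_\mu\D_\b\D_\a X^\mu$ up to the commutator $[\D_\a,\D_\b]X$, while the cross terms contribute the two further commutators $[\D_\a,\D_\mu]X$ and $[\D_\b,\D_\mu]X$. The crux — and the step I expect to be the main obstacle — is to show that the entire collection of curvature terms, namely the single $\omega$-commutator $X^\mu[\D_\b,\D_\mu]\omega_\a$ against the three commutators of $X$ produced in the matching, cancels identically. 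I would discharge this by writing each commutator through the Ricci identity $[\D_\b,\D_\mu]\omega_\a = R_{\b\mu\a}{}^\rho\omega_\rho$ (and likewise for $X_\a$) and then invoking the pair-exchange symmetry together with the first Bianchi identity of $R$; a short rearrangement collapses the combination to zero. It is worth stressing that this cancellation is purely algebraic in the curvature and uses \emph{no} field equation — in particular the Ricci-flat hypothesis \eqref{Ricci} is not needed here — so the identity holds on an arbitrary pseudo-riemannian manifold, exactly as stated.
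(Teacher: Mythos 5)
Your proof is correct. The paper states Lemma \ref{le:commute} without proof (it is quoted as a standard preliminary identity), so there is no argument to compare against; your route --- establishing that $P_X V=\D(\Lie_X V)-\Lie_X(\D V)$ is $C^\infty$-linear and a slot-by-slot derivation annihilating functions, reducing to the one-covector case, and then checking that the residual curvature terms (one Ricci commutator on $\omega$ against the three commutators of $X$ coming from expanding $\GaX$) cancel via the pair symmetry and the first Bianchi identity --- is the standard derivation and each step checks out, including the observation that no use of $\mathrm{\bf Ric}(\g)=0$ is needed.
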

\begin{lemma}  \label{le: nice}  
Let $X$ be a vectorfield with deformation tensor $\piX$ and    define,
\beaa
\PX_{\a\b\mu}&:=&(1/2)(\D_\a\piX_{\b\mu}-\D_\b\piX_{\a\mu}).
\eeaa
Then,

\begin{equation}\label{nice1}
 \D_\nu \PX_{\a\b\mu}-\D_\mu\PX_{\a\b\nu}=(\Lie_X\R)_{\a\b\mu\nu}-(1/2)\piX_\a\,^\rho\R_{\rho\b\mu\nu}-(1/2)\piX_\b\,^\rho\R_{\a\rho\mu\nu}
\end{equation}
where $\R$ is  the Riemann curvature tensor of the metric $\g$.
\end{lemma}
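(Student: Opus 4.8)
The plan is to reduce the identity to the standard linearization formula for the Riemann tensor, with the role of the variation of the connection played by the tensor $\GaX$ appearing in Lemma~\ref{le:commute}. Indeed, Lemma~\ref{le:commute} expresses the commutator $\D_\b\Lie_X-\Lie_X\D_\b$ as a zeroth-order contraction with $\GaX$, which is precisely the statement that the Lie derivative of the Levi-Civita connection is the tensor with components $\GaX_{\a\b}{}^\mu=\g^{\mu\nu}\GaX_{\a\b\nu}$. Since the $(1,3)$ curvature $\R^\rho{}_{\b\mu\nu}$ is a universal expression in the connection and its first derivatives, applying $\Lie_X$ as a derivation and using $\D\g=0$ would yield the linearized-curvature identity
\[ (\Lie_X\R)^\rho{}_{\b\mu\nu}=\D_\mu\GaX_{\nu\b}{}^\rho-\D_\nu\GaX_{\mu\b}{}^\rho . \]
Equivalently, one may obtain this by applying Lemma~\ref{le:commute} to the Ricci identity and cancelling the lower-order terms; both routes give the same result.

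The next step is to lower the index $\rho$, and here lies the only genuinely delicate point: $\Lie_X$ does \emph{not} commute with raising or lowering indices, because $\Lie_X\g=\piX\neq 0$. Writing $\R_{\a\b\mu\nu}=\g_{\a\rho}\R^\rho{}_{\b\mu\nu}$ and using that $\Lie_X$ is a derivation gives
\[ (\Lie_X\R)_{\a\b\mu\nu}=\g_{\a\rho}(\Lie_X\R)^\rho{}_{\b\mu\nu}+\piX_\a{}^\rho\R_{\rho\b\mu\nu}, \]
so that, multiplying the previous display by $\g_{\a\rho}$ and again invoking $\D\g=0$, I obtain the intermediate identity
\[ (\Lie_X\R)_{\a\b\mu\nu}-\piX_\a{}^\rho\R_{\rho\b\mu\nu}=\D_\mu\GaX_{\nu\b\a}-\D_\nu\GaX_{\mu\b\a}. \]
This already exhibits the mechanism producing the $\piX\c\R$ correction terms in the statement.

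Finally I would antisymmetrize this intermediate identity in the pair $(\a,\b)$. Using the first-pair antisymmetry $\R_{\a\b\mu\nu}=-\R_{\b\a\mu\nu}$, which is inherited by $\Lie_X\R$, the swapped identity reads $-(\Lie_X\R)_{\a\b\mu\nu}+\piX_\b{}^\rho\R_{\a\rho\mu\nu}=\D_\mu\GaX_{\nu\a\b}-\D_\nu\GaX_{\mu\a\b}$; subtracting it from the intermediate identity and dividing by two leaves on the left exactly the symmetrized combination $(\Lie_X\R)_{\a\b\mu\nu}-\f12\piX_\a{}^\rho\R_{\rho\b\mu\nu}-\f12\piX_\b{}^\rho\R_{\a\rho\mu\nu}$. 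It then remains to simplify the right-hand side, and a short computation using only the symmetry of $\piX$ shows $\GaX_{\nu\b\a}-\GaX_{\nu\a\b}=\D_\b\piX_{\nu\a}-\D_\a\piX_{\nu\b}=-2\PX_{\a\b\nu}$, with the analogous identity for the index $\mu$. Substituting these produces $\D_\nu\PX_{\a\b\mu}-\D_\mu\PX_{\a\b\nu}$, which is the claimed right-hand side.

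The computation is entirely algebraic once the linearized-curvature identity is in hand, and I expect no analytic difficulty. The main obstacle is purely in the bookkeeping: first, the index-lowering step, where the failure of $\Lie_X$ to commute with the metric is exactly what generates the two curvature correction terms; and second, the final antisymmetrization, where one must track the ordering of the last two slots of $\GaX$ carefully in order to recover $\PX$ rather than $\GaX$. Maintaining a single consistent convention for $\R_{\a\b\mu\nu}$ throughout is what makes the signs fall into place.
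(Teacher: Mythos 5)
Your proof is correct, and since the paper states Lemma \ref{le: nice} without proof there is nothing to diverge from: your route via the linearized connection $\Lie_X\Ga=\GaX$ is the standard one and is exactly consistent with the paper's Lemma \ref{le:commute}. I checked the algebra: lowering the index produces the single correction $\piX_\a{}^\rho\R_{\rho\b\mu\nu}$, antisymmetrizing in $(\a,\b)$ symmetrizes it into the two half-weighted terms, and $\GaX_{\nu\b\a}-\GaX_{\nu\a\b}=-2\PX_{\a\b\nu}$ converts the right-hand side into $\D_\nu\PX_{\a\b\mu}-\D_\mu\PX_{\a\b\nu}$, as claimed. The only point to pin down explicitly is the sign in the linearized-curvature formula $(\Lie_X\R)^\rho{}_{\b\mu\nu}=\D_\mu\GaX_{\nu\b}{}^\rho-\D_\nu\GaX_{\mu\b}{}^\rho$, which is fixed by the curvature convention the paper adopts in \eqref{eq:Ricci}; you flag this yourself, and with that convention the signs close up.
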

\noindent Note that,
\bea
\GaX_{\a\mu \b }&=&\PX_{\a\b\mu}+ \frac 1 2 \D_\mu  \piX_{\a\b}
\eea

Recall that a Weyl field on $\M$ is a a  four  covariant   tensor, trace-less  tensor,   verifying all the symmetries of the Riemann curvature tensor.
Note that the  Lie derivative  of a Weyl    field may fail to   to  have vanishing  trace.  The leads us  to the  following modified  definition.  
\begin{definition}
Given a Weyl field $W$,      $X$  an arbitrary  vectorfield and $\om_{\a\b}$  an arbitrary $2$-form on $\M$, we define,
\beaa
\Lieh_{X, \om}  W:= \Lie_X W-\frac 12(\pi+\om) \odot  W
\eeaa
where, for any  $2$-tensor  $B$,  $B\odot W$ denotes the tensor, 
 \beaa
 ( B\odot W)_{\a\b\ga\de}:=B_\a\,^\la W_{\la\b\ga\de}+B_\b\,^\la W_{\a\la\ga\de}+
 B_\ga\,^\la W_{\a\b\la\de}+B_\de\,^\la W_{\a\b\ga\la}.
  \eeaa
\end{definition}
   \begin{lemma}
   The tensor $\Lieh_{X, \om}  W$  defined above is a Weyl field.
   \end{lemma}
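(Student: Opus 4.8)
The plan is to check separately the two defining properties of a Weyl field: the algebraic symmetries of the Riemann curvature tensor, and trace-freeness. The symmetries are the routine part. Writing the Lie derivative of a covariant tensor in the torsion-free form $(\Lie_X T)_{\a_1\ldots\a_k}=X^\rho\D_\rho T_{\a_1\ldots\a_k}+\sum_j(\D_{\a_j}X^\rho)\,T_{\ldots\rho\ldots}$, one sees that $\Lie_X$ commutes with every permutation of the covariant slots. Hence $\Lie_X W$ inherits from $W$ the antisymmetries in the pairs $(\a\b)$ and $(\ga\de)$, the pair-interchange symmetry, and the first Bianchi identity, all of which are linear relations among index permutations. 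What $\Lie_X$ does \emph{not} preserve is trace-freeness, since tracing involves the metric and $\Lie_X\g=\pi\neq 0$; this is precisely the defect the correction term is designed to repair.

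Next I would verify that $B\odot W$ satisfies all the algebraic Riemann symmetries for an \emph{arbitrary} $2$-tensor $B$ (so in particular for $B=\pi+\om$). This is a direct termwise computation: the antisymmetry of $B\odot W$ in $(\a\b)$ and in $(\ga\de)$ follows from the corresponding antisymmetries of $W$, the pair-interchange symmetry follows from that of $W$ after relabelling the four terms, and the cyclic sum defining the first Bianchi identity for $B\odot W$ collapses, coefficient of $B$ by coefficient of $B$, onto the first Bianchi identity for $W$. Together with the previous paragraph this shows that $\Lieh_{X,\om}W$ has all the algebraic symmetries of the Riemann tensor.

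The heart of the matter is a single trace computation. Applying $\Lie_X$ to the identity $\g^{\a\ga}W_{\a\b\ga\de}=0$ and using $\Lie_X\g^{\a\ga}=-\pi^{\a\ga}$ gives
\[
\g^{\a\ga}(\Lie_X W)_{\a\b\ga\de}=\pi^{\a\ga}W_{\a\b\ga\de}.
\]
For the correction term, contracting over $(\a\ga)$ annihilates two of the four terms of $B\odot W$ by the trace-freeness of $W$, leaving
\[
\g^{\a\ga}(B\odot W)_{\a\b\ga\de}=B^{\ga\la}\big(W_{\la\b\ga\de}+W_{\ga\b\la\de}\big).
\]
The crucial observation is that the parenthesis is \emph{symmetric} in $(\ga\la)$, so the antisymmetric part $\om$ of $B$ contributes nothing and only $\pi$ survives; using the symmetry of $\pi$ the two surviving terms coincide and sum to $2\pi^{\a\ga}W_{\a\b\ga\de}$. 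Taking $B=\pi+\om$ and combining,
\[
\g^{\a\ga}(\Lieh_{X,\om}W)_{\a\b\ga\de}=\pi^{\a\ga}W_{\a\b\ga\de}-\frac12\cdot 2\,\pi^{\a\ga}W_{\a\b\ga\de}=0.
\]

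Finally, a tensor already endowed with all the algebraic symmetries of the Riemann tensor is trace-free in every pair of slots as soon as one of its traces vanishes, so $\Lieh_{X,\om}W$ is a Weyl field. The one genuinely substantive point — and the reason the definition may incorporate an \emph{arbitrary} $2$-form $\om$ — is the observation that $\om$ drops out of the trace; the symmetry verifications and the commutator identity $\Lie_X\g^{\a\ga}=-\pi^{\a\ga}$ are mechanical once that is isolated.
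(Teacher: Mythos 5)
Your proof is correct: the paper states this lemma without proof, and your argument is the natural verification one would supply. In particular you correctly isolate the two nontrivial points --- that $\g^{\a\ga}(\Lie_X W)_{\a\b\ga\de}=\pi^{\a\ga}W_{\a\b\ga\de}$ via $\Lie_X\g^{\a\ga}=-\pi^{\a\ga}$, and that the surviving part of $\g^{\a\ga}(B\odot W)_{\a\b\ga\de}$ is symmetric in the contracted pair so that the $2$-form $\om$ drops out and the factor $\frac12$ exactly cancels the trace defect --- and the remaining symmetry checks are, as you say, mechanical.
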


 \subsection{Proof of Theorem  \eqref{extthm0}}
 To prove the theorems we   first extend $Z$    past $p$ according to the  following  equation
\begin{equation}
\D_L \D_L Z=\R(L, Z) L\label{eq:extZ},
\end{equation}
Let $\piZ=\Lie_Z \g$ be the deformation tensor of $Z$.    To show that $\piZ \equiv 0$  in a neighborhood  $U$  of $p$ 
we need to prove that $\piZ,  \PZ$,  $\Lie_Z\R$   all vanishes,  simultaneously     in $ U$.  The idea is  to    try
 to derive transport equations  for $\piZ$  and $\PZ$,  along the   geodesics generated by $L$,  coupled  to   
  a    covariant   wave equation  for      $\Lie_Z\R$. To  do this we 
 will need  however to    redefine slightly  the main quantities.  The crucial    ingredient 
 which makes  possible    to derive useful  transport  equations is the following.
 \begin{lemma}
 If  $Z$ is extended according to \eqref{eq:extZ}      then  the deformation tensor $\pi:=\piZ$ of $Z$  verifies
 \bea
  \pi_{\a\b} L^\b=0.
 \eea
 \end{lemma}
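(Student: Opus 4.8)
The plan is to prove that the one-form $P_\a:=\pi_{\a\b}L^\b$ vanishes by realizing it as the solution of a \emph{homogeneous} linear transport equation along the geodesic congruence generated by $L$, with vanishing initial data. Recall that $L$ is geodesic, $\D_L L=0$, and that along $\pr O$ the extended field $Z$ coincides with the given Killing field, so that $\pi\equiv 0$, and a fortiori $P\equiv 0$, on $\pr O$. Because $\D_L L=0$, the contraction with $L^\b$ commutes with $\D_L$ (the extra term $\pi_{\a\b}\D_L L^\b$ vanishes), and hence $\D_L P_\a=L^\mu L^\b\D_\mu\pi_{\a\b}$; the whole problem reduces to computing this right-hand side and checking that it is pointwise linear in $P$.

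First I would dispose of the $(L,L)$-component, since it would otherwise reappear as a source. Writing $f:=L\c Z$, the geodesic equation gives $\D_L f=L^\b(\D_L Z)_\b=\tfrac12\pi(L,L)=\tfrac12 P_\a L^\a$, and one more differentiation yields $\D_L\D_L f=L\c(\D_L\D_L Z)=\R(L,Z,L,L)=0$ by the antisymmetry of the Riemann tensor. Thus $\D_L f$ is constant along each geodesic; since it vanishes on $\pr O$ (where $\pi=0$), it vanishes identically. Hence $q:=\D_L f\equiv 0$, equivalently $\pi(L,L)\equiv 0$, and in particular $\D_\a q=0$.

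Next I would expand $\D_L P_\a=L^\mu L^\b\D_\mu\pi_{\a\b}$ using $\pi_{\a\b}=\D_\a Z_\b+\D_\b Z_\a$ and split it into $L^\mu L^\b\D_\mu\D_\b Z_\a$ and $L^\mu L^\b\D_\mu\D_\a Z_\b$. The first piece equals $(\D_L\D_L Z)_\a$ by the geodesic equation, and the extension equation \eqref{eq:extZ} identifies it with the curvature term $(\R(L,Z)L)_\a$. In the second piece I would commute $\D_\mu$ past $\D_\a$, producing a curvature term $L^\mu L^\b[\D_\mu,\D_\a]Z_\b$ together with $L^\mu L^\b\D_\a\D_\mu Z_\b$; pulling $\D_\a$ out of the latter and substituting $L^\mu L^\b\D_\mu Z_\b=q$, $L^\b\D_\mu Z_\b=P_\mu-(\D_L Z)_\mu$ and $L^\mu\D_\mu Z_\b=(\D_L Z)_\b$, the two stray $\D_L Z$ contributions cancel and leave $\D_\a q-(\D_\a L^\mu)P_\mu$. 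Collecting terms, and using $\D_\a q=0$ from the previous step,
\[
\D_L P_\a=(\R(L,Z)L)_\a+L^\mu L^\b[\D_\mu,\D_\a]Z_\b-(\D_\a L^\mu)P_\mu .
\]
The decisive point is that the two curvature terms cancel: both are linear in $Z$ and carry the \emph{symmetric} factor $L^\mu L^\b$, so that after applying the pair symmetry and the two antisymmetries of $\R$ the commutator term equals $-(\R(L,Z)L)_\a$. What remains is the homogeneous transport equation $\D_L P_\a=-(\D_\a L^\mu)P_\mu$, and since $P\equiv 0$ on $\pr O$, uniqueness for this linear ODE along the congruence forces $P\equiv 0$ near $p$, i.e. $\pi_{\a\b}L^\b=0$.

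I expect the curvature cancellation to be the main obstacle. It is not a consequence of the symmetries of $\R$ alone but depends essentially on contracting twice against the \emph{same} vector $L$, so that the factor $L^\mu L^\b$ is symmetric; one must also keep the sign conventions for $[\D_\mu,\D_\a]Z_\b$ and for \eqref{eq:extZ} mutually consistent, or the cancellation will fail by a sign. A secondary but genuine subtlety is that $q$ must be shown to vanish \emph{identically} before closing the equation for $P$: otherwise the term $\D_\a q=\tfrac12\D_\a(P_\mu L^\mu)$ would involve a derivative of $P$ and the transport equation would cease to be an ODE. This is exactly why the $(L,L)$-component is handled first, in its own second-order scalar equation.
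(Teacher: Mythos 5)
Your argument is correct: the reduction to a homogeneous linear transport ODE for $P_\a=\pi_{\a\b}L^\b$ (after first disposing of the $(L,L)$-component via the second-order scalar equation $\D_L\D_L(L\cdot Z)=\R(L,Z,L,L)=0$), the cancellation of the two stray $\D_L Z$ contributions, and the cancellation of the two curvature terms using the pair symmetry of $\R$ together with the fact that $L$ occupies both contracted slots, all check out, and zero data in $O$ then forces $P\equiv 0$. The paper states this lemma without proof (the argument lives in \cite{Ion-K3}), but your route is exactly the transport-equation mechanism the surrounding text describes for $\B$ and $\P$, so it matches the intended proof.
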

 \subsubsection{The main  coupled system}
To derive the desired transport equations  we   would also need    that $P=\PZ$   verifies   $P_{\a\b\mu} L^\mu=0$. 
This is not true however and we are forced to introduce  the modification,
\bea
\P_{\a\b\mu}&=  & P_{\a\b\mu}-\frac 1 2    \D_\mu \om_{\a\b}    =    (1/2)(\D_\a\pi_{\b\mu}-\D_\b\pi_{\a\mu}-\D_\mu \om_{\a\b})               
\eea
with $\om$ a  $2$-form  chosen  precisely  such that  $\P_{\a\b\mu} L^\mu=0$.
This leads to the following.
 
 \begin{lemma}
 If we define $\omega$ in $\M$ as the  solution of the transport equation
  \begin{equation}\label{maincor}
\D_L\omega_{\a\b}=\pi_{\a\rho}\D_\b L^\rho-\pi_{\b\rho}\D_\a L^\rho,
\end{equation}
with $\omega=0$  in $O$, then
\begin{equation}\label{LP}
L^\mu \P_{\a\b\mu}=0, \qquad L^\b\om_{\a\b}=0,  \qquad L^\mu \B_{\a\mu}= 0 = L^\mu\B_{\mu\a}  \text{ in \,\, } \M.
\end{equation}
where   $\B=\pi+\om$, 
  \end{lemma}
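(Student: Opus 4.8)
The plan is to read off all three identities from the single transport equation \eqref{maincor}, using only two structural inputs already available: that $L$ generates an affinely parametrized geodesic congruence, so $\D_L L=0$, and that the deformation tensor obeys $\pi_{\a\b}L^\b=0$ by the preceding lemma. \textbf{Step 1.} First I would establish $L^\b\om_{\a\b}=0$ by showing this contraction is transported trivially. Since $\D_L L=0$,
\[
\D_L(\om_{\a\b}L^\b)=(\D_L\om_{\a\b})L^\b,
\]
and substituting \eqref{maincor} the right-hand side becomes $\pi_{\a\rho}(L^\b\D_\b L^\rho)-(\pi_{\b\rho}L^\b)\D_\a L^\rho$; the first term vanishes because $L^\b\D_\b L^\rho=(\D_L L)^\rho=0$ and the second because $\pi_{\b\rho}L^\b=0$. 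Thus $\D_L(\om_{\a\b}L^\b)=0$. Because $\om\equiv0$ in $O$, the quantity $L^\b\om_{\a\b}$ vanishes on the portion of each geodesic lying in $O$, and uniqueness for this homogeneous linear transport equation then forces $L^\b\om_{\a\b}=0$ throughout the neighborhood; antisymmetry of $\om$ gives $L^\a\om_{\a\b}=0$ as well.

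\textbf{Step 2.} The identity $L^\mu\P_{\a\b\mu}=0$ is the crux, and the transport equation for $\om$ has been reverse-engineered precisely to force it. Contracting $\P_{\a\b\mu}=(1/2)(\D_\a\pi_{\b\mu}-\D_\b\pi_{\a\mu}-\D_\mu\om_{\a\b})$ with $L^\mu$, I would treat the two derivatives of $\pi$ by differentiating the algebraic relation $\pi_{\b\mu}L^\mu=0$ in a transversal direction: from $\D_\a(\pi_{\b\mu}L^\mu)=0$ one gets $L^\mu\D_\a\pi_{\b\mu}=-\pi_{\b\rho}\D_\a L^\rho$, and symmetrically $L^\mu\D_\b\pi_{\a\mu}=-\pi_{\a\rho}\D_\b L^\rho$. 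The remaining term is exactly the source in \eqref{maincor}, $L^\mu\D_\mu\om_{\a\b}=\pi_{\a\rho}\D_\b L^\rho-\pi_{\b\rho}\D_\a L^\rho$. Summing the three contributions, the four terms cancel in pairs and $L^\mu\P_{\a\b\mu}=0$.

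\textbf{Step 3.} The statements about $\B=\pi+\om$ are then immediate: $L^\mu\B_{\a\mu}=\pi_{\a\mu}L^\mu+\om_{\a\mu}L^\mu=0$ by the preceding lemma and Step 1, and $L^\mu\B_{\mu\a}=0$ follows identically, using the symmetry of $\pi$ and the antisymmetry of $\om$. The calculation is forced once the two structural inputs are in hand, so I do not expect a genuine obstacle; the only delicate point is in Step 2, where one must differentiate $\pi_{\a\b}L^\b=0$ \emph{transversally} rather than along $L$. This is what converts the covariant derivatives of $\pi$ into undifferentiated $\pi$ contracted against $\D L$, matching the source term of \eqref{maincor} term by term. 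In this sense \eqref{maincor} is not an ad hoc choice but the unique first-order prescription annihilating $L^\mu\P_{\a\b\mu}$, and recognizing this matching is the heart of the argument.
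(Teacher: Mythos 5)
Your proof is correct: the paper states this lemma without proof, and your argument is exactly the intended one — the transport equation \eqref{maincor} is designed so that, after differentiating the identity $\pi_{\b\mu}L^\mu=0$ transversally and using $\D_L L=0$, the terms in $L^\mu\P_{\a\b\mu}$ cancel in pairs, while $L^\b\om_{\a\b}=0$ follows from a homogeneous transport equation with vanishing data in $O$. All three claims, including the $\B$ identities via the symmetry of $\pi$ and antisymmetry of $\om$, are handled correctly.
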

 With these preliminaries   one  can   easily  derive  transport equations for the tensors $\B$ and $\P$  along the  geodesics generated by $L$.
\begin{proposition}\label{pro5} Let $\B, \P $ as above and   $\W:=   \Lieh_{Z,\om}\R$.  We have,
\bea
\D_L \B_{\a\b}&=&L^\rho \P_{\rho\b\a}-\D_\al L^\rho \B_{\rho\b},\label{yi2}\\
\D_L \P_{\a\b\mu}&=&L^\nu \W_{\a\b \mu\nu} +L^\nu {\B_{\mu}}^\rho\R_{\a\b\rho\nu}-\D_\mu L^\rho \P_{\al\be\rho} \label{yi3}
\eea

  \end{proposition}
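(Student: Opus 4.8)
The plan is to obtain both identities by contracting more primitive differential relations against the congruence field $L$ and then systematically invoking the algebraic constraints of \eqref{LP}. The two facts $\pi_{\a\b}L^\b=0$ and $\om_{\a\b}L^\b=0$, together with $L^\mu\P_{\a\b\mu}=0$, are the whole engine: differentiating the first two gives the substitution rules $(\D_\ga\pi_{\a\b})L^\b=-\pi_{\a\b}\D_\ga L^\b$ and $(\D_\ga\om_{\a\b})L^\b=-\om_{\a\b}\D_\ga L^\b$, which convert every term in which $L$ lands on a derivative of $\pi$ or $\om$ into a term in which $L$ lands on the connection, i.e. a factor $\D L$. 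The extension equation \eqref{eq:extZ} and the geodesy $\D_L L=0$ of the congruence enter only through the earlier lemma that supplies $\pi_{\a\b}L^\b=0$; after that the computation is purely algebraic.

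For \eqref{yi2} I would split $\D_L\B_{\a\b}=\D_L\pi_{\a\b}+\D_L\om_{\a\b}$. The second piece is read off directly from the defining transport equation \eqref{maincor} for $\om$. For the first, I would contract the definition $P=\PZ$ with $L$ on its leading slot: in $L^\rho(\D_\rho\pi_{\b\a}-\D_\b\pi_{\rho\a})$ the first term is $\D_L\pi_{\a\b}$ and the second is rewritten by the substitution rule as $+\pi_{\rho\a}\D_\b L^\rho$, so that $\D_L\pi_{\a\b}$ is expressed through $L^\rho P_{\rho\b\a}$ and a $\pi\,\D L$ term. Adding the two pieces, the symmetric $\pi\,\D L$ contributions coming from $\D_L\om$ and from $\D_L\pi$ cancel in pairs; trading $P$ for $\P$ through $P_{\rho\b\a}=\P_{\rho\b\a}+\frac12\D_\a\om_{\rho\b}$ (and using $\om_{\rho\b}L^\rho=0$ on the extra term) reassembles the surviving $\D L$ terms into $\B_{\rho\b}=\pi_{\rho\b}+\om_{\rho\b}$ and yields the right-hand side of \eqref{yi2}. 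This step carries no obstruction; one need only keep careful track of the numerical coefficients.

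Equation \eqref{yi3} is the substantial one. The starting point is the curvature identity \eqref{nice1} with $X=Z$, contracted with $L^\nu$ on its last slot, whose left-hand side is $\D_L P_{\a\b\mu}-L^\nu\D_\mu P_{\a\b\nu}$. Two reductions are then needed. First, replace $P$ by $\P$ using $P_{\a\b\mu}=\P_{\a\b\mu}+\frac12\D_\mu\om_{\a\b}$: the term $L^\nu\D_\mu\P_{\a\b\nu}$ collapses to $-\P_{\a\b\nu}\D_\mu L^\nu$ by $L^\nu\P_{\a\b\nu}=0$, which is exactly the connection term $-\D_\mu L^\rho\P_{\a\b\rho}$, while the two $\D\D\om$ remainders combine into $\frac12 L^\nu(\D_\nu\D_\mu-\D_\mu\D_\nu)\om_{\a\b}$, and the Ricci identity turns this commutator into curvature contracted against $\om$ and $L$. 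Second, on the right-hand side of \eqref{nice1} replace $\Lie_Z\R$ by $\W=\Lieh_{Z,\om}\R$ through the definition of $\Lieh$, which contributes $\frac12(\B\odot\R)$ contracted with $L^\nu$. In that contribution the $\odot$-slot carrying $\nu$ drops out, since ${\B_\nu}^\rho L^\nu=0$ (both $\pi$ and $\om$ annihilate $L$); the slots carrying $\a$ and $\b$ decompose as $\pi+\om$, and their $\pi$-parts cancel against the explicit $-\frac12\pi_\a{}^\rho\R_{\rho\b\mu\nu}-\frac12\pi_\b{}^\rho\R_{\a\rho\mu\nu}$ of \eqref{nice1} after contraction with $L^\nu$.

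The crux, and the step I expect to be most delicate, is the cancellation of all the remaining curvature-times-$\om$ terms: the $\om$-parts surviving from the $\a$ and $\b$ slots of $\frac12(\B\odot\R)$ must annihilate precisely the commutator terms produced by the Ricci identity, leaving only $L^\nu\W_{\a\b\mu\nu}$, the single ${\B_\mu}^\rho$ curvature term, and the connection term $-\D_\mu L^\rho\P_{\a\b\rho}$. This cancellation rests on the pair symmetry $\R_{\sigma\b\mu\nu}=\R_{\nu\mu\b\sigma}$ of the Riemann tensor combined with the correct sign of the Ricci identity, so the sign and coefficient bookkeeping has to be carried out with care; the Ricci-flat hypothesis \eqref{Ricci} is what guarantees that $\W$ is a genuine trace-free Weyl field, as needed for the covariant wave equation subsequently derived from \eqref{yi3}.
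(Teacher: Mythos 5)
Your argument is correct and is precisely the intended one: the survey states Proposition \ref{pro5} without proof, and the derivation you outline --- contracting the definitions of $P,\P$ and the identity \eqref{nice1} with $L$, converting every contraction of $L$ against a derivative of $\pi$, $\om$ or $\P$ into a $\D L$ term via \eqref{LP}, and trading $\Lie_Z\R$ for $\W+\frac12\B\odot\R$ so that the $\pi$-parts cancel the explicit curvature terms of \eqref{nice1} while the $\om$-parts cancel the Ricci-identity commutator produced by the $\D\D\om$ remainders (a cancellation that indeed only uses the antisymmetries of $\R$ and of $\om$) --- is exactly the computation behind \eqref{yi2}--\eqref{yi3} in \cite{Ion-K3}. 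One caveat on the bookkeeping you rightly flag: carried out with the normalizations as printed here (the factor $1/2$ in $P$ and $\P$, and the $\frac12\B\odot\R$ in $\Lieh$), the computation yields $\D_L\B_{\a\b}=2L^\rho\P_{\rho\b\a}-\D_\a L^\rho\B_{\rho\b}$ and a coefficient $\frac12$ on $L^\nu\B_\mu{}^\rho\R_{\a\b\rho\nu}$, so the constants displayed in \eqref{yi2}--\eqref{yi3} are off by factors of $2$ relative to the displayed definitions; this is a normalization slip of the survey rather than a defect of your argument, but it does mean the coefficients must be fixed against one consistent set of conventions.
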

  \begin{definition}
  \label{def:1}
 By convention, we let $\mathcal{M}({}^{(1)}B,\ldots,{}^{(k)}B)$ denote any smooth ``multiple'' of the tensors ${}^{(1)}B,\ldots,{}^{(k)}B$, i.e. any tensor of the form
\begin{equation}\label{mnotation}
\mathcal{M}({}^{(1)}B,\ldots,{}^{(k)}B)_{\al_1\ldots\al_r}={}^{(1)}B_{\be_1\ldots\be_{m_1}}{}^{(1)}{C_{\al_1\ldots\al_r}}^{\be_1\ldots\be_{m_1}}+\ldots +{}^{(k)}B_{\be_1\ldots\be_{m_k}}{}^{(k)}{C_{\al_1\ldots\al_r}}^{\be_1\ldots\be_{m_k}},
\end{equation}
for some smooth tensors ${}^{(1)}C,\ldots, {}^{(k)}C$ in $\M$.
 \end{definition} 
 With this definition  proposition \ref{pro5} takes the form,
 \beaa
\D_L \B&=&\MM(\W, \B, \P), \qquad   \D_L \P=\MM(\W, \B, \P),   \\
 \eeaa
To get a closed  system  it  remains to establish an equation for $  \W= \Lieh_{Z,\om}\R$.   This is achieved  by the following.
\begin{lemma}
Let $\W:= \Lie_{Z, \om}\R$,   with $\om$ an arbitrary $2$ form. Then, with the definitions made above,
\bea
\label{Bianchi:W}
 \D^a\W_{\a\b\ga\de}&=&\J_{\b\ga\de}
 \eea
 where,
 \beaa
 \J_{\b\ga\de}&=& \B^{\mu\nu}\D_\nu \R_{\mu\b\ga\de}+\g^{\mu\nu}\P_{\mu\rho\nu}\R^\rho\,_{\b\ga\de}+\P_{\b\nu\mu}\R^{\mu\nu}\,_{\ga\de}+\P_{\ga\nu\mu}   \R^{\mu}\,_\b\,^\nu\,_{\de}
  +\P_{\de\nu\mu}\R^{\mu}\,_{\b\ga}\,^\nu.\nn
  \eeaa
\end{lemma}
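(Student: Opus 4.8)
The plan is to compute $\D^\a\W_{\a\b\ga\de}$ directly from the definition $\W=\Lieh_{Z,\om}\R=\Lie_Z\R-\frac12\B\odot\R$, where $\B=\piZ+\om$, handling the Lie-derivative term and the algebraic correction $\B\odot\R$ separately. The two structural inputs are the Ricci-flat hypothesis \eqref{Ricci}, which through the twice-contracted second Bianchi identity yields the divergence-free property $\D^\a\R_{\a\b\ga\de}=0$, and the identity $\GaZ_{\a\mu\b}=\PZ_{\a\b\mu}+\frac12\D_\mu\piZ_{\a\b}$ recorded after Lemma \ref{le: nice}, which lets me trade Christoffel-type corrections for the tensor $P=\PZ$.

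First I would apply the commutation identity of Lemma \ref{le:commute} to $V=\R$ and then raise the derivative index and contract it against the first slot of $\R$. The essential subtlety is that $Z$ is \emph{not} assumed Killing in the extension region, so $\Lie_Z\g=\piZ\neq0$ and the Lie derivative does not commute with raising an index: passing $\g^{\mu\a}$ through $\Lie_Z$ produces the extra term $\piZ^{\mu\nu}\D_\nu\R_{\mu\b\ga\de}$. Since $\D^\a\R_{\a\b\ga\de}\equiv0$ by Ricci-flatness, its Lie derivative vanishes, so this is the \emph{only} surviving curvature-derivative contribution from the Lie part, giving
\[
\D^\a(\Lie_Z\R)_{\a\b\ga\de}=\piZ^{\mu\nu}\D_\nu\R_{\mu\b\ga\de}+\GaZ^\a{}_{\a\rho}\R^\rho{}_{\b\ga\de}+\GaZ_\b{}^\a{}_\rho\R_\a{}^\rho{}_{\ga\de}+\GaZ_\ga{}^\a{}_\rho\R_{\a\b}{}^\rho{}_\de+\GaZ_\de{}^\a{}_\rho\R_{\a\b\ga}{}^\rho .
\]

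Next I would expand $\D^\a(\B\odot\R)_{\a\b\ga\de}$ by the Leibniz rule into eight terms: four in which $\D^\a$ lands on $\B$, producing $(\D\B)\cdot\R$ contributions, and four in which it lands on $\R$. Of the latter, the three in which $\D^\a$ hits a non-first slot of $\R$ vanish by $\D^\a\R_{\a\la\ga\de}=0$ and its symmetric analogues, while the remaining one contributes a second $\B\cdot\D\R$ term. Combining this with the $\piZ^{\mu\nu}\D_\nu\R_{\mu\b\ga\de}$ produced above, and splitting $\B$ into its symmetric part $\piZ$ and antisymmetric part $\om$, the differentiated-curvature terms collapse into the single summand $\B^{\mu\nu}\D_\nu\R_{\mu\b\ga\de}$ of $\J$; if the second Bianchi identity is needed to rearrange the index order of $\D\R$, it is invoked here. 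Finally I would convert the four $\GaZ\cdot\R$ terms and the four $(\D\B)\cdot\R$ terms into $P$- and $\D\om$-contributions via the two displayed relations together with $\P_{\a\b\mu}=\PZ_{\a\b\mu}-\frac12\D_\mu\om_{\a\b}$, and check that the $\D\piZ$ pieces from the commutator combine with the $\D\B$ pieces from the correction so as to upgrade every $P$ to the modified tensor $\P$, yielding precisely the trace term $\g^{\mu\nu}\P_{\mu\rho\nu}\R^\rho{}_{\b\ga\de}$ and the three slot terms of $\J$.

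The main obstacle is this final reorganization, where the symmetric part $\piZ$ and the antisymmetric part $\om$ of $\B$ must be threaded through with the correct index positions and signs so that no spurious $\D\piZ$ or $\D\om$ remainder survives and each $P$ is promoted to $\P$; Ricci-flatness (the trace-freeness of $\R$) and the algebraic first Bianchi identity are used repeatedly to eliminate trace contributions and to route all differentiated-curvature terms into the lone $\B^{\mu\nu}\D_\nu\R_{\mu\b\ga\de}$ summand. Tracking the numerical coefficients precisely — in particular verifying that the non-Killing defect $\piZ^{\mu\nu}\D_\nu\R_{\mu\b\ga\de}$ and the correction term assemble into that summand with the stated coefficient — is the one genuinely delicate point, and is where any sign error would surface.
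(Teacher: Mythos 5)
Your proposal is correct and follows exactly the route of the paper's (one-line) proof: apply Lemma \ref{le:commute} to $V=\R$ and $X=Z$, contract the derivative index against the first slot, use Ricci-flatness (via the contracted second Bianchi identity) to kill all divergences of $\R$ and the Lie derivative of $\D^\a\R_{\a\b\ga\de}$, and then recombine the $\GaZ$-terms from the commutator with the $\D\B$-terms from the $\tfrac12\B\odot\R$ correction so that each $\PZ$ is promoted to $\P$. One remark: carrying your bookkeeping to the end, the surviving differentiated-curvature contribution is $\piZ^{\mu\nu}\D_\nu\R_{\mu\b\ga\de}-\tfrac12\B^{\nu\mu}\D_\nu\R_{\mu\b\ga\de}=\tfrac12\B^{\mu\nu}\D_\nu\R_{\mu\b\ga\de}$, i.e.\ half the coefficient stated in $\J$ — this points to a factor-of-two slip in the survey's displayed formula rather than in your argument, and is harmless since the identity is only used in the schematic form $\D^\a\W=\MM(\B,\D\B,\P,\D\R)$.
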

\begin{proof}
Follows easily from       the definition of $W$ and  lemma  \ref{le:commute}  applied  to  the curvature  tensor $\R$
 and vectorfield $Z$.
 
\end{proof}
Differentiating  \eqref{Bianchi:W} and using  the symmetries of $\W$  we easily deduce  (see  7.1.   in \cite{}),
\beaa
\D^\rho\D_\rho \W_{\al\be\mu\nu}=\mathcal{M}(\B,\D \B,\P\,\D \P,\W)_{\al\be\mu\nu}.
\eeaa

We  have thus derive   the closed system,

\beaa
\D_L\B&=&\MM(\W, \B, \P)\\
\D_L \P&=&\MM(\W, \B, \P),\\
\square  \W&=&\MM(\W, \B, \D\B,   \P, \D\P)
\eeaa
with  the notation $\MM(W, B, P)$ explained below.

\subsubsection{Unique continuation argument}
Once we have our coupled system it remains to   prove   the simultaneous   vanishing of $\B, \P, \W$  by a unique continuation argument.
 More generally  we consider solutions of systems of equations of the form,
\begin{equation}
\label{gen-uniqueness}
\begin{cases}
&\square_\g S=\mathcal{M}({}^{(1)}B,\ldots,{}^{(k)}B,S,\D S)\\
&\D_L{}^{(i)}B=\mathcal{M}({}^{(1)}B,\ldots,{}^{(k)}B,S,\D S),\qquad i=1,\ldots,k.
\end{cases}
\end{equation}
Theorem \ref{extthm0} is now an immediate  consequence  
 of the following.
 \begin{proposition}
 \label{prop:uniq1}
 Let $(\M, \g)$ be a general pseudo-riemannian manifold, $O$ a domain in $\M$ verifying  the strict  null-convexity condition   at $p\in \pr O$. 
 Assume    given a  collection of    tensorfields   $S, B$  on $\M$, and a vectorfield $L$   verifying  \eqref{gen-uniqueness} in a neighborhood of $p$.
 Then, if $(S, B)$ vanish  in a neighborhood  of  $p$, in $O$,  they also vanish in a full neighborhood of $p$.    
 \end{proposition}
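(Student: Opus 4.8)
The plan is to prove Proposition \ref{prop:uniq1} by a \emph{Carleman-estimate} argument. I would derive the simultaneous vanishing of $S$ and of the ${}^{(i)}B$ from two weighted a priori inequalities sharing a common exponential weight — one for the scalar wave operator $\square_\g$ (acting componentwise on $S$), and one for the transport operator $\D_L$ (acting on each ${}^{(i)}B$) — and then exploit the large-parameter gain to absorb all the coupling produced by the multiples $\MM(\ldots)$ in \eqref{gen-uniqueness}. The strict null-convexity condition \eqref{qual1.1} enters at exactly one place: it guarantees the (conditional) pseudoconvexity of the level sets of the weight needed for the wave estimate, and it shapes the geometry of the region in which we conclude. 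It plays no role in the transport estimate.

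Concretely, I would first fix a defining function $h$ as in Definition \ref{def:null-convex} and build, on a small neighborhood $U$ of $p$, a weight $f$ (a convexifying modification of $h$) whose level hypersurfaces are uniformly pseudoconvex for $\square_\g$ at $p$ and along which $L(f)$ has a fixed sign; the \emph{strict} inequality in \eqref{qual1.1} is precisely what allows such an $f$ to exist after shrinking $U$. Writing $e_\lambda:=e^{\lambda f}$, I would then prove, for every $\phi\in C_c^\infty(U)$ and all $\lambda\ge\lambda_0$, the wave Carleman estimate
\[
\lambda\,\|e_\lambda\,\phi\|_{L^2}^2 + \lambda^{-1}\|e_\lambda\,\D\phi\|_{L^2}^2 \;\le\; C\lambda^{-2}\,\|e_\lambda\,\square_\g\phi\|_{L^2}^2 ,
\]
and, for every $\psi\in C_c^\infty(U)$, the elementary transport estimate
\[
\lambda\,\|e_\lambda\,\psi\|_{L^2}^2 \;\le\; C\lambda^{-1}\,\|e_\lambda\,\D_L\psi\|_{L^2}^2 ,
\]
the latter obtained by integrating $\D_L(e_\lambda^2|\psi|^2)$ over $\M$, using the sign of $L(f)$ and absorbing the $\div L$ term for $\lambda$ large. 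Note the transport estimate already gains a \emph{full} power of $\lambda$ on its right-hand side, which is essential below.

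Next I would fix a cutoff $\chi$ equal to $1$ on a smaller $U'\Subset U$ and apply the wave estimate to each component of $\chi S$ and the transport estimate to each $\chi\,{}^{(i)}B$. Since the multiples are genuine smooth tensor multiples, $|\MM(\ldots)|\lesssim|S|+|\D S|+\sum_i|{}^{(i)}B|$, whence $|\square_\g(\chi S)|\lesssim\chi(|S|+|\D S|+\sum_i|{}^{(i)}B|)+|\D\chi||\D S|+|\D^2\chi||S|$ and similarly for $\D_L(\chi\,{}^{(i)}B)$. I would add a \emph{large} multiple $A$ of the wave inequalities to the transport inequalities; the interior coupling terms are then absorbed into the combined left-hand side $A\lambda\|e_\lambda\chi S\|^2+A\lambda^{-1}\|e_\lambda\D(\chi S)\|^2+\lambda\sum_i\|e_\lambda\chi\,{}^{(i)}B\|^2$. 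Taking $A$ large first disposes of the dangerous $\D S$ contribution coming from the transport equations (it reappears with the favorable factor $C/A$), after which every remaining interior term carries a negative power of $\lambda$ and is absorbed for $\lambda$ large. The weight is then arranged so that $\mathrm{supp}(\D\chi)$ meets $\{f>f_\ast-2\de\}$ (with $f_\ast:=f(p)$ and $\de>0$ small) only inside $O$, where $S$ and the ${}^{(i)}B$ vanish by hypothesis, while $\mathrm{supp}(\D\chi)\cap\{h\ge0\}$ lies in $\{f\le f_\ast-2\de\}$ — this is where \eqref{qual1.1} is used geometrically, to bend the pseudoconvex cap $\{f>f_\ast-\de\}$ across $\pr O$ while keeping its lateral boundary in $O$. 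Hence all cutoff error terms are either zero or weighted by $e_\lambda\le e^{\lambda(f_\ast-2\de)}$, exponentially smaller than the main contributions living where $f\approx f_\ast$; letting $\lambda\to\infty$ forces $S$ and every ${}^{(i)}B$ to vanish on $\{f>f_\ast-\de\}$, which together with the hypothesis yields vanishing in a full two-sided neighborhood of $p$.

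The hard part will be the first step: proving the wave Carleman estimate under only the \emph{conditional} pseudoconvexity supplied by \eqref{qual1.1}, i.e.\ constructing the convexified weight $f$ and, crucially, retaining control of the gradient $\|e_\lambda\D\phi\|$ with the correct power of $\lambda$ so that the $\D S$ on the right of \eqref{gen-uniqueness} can be reabsorbed. The delicate balance is that the transport estimate must gain a full power of $\lambda$ while the wave estimate keeps the gradient, so that the cross-coupling between $\square_\g S$ and the $\D_L{}^{(i)}B$ is strictly subcritical; the accompanying geometric construction of the pinched region must remain compatible with the hypothesis that $(S,B)$ vanish on the $O$-side.
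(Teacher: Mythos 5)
Your proposal follows essentially the same route as the paper: reduce to componentwise/scalar differential inequalities, prove a Carleman estimate for $\square_\g$ with a weight built from the strictly null-convex defining function $h$ (the paper takes $f_\ep=\log(\ep+h+e_p)$, with the quantitative convexity \eqref{null-convex-strong} feeding into the weight), pair it with an elementary transport Carleman estimate for $\D_L$, and absorb the cross-coupling by taking a large multiple of the wave inequality and then $\la\to\infty$; your powers of $\la$ agree with \eqref{ca1} after rescaling. The only minor divergence is that your derivation of the transport estimate uses a fixed sign of $L(f)$, which is harmless in the intended application (where $L$ generates a congruence transversal to $\pr O$) but is a restriction the paper's cited ODE lemma claims to avoid.
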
 
 
   Expressing 
 the equations \eqref{gen-uniqueness}  in local coordinates\footnote{Using  definition \ref{def:1}  and also  enlarging $M$ and restricting  the neighborhood   $U$ of $p$ as necessary.},    we  can easily  reduce  the statement of proposition 
 \ref{prop:uniq1} to the following statement.
 \begin{proposition}
 \label{prop:unique-cont}
 Assume that  $h$ is a   strictly  null convex     defining function for $O$  in a neighborhood  $U$ of $p\in\pr O$.          
 Assume   given smooth   function  $G_i, H_j$ 
 $i=1,\ldots,I$, $j=1,\ldots,J$,  which  satisfy the following  differential inequalities in    $U$ of  $p\in \pr O$,
\begin{equation}\label{ext68}
\begin{cases}
&|\square_\g G_i|\leq M\sum_{l=1}^I(|G_l|+|\partial G_l|)+M\sum_{m=1}^J|H_m|;\\
&|L(H_j)|\leq M\sum_{l=1}^I(|G_l|+|\partial G_l|)+M\sum_{m=1}^J|H_m|,
\end{cases}
\end{equation}
for any $i=1,\ldots,I$, $j=1,\ldots,J$. Then,  if     the function $G, H$  vanish in    $U\cap O$     then they also vanish  
in a full,  small,  neighborhood of $p$. 

 \end{proposition}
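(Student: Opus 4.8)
The plan is to prove Proposition~\ref{prop:unique-cont} by constructing a Carleman estimate adapted to the strictly null-convex defining function $h$, and then running a standard bootstrap to propagate the vanishing of $G$ and $H$ across $\pr O$. First I would reduce the geometry to a clean local model: since $dh(p)\neq 0$, I choose coordinates in which $\pr O$ is a level set of $h$ and $p$ is the origin, and I rescale $h$ so that the null-convexity inequality \eqref{qual1.1}, namely $\D^2 h(X,X)(p)<0$ for all null $X$ tangent to $\pr O$, holds with a quantitative margin in a full neighborhood $U$. The key analytic object will be a weight of the form $e_\la := e^{\la f}$ where $f$ is a suitable function built from $h$ (typically $f = -\log(\mu - h)$ or a convexified variant $f = h + Nh^2$, chosen so that its Hessian inherits the sign of $\D^2 h$ along null directions). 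The large parameter $\la$ is what forces the error terms to be absorbed.

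The central step is to establish the scalar Carleman estimate: there exist constants $C, \la_0$ and a neighborhood $U$ of $p$ such that for all $\la \geq \la_0$ and all smooth $G$ compactly supported in $U$,
\begin{equation*}
\la \int_U e^{2\la f}\big(|\D G|^2 + \la^2 |G|^2\big)\, dg \;\leq\; C \int_U e^{2\la f} |\square_\g G|^2\, dg .
\end{equation*}
This is where the strict null-convexity condition is used decisively: the bulk term produced by conjugating $\square_\g$ by the weight and integrating by parts has a principal part controlled by $\D^2 f$ evaluated on the relevant characteristic (null) directions, and \eqref{qual1.1} is exactly the pseudoconvexity condition of Calderon--H\"ormander that guarantees positivity of this term for large $\la$. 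I would derive the estimate by the standard conjugation $P_\la = e^{\la f}\square_\g e^{-\la f}$, splitting into self-adjoint and skew-adjoint parts, expanding $\|P_\la G\|^2$, and integrating by parts; the commutator of the two parts yields the positive term whose sign is dictated precisely by the null-convexity of $h$.

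Next I would handle the coupled transport inequalities for the $H_j$. Integrating $|L(H_j)|$ along the integral curves of $L$ and using that $H$ vanishes in $U\cap O$ gives a pointwise bound of the form $|H_j(x)| \lesssim \int_{\gamma_x} (|G_l| + |\partial G_l| + |H_m|)$, and after multiplying by the weight and applying Gronwall along the flow of $L$ one obtains a weighted $L^2$ control of $H$ by the weighted $L^2$ norm of $(G,\partial G)$, with a constant uniform in $\la$. I would then combine this with the Carleman estimate for each $G_i$: inserting \eqref{ext68} into the right-hand side of the Carleman inequality, the terms $M\sum(|G_l|+|\partial G_l|)$ are absorbed by the left-hand side once $\la$ is large (this is where the factors of $\la$ and $\la^2$ earn their keep), and the $H$-terms are controlled by the transport estimate, which in turn feeds back into the $G$-terms. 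After a cutoff argument to replace the compact-support hypothesis by the actual vanishing in $U\cap O$ (the cutoff error is supported away from $p$ and is killed by the weight, since $f\to$ its maximum only near the distinguished level set), the conclusion is that $e^{\la f}(G,\partial G, H)$ has $L^2$ norm bounded by a quantity that is $o(1)$ as $\la\to\infty$ on any fixed subneighborhood of $p$, forcing $G \equiv 0$ and $H\equiv 0$ there.

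The main obstacle I expect is not the conjugation computation itself but the simultaneous bookkeeping of the coupled system: one must ensure that the transport estimate for $H$ and the Carleman estimate for $G$ use \emph{compatible} weights, so that the cross-terms close. The delicate point is that $L$ need not be tangent or transversal to the level sets of $f$ in any convenient way, so the weighted transport bound must be arranged to lose no powers of $\la$ relative to what the Carleman estimate can supply. A second subtlety is the cutoff: the hypotheses only give vanishing in $U\cap O = \{h<0\}$, and one must verify that the pseudoconvexity margin in \eqref{qual1.1} survives localization near $p$ uniformly, so that the boundary contributions from the cutoff lie in the region where $e^{2\la f}$ is exponentially smaller than its values near $p$. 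Both issues are standard in the Carleman-estimate literature but require care to state the neighborhoods and the weight $f$ in the correct order, which is why I would fix the quantitative form of the null-convexity bound on all of $U$ before introducing $\la$.
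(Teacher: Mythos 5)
Your overall strategy coincides with the paper's: a Carleman estimate for $\square_\g$ with a logarithmic weight built from the defining function $h$, whose positivity is supplied by the quantitative form \eqref{null-convex-strong} of the strict null-convexity condition \eqref{qual1.1}, followed by absorption of the right-hand side of \eqref{ext68} for $\la$ large. Two implementation points differ. First, you derive the wave Carleman estimate by conjugation and the self-adjoint/skew-adjoint commutator expansion, whereas the paper (Proposition \ref{prop:Carleman1} and the appendix) uses a multiplier identity for the energy--momentum tensor $Q_{\mu\nu}$ contracted with $X=\D^\a f\,\pr_\a$, together with a lower-order correction $w$; these are equivalent in content, both reducing to two pointwise inequalities in which $\D_\a\D_\b h$ enters with the favorable sign. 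Note, however, that the paper's weight is $f_\ep=\log(\ep+h+e_p)$ with a small perturbation $e_p$, and this perturbation is not cosmetic: your candidate weights ($-\log(\mu-h)$ or $h+Nh^2$) are functions of $h$ alone, hence constant on the level sets of $h$, and therefore cannot separate the target neighborhood of $p$ from the portion of $\mathrm{supp}(\nabla\chi)\cap\{h\ge 0\}$ lying on the same level sets, which is exactly where your cutoff errors live. The analogue of $e_p$ must be built into the weight before the localization step, and one must check (as the paper does) that the perturbed weight still satisfies \eqref{null-convex-strong} uniformly in $\ep$.

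Second, and more seriously, your treatment of the transport inequalities by integrating along the flow of $L$ and then applying Gronwall does not, as stated, produce a constant uniform in $\la$. After multiplying the pointwise bound $|H_j(x)|\lesssim\int_{\ga_x}(\cdots)$ by the weight, you must compare the weight at $x$ with its values along the portion of $\ga_x$ joining $x$ to the region where $H$ vanishes; since $f$ varies by an amount of order one across $U_\ep$, this comparison can cost a factor $e^{c\la}$, which destroys the absorption no matter how large $\la$ is taken. The paper avoids this by proving a genuine Carleman estimate for the operator $\nabla_L$ with the \emph{same} weight as in \eqref{ca1} (see \cite[Lemma 3.4]{AlIoKl}), obtained by a weighted $L^2$ integration by parts rather than by integration along characteristics; this yields a gain of a power of $\la$ on the $H$-terms and lets the coupled system close. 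You correctly flag this as the delicate point, but the remedy is not bookkeeping of neighborhoods: it is the replacement of the Gronwall argument by the ODE Carleman estimate.
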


The  proof of the proposition  is based on Carleman estimates. The first step  is to  obtain a quantitative version of    our null-convexity condition.
\begin{lemma}
Assume the defying  function $h$  is  strictly    null -convex at $p$.  There exists a constant  $M>0$, depending only on bounds for the metric  $g$ and its derivatives\footnote{with respect to a fixed   system of coordinates   at $p$}    in a fixed  coordinate   neighborhood  $V$  of $p$, and $\mu\in[-M, M]$
as well as   a  small  neighborhood   $U\subset  V$    of $p$  such that, for any vectorfield $Y=Y^\a\partial_\a$   
\bea
\label{null-convex-strong}
\begin{cases}
&\quad \qquad \qquad \qquad \qquad \qquad \qquad| d  h|  \ge  M^{-1}\\
&Y^\a Y ^\b  \left( \mu    g_{\a \b}- D_\a   D_\b h\right)+M|Y (h )|^2\,\, \geq  M^{-1}|Y|^2,
\end{cases}
\eea
uniformly, at all points of $U$, with $|Y|^2= (Y ^0)^2+(Y^1)^2+\ldots+(Y^d)^2$..

\end{lemma}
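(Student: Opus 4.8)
The plan is to establish \eqref{null-convex-strong} first at the single point $p$, as a coercivity statement for a quadratic form on $T_p\M$, and then to propagate it to a full neighborhood by continuity. Abbreviate $Q_{\alpha\beta}:=-\D_\alpha\D_\beta h(p)$ and $\nu_\alpha:=\partial_\alpha h(p)$, so that at $p$ the quantity on the left of the second line of \eqref{null-convex-strong} is
\[
B_\mu(Y):=\mu\,\g(Y,Y)+Q(Y,Y)+M\,(\nu_\alpha Y^\alpha)^2 .
\]
Strict null-convexity forces $\partial O$ to be non-null at $p$ (the excerpt shows that \eqref{qual1.1} fails when $\partial O$ is null), so $\nu$ is non-null, the nondegeneracy $dh(p)\neq0$ gives the first inequality of \eqref{null-convex-strong}, and $\g$ restricted to the hyperplane $W:=\ker\nu=T_p(\partial O)$ is nondegenerate; the delicate case is when $W$ is Lorentzian.

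The first and crucial step is purely tangential. On $W$ the last term of $B_\mu$ vanishes, so I must produce a \emph{single} scalar $\mu$ for which $\mu\,\g+Q$ is positive definite on $W$. Hypothesis \eqref{qual1.1} says precisely that $Q(X,X)>0$ for every $\g$-null $X\in W\setminus\{0\}$, and the point is the linear-algebra fact that on a Lorentzian space a symmetric form positive on the null cone becomes positive definite after adding a suitable multiple of the metric. I would prove this by a convex-hull argument. Set $\Sigma:=\{(Q(X,X),\g(X,X)):X\in W,\ |X|=1\}\subset\RRR^2$, a compact connected set. The existence of $\mu$ with $\mu\g+Q\succ0$ on $W$ is equivalent to $0\notin\mathrm{conv}\,\Sigma$: by strict separation, $0\notin\mathrm{conv}\,\Sigma$ yields a functional pairing positively with all of $\Sigma$, and since null directions contribute points with positive first coordinate (there $Q>0$) this functional has positive first component and so is of the form $(1,\mu)$. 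It thus remains to exclude $0\in\mathrm{conv}\,\Sigma$. If it held, connectedness of $\Sigma$ would give $X_1,X_2$ whose images are antipodal through the origin, i.e.\ $\lambda(Q(X_1,X_1),\g(X_1,X_1))+(1-\lambda)(Q(X_2,X_2),\g(X_2,X_2))=0$; the $\g$-components then have opposite signs, so $X_1$ is timelike, $X_2$ spacelike, and $\mathrm{span}(X_1,X_2)$ is Lorentzian. Applying the elementary two-dimensional case on this plane (a discriminant computation) produces $\nu$ with $(Q+\nu\g)(X_i,X_i)>0$ for $i=1,2$; multiplying by $\lambda,1-\lambda$ and using the vanishing of the $\g$-combination forces a strictly positive convex combination of two numbers whose $Q$-combination is zero — a contradiction. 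Hence $\mu$ exists and $\mu\g+Q\geq2\delta\,|Y|^2$ on $W$ for some $\delta>0$.

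Fixing this $\mu$, I extend coercivity from $W$ to all of $T_p\M$ using the term $M(\nu_\alpha Y^\alpha)^2$. By continuity and compactness $\mu\g+Q\geq\delta|Y|^2$ on the conic neighborhood $\{|\nu_\alpha Y^\alpha|<\eta|Y|\}$ of $W$ for $\eta>0$ small, where the transversal term only helps; on the complementary cone $|\nu_\alpha Y^\alpha|\geq\eta|Y|$ the bounded form $\mu\g+Q$ is dominated by $M(\nu_\alpha Y^\alpha)^2\geq M\eta^2|Y|^2$ once $M$ exceeds a fixed multiple of $\eta^{-2}$ times the $C^0$-norm of $\g$ and the $C^2$-norm of $h$. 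This gives $B_\mu(Y)\geq c\,|Y|^2$ for all $Y$, with $\mu,M,c$ controlled solely by bounds on $\g$, its derivatives, and $h$; after enlarging $M$ we arrange $|\mu|\leq M$ and $c\geq M^{-1}$. Finally, the estimate at $p$ is a strict, open condition and $\g,\D^2h,dh$ depend continuously on the base point, so \emph{keeping the same} $\mu$ and $M$ the inequalities $B_\mu\geq\tfrac12 c|Y|^2$ and $|dh|\geq M^{-1}$ persist on a sufficiently small $U\ni p$; a harmless readjustment of $M$ yields \eqref{null-convex-strong} uniformly on $U$.

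The only genuinely nontrivial point — and the step I expect to be the main obstacle — is the scalar selection in the tangential step: everything else is compactness and continuity, but producing one $\mu$ that simultaneously corrects the indefiniteness of $\g$ in the timelike and spacelike tangential directions uses the Lorentzian light-cone structure and the full strength of \eqref{qual1.1}. It is exactly this coupling between the two signatures, invisible to any naive direction-by-direction estimate, that the two-point convex-hull reduction is designed to exploit.
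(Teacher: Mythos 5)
The paper states this lemma without proof, so your argument fills a genuine omission rather than paralleling a written one. The core of what you do --- pass to the tangent hyperplane $W=\ker dh(p)$, show by a separation/convex-hull argument in $\RRR^2$ that positivity of $Q=-\D^2h(p)$ on the null cone of $W$ yields a single scalar $\mu$ with $\mu\g+Q$ positive definite on $W$, then absorb the transversal directions with the $M|Y(h)|^2$ term via a conic splitting and propagate to a neighborhood by continuity --- is correct, and it is the standard route for converting a strict pseudoconvexity hypothesis into its quantitative form. The two-point reduction (a compact connected subset of $S^1$ not contained in an open half-circle contains an antipodal pair), the normalization of the separating functional using the null directions, and the $2\times 2$ discriminant computation on the Lorentzian plane $\mathrm{span}(X_1,X_2)$ all check out.

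The one step I cannot accept as written is the opening claim that strict null-convexity forces $\pr O$ to be non-null at $p$. The paper's optical-function computation only shows that \eqref{qual1.1} fails when $\pr O$ is null in a \emph{neighborhood} of $p$; at a single point the definition does not exclude $dh(p)$ being null. For instance, in Minkowski space with $\g=-dt^2+dx^2+dy^2$ the function $h=x-t-(x+t)^2$ is strictly null-convex at the origin with null conormal: the only null vectors tangent to $\{h=0\}$ there are multiples of $\pr_t+\pr_x$, on which $\D^2h=-8<0$. In that degenerate case $\g$ restricted to $W$ is only positive semidefinite, with kernel spanned by $\nu^\sharp$, so your separation argument (which needs $\g|_W$ nondegenerate) does not apply. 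The conclusion of the lemma still holds and the fix is short --- write $Y=s\,\nu^\sharp+Y'$ with $Y'$ spacelike in $W$, use $Q(\nu^\sharp,\nu^\sharp)>0$ from \eqref{qual1.1}, absorb the cross term by Cauchy--Schwarz, and take $\mu$ large --- but you must either supply this case or explicitly restrict to non-null boundary points. A smaller remark of the same kind: when $\pr O$ is spacelike, $W$ contains no null directions, so the normalization step of your separating functional has nothing to act on; there $\g|_W$ is definite and a large $\mu$ of the appropriate sign works trivially, which is worth one sentence.
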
 

 Here is also   a quantitative version of    the    $\T$-null-convexity condition. 
 
\begin{lemma} 
Assume the defying  function $h$  is  strictly  $\T$-null-convex at $p$
There exists a constant  $M>0$, depending only on bounds for the metric  $g$ and its derivatives\footnote{with respect to a fixed   system of coordinates   at $p$}    in a fixed  coordinate   neighborhood $V$   of $p$,  a   constant $\mu\in[-M, M]$ and 
  a sufficiently small neighborhood   $U\subset  V$    of $p$  such that, for any vectorfield $Y=Y^\a\partial_\a$   
\bea
\label{null-convex-strong-T}
\begin{cases}
&\quad \qquad \qquad \qquad \qquad \qquad \qquad  \qquad \qquad \qquad |d h|  \ge  M^{-1}\\
&Y^\al Y  ^\be( \mu   g_{\al\be}- D_\al   D_\be h )+M\left( |Y (h )|^2  +| g(\T , Y) |^2 \right)     \geq M^{-1}|Y|^2,
\end{cases}
\eea
uniformly, at all points of $U$, with $|Y|^2= (Y^0)^2+(Y^1)^2+\ldots+(Y^d)^2$.. 
\end{lemma}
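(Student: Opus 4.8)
The plan is to peel the statement down to a single pointwise claim at $p$ about quadratic forms, settle that by linear algebra, and then propagate it to a neighborhood by continuity and homogeneity. To set up the reduction, write $Q(Y):=-Y^\al Y^\be D_\al D_\be h$, so that strict $\T$-null convexity at $p$ says exactly that $Q(X)>0$ for every nonzero $X$ lying in the subspace $W:=\{Y\in T_p\M:\ dh(Y)=0,\ g(\T,Y)=0\}$ that is $g$-null, i.e. $g(X,X)=0$. Evaluated at $p$, the desired inequality is $\mu\,g(Y,Y)+Q(Y)+M\big(|dh(Y)|^2+|g(\T,Y)|^2\big)\ge M^{-1}|Y|^2$ for all $Y$, with $|\cdot|$ the fixed Euclidean norm. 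Everything therefore rests on the algebraic claim at $p$: there exist $\mu\in\RRR$ and $M>0$ with $\mu\,g(Y,Y)+Q(Y)+M\big(|dh(Y)|^2+|g(\T,Y)|^2\big)\ge M^{-1}|Y|^2$ for all $Y\in T_p\M$.

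The crux, and the step I expect to be the main obstacle, is the \emph{core stage}: to show that for some $\mu$ the single quadratic form $\mu\,g+Q$ is positive definite on $W$. The natural temptation is a compactness argument on the Euclidean unit sphere of $W$, but this fails when $g|_W$ is indefinite: making $\mu g+Q$ positive on timelike and on spacelike directions pulls $\mu$ in opposite directions, and the associated minimax obeys weak duality in the wrong direction, so positivity on the null cone is not recovered direction-by-direction. What rescues the argument is the rigid coupling, forced by $Q>0$ on the null cone, between the values of $Q$ on the two types of directions; this is precisely the content of the S-lemma (Finsler's theorem) for a pair of quadratic forms subject to one equality constraint, whose losslessness uses exactly that $g|_W$ takes both signs. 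Thus in the genuinely indefinite case the existence of $\mu$ with $\mu g+Q\ge c_0|\cdot|^2$ on $W$ is immediate from that classical fact (with a direct check in the low-dimensional case $\dim W=2$); the definite case has no nonzero null vectors, makes the hypothesis vacuous, and is handled by taking $|\mu|$ large of the appropriate sign.

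With such a $\mu$ fixed I would carry out the \emph{penalty stage}. Since $W$ is the joint kernel of the two functionals $dh$ and $g(\T,\cdot)$, the map $Y\mapsto(dh(Y),g(\T,Y))$ is injective on any complement, so $|dh(Y)|^2+|g(\T,Y)|^2\ge c_1|Y_\perp|^2$ for the component $Y_\perp$ of $Y=Y_W+Y_\perp$ transverse to $W$ (here I use $dh(p)\neq0$ and $\T(p)\neq0$). Expanding $\mu\,g(Y,Y)+Q(Y)$ into its $W$–$W$, cross, and $\perp$–$\perp$ parts, the first is $\ge c_0|Y_W|^2$, the cross term $\mu g(Y_W,Y_\perp)+Q(Y_W,Y_\perp)$ is absorbed by Young's inequality into $\tfrac12 c_0|Y_W|^2$ plus a constant times $|Y_\perp|^2$, and the $\perp$–$\perp$ part is bounded below by a negative constant times $|Y_\perp|^2$; choosing $M$ large enough that $Mc_1$ dominates these $|Y_\perp|^2$ losses yields the algebraic claim at $p$ with a strictly positive constant.

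Finally I would globalize. Fixing the value $\mu$ obtained at $p$ and $M$ large enough that also $|\mu|\le M$ and $M^{-1}$ lies below the constant just produced, the left-hand side is continuous in the base point $x$ and in $Y$ and homogeneous of degree two in $Y$; it is $\ge c>0$ on the compact set $\{|Y|=1\}$ at $x=p$, hence $\ge c/2$ there for all $x$ in a small neighborhood $U$ of $p$, and by homogeneity $\ge(c/2)|Y|^2$ for all $Y$ and all $x\in U$. Shrinking $U$ so that $|dh(x)|\ge\f12|dh(p)|\ge M^{-1}$ gives the first line of the conclusion, and one records that every constant depends only on bounds for $g$, $h$, $\T$ and their derivatives on a fixed coordinate neighborhood. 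The preceding (non-$\T$) quantitative lemma is the special case obtained by dropping $\T$, taking $W=\ker dh(p)$, and keeping only the penalty $M|dh(Y)|^2$; the same three stages prove it verbatim.
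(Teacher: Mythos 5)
The paper states this lemma (and its non-$\T$ analogue just before it) without proof, so there is no in-text argument to compare yours against; the quantitative reformulation is imported from \cite{AlIoKl}, \cite{Ion-K2}, \cite{Ion-K3}. Judged on its own, your proof is correct, and you have put your finger on the one genuinely nontrivial point. A single compactness argument on the Euclidean unit sphere of $W=\ker dh(p)\cap\ker g(\T,\cdot)(p)$ does \emph{not} produce $\mu$ when $g|_W$ is indefinite: on the part of the sphere where $|dh(Y)|$ and $|g(\T,Y)|$ are small the penalty terms are useless, and the directions with $g(Y,Y)\geq \eta$ and those with $g(Y,Y)\leq -\eta$ pull $\mu$ in opposite directions, so positivity of $-D^2h$ on the null cone of $W$ must be converted into positive definiteness of $\mu g+Q$ on all of $W$ by Finsler's theorem (one quadratic equality constraint; valid in every dimension, including the case $\dim W=2$ relevant for $d=4$ which you rightly flag for a direct check, and including the degenerate semidefinite case of $g|_W$). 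The subsequent penalty stage — injectivity of $Y\mapsto(dh(Y),g(\T,Y))$ on a complement of $W$ giving $|dh(Y)|^2+|g(\T,Y)|^2\geq c_1|Y_\perp|^2$, Young's inequality on the cross terms, and $M$ large — and the continuity/homogeneity globalization with $|\mu|\leq M$ and $|dh|\geq M^{-1}$ on a shrunken $U$ are routine and correctly executed; dropping $\T$ gives the preceding lemma. One caveat worth recording: the constants you obtain necessarily depend also on the modulus of strictness of the null-convexity of $h$ at $p$ (the minimum of $-D^2h(X,X)$ over Euclidean-unit null vectors of $W$), not only on coordinate bounds for $g$ and its derivatives as the lemma's wording suggests; this is an imprecision in the statement rather than a gap in your argument.
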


 The   proof of proposition  \eqref{prop:unique-cont}  can be reduced  to two Carleman estimates. 
  The first, and   by far   the more important one,  concerns  the scalar wave operator $\square_\g$.
  To state it we assume    that  the defining function $h$           of the domain $O$, near $p\in \pr O$,  verifies  
 \eqref{null-convex-strong}  in a full neighborhood $ U_1$  of 
 a point $p$ with $h(p)=0$.  Let  $U_\ep$ be small  neighborhoods of 
 $p$ such that $|h|\le  2^{-1}\ep$    in $U_\ep$ and define  the weight functions,
  $f_\ep: U_\ep \longrightarrow \RRR$
 \beaa
 f_\ep :=\log (\ep+h+e_p)
 \eeaa
 where $e_p$ is a small  perturbation of  $\ep+h$.
  such  that  the    weights  $f_\ep$   verify \eqref{null-convex-strong} in $U_\ep$, uniformly in $\ep>0$

We are now ready to state      our  main   Carleman estimate.       

\begin{proposition}
\label{prop:Carleman1}
 If $f_\ep$ are as above,     there exists  a sufficiently  small     $\ep>0$  and a  large constant $C_\ep>0$ 
   such that, for all $\phi \in C_0^2(U_\ep)$ and all sufficiently large  $\la>0$,
     \begin{equation}\label{ca1}
\lambda \cdot \|e^{-\lambda f_\ep }\cdot \phi\|_{L^2}+
\|e^{-\lambda f_\ep }\cdot D \phi\|_{L^2}\leq C_\ep    \lambda^{-1/2}\cdot \|e^{-\lambda f_\ep}\cdot\square_\g \phi\|_{L^2},
\end{equation}

\end{proposition}

\begin{remark}
A more general version of the  Carleman estimate  \eqref{ca1},  adapted to   the notion of $\T$-null convexity 
is given in \cite[section 3.2] {Ion-K2}
\end{remark}

 We also need a Carleman estimate to  deal with  the     ODE  part of  our  system. This is considerable easier,   no additional restrictions are needed, see \cite[Lemma 3.4]{AlIoKl}.

\subsection{Existence   results in a neighborhood of the horizon}
The methods  discussed in the previous   subsections  can be applied to construct
Killing vectorfields in a neighborhood of    a bifurcate horizon  of stationary     vacuum 
solutions. One can in fact present the result   without  reference to  
 stationarity as follows.   
 
 Let  $(\M,\g)$ to be a smooth\footnote{$\M$ is assumed to be a connected, oriented, $C^\infty$ $4$-dimensional manifold without boundary.} vacuum Einstein space-time. Let $S$  be an embedded spacelike $2$-sphere in $\M$ and let $\NN, \NNb$ be the  null boundaries of the causal set of $S$, i.e. the union of the causal  future and past of $S$. We fix $U$ to be a small neighborhood of $S$ such that both $\NN,\underline{\NN}$ are regular, achronal, null hypersurfaces in $U$ spanned by null geodesic generators orthogonal to $S$. We say that  the triplet $(S, \NN,\underline{\NN})$ forms a local, regular, bifurcate,  non-expanding horizon  in $U$ if both  $\NN,\underline{\NN}$ are non-expanding null hypersurfaces in $U$.  This simply means that     the      traces of the null   second fundamental forms    of $\NN$ and $\NNb$, called expansions,   
  are both vanishing  respectively on $\NN$ and $\NNb$.              Our main results are   the following:
  
  \begin{theorem}\label{Hawking1}
Given a   local, regular, bifurcate\footnote{  Hawking's original  rigidity  theorem relies instead on a non-degeneracy assumption. We note however that the two  assumptions   are    in fact related,  see \cite{Ra-Wa}.}  $(S,\, \NN,\,\NNb)$  in  
 a smooth,  vacuum Einstein space-time $(\M,\g)$,   there exists  an open neighborhood $ V\subset U$ of $S$ and a non-trivial Killing vector-field $\K$ in $\M$, which is tangent to the null generators of $\NN$ and $\NNb$. In other words, every local, regular, bifurcate, non-expanding horizon is a Killing bifurcate horizon.
\end{theorem}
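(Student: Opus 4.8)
The plan is to recast Theorem~\ref{Hawking1} as a \emph{characteristic} (Goursat) initial value problem posed on the bifurcate horizon $\NN\cup\NNb$, so that the extension into the interior is governed by well-posed hyperbolic equations rather than by the ill-posed unique continuation of Section~2; crucially, since $\NN\cup\NNb$ is null, \eqref{fail} shows that no null-convexity hypothesis is available, but none is needed. First I would fix a double null foliation adapted to $S$, with $L$ and $\Lb$ affine null geodesic generators of $\NN$ and $\NNb$. The non-expanding hypothesis reads $\trch=0$ on $\NN$ and $\trchb=0$ on $\NNb$. Substituting this into the Raychaudhuri equation $L(\trch)=-\tfrac12(\trch)^2-|\chih|^2-\mathrm{Ric}(L,L)$ and invoking $\mathrm{Ric}(\g)=0$ forces $\chih=0$ on $\NN$, and symmetrically $\chibh=0$ on $\NNb$. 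Hence both hypersurfaces are \emph{totally geodesic}: their full null second fundamental forms vanish identically. This is the structural gain on which everything rests.

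The heart of the argument is the infinitesimal Hawking construction. Letting $s$ denote the affine parameter along the generators of $\NN$ with $s=0$ on $S$, I set $\K=sL$ on $\NN$ and, symmetrically, $\K=-\underline{s}\,\Lb$ on $\NNb$; the two prescriptions agree (vanishing) on $S$, and $\K$ is tangent to the generators and non-trivial. I would then show that the deformation tensor $\pi:=\Lie_\K\g$ vanishes on all of $\NN\cup\NNb$. This is a direct computation from the null structure equations, using the vanishing of $\trch,\chih$ (resp.\ $\trchb,\chibh$) from the previous step and the vacuum condition to control the remaining Ricci coefficients. The tensors $\P$ and $\W:=\Lieh_{\K,\om}\R$ of Proposition~\ref{pro5} are seen to vanish on the horizon as well.

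Next I would extend $\K$ off the horizon into the domain of dependence $V$ of $\NN\cup\NNb$ (a neighborhood of $S$) via the geodesic prescription \eqref{eq:extZ}, introducing the auxiliary $2$-form $\om$ as in \eqref{maincor} with zero data on the horizon. One recovers exactly the closed system of Proposition~\ref{pro5} together with the covariant wave equation for $\W$,
\beaa
\D_L\B=\MM(\W,\B,\P),\quad \D_L\P=\MM(\W,\B,\P),\quad \square_\g\W=\MM(\W,\B,\D\B,\P,\D\P),
\eeaa
but now with data that vanish on the \emph{characteristic} surface $\NN\cup\NNb$. Because this surface is the union of two transversal null hypersurfaces meeting at $S$, the resulting problem is a standard Goursat problem, and a uniqueness theorem for such wave--transport systems (by energy estimates and finite speed of propagation, in contrast to the Carleman estimate of Proposition~\ref{prop:Carleman1}) forces $\B,\P,\W\equiv0$ in $V$. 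In particular $\pi\equiv0$, so the extended $\K$ is Killing in $V$; tangency to the generators and non-triviality are inherited from the horizon data.

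The hard part will be the infinitesimal construction: verifying that $\pi$, and the curvature quantity $\W$, vanish on $\NN\cup\NNb$ to exactly the order needed to serve as vanishing characteristic data. This requires exploiting the full strength of the totally geodesic property and the vacuum equations through the null structure equations, and handling with care the matching of the two prescriptions across the bifurcation sphere $S$, where $\K$ degenerates and the affine parametrizations of $\NN$ and $\NNb$ must be reconciled.
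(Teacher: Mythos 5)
There is a genuine gap, and it sits precisely where the theorem's actual content lies. Your Goursat argument produces the Killing field only in the domain of dependence of $\NN\cup\NNb$, but the domain of dependence of two transversally intersecting null hypersurfaces is \emph{not} a neighborhood of $S$: near $S$ the pair $(\NN,\NNb)$ splits a neighborhood into four wedges, and the causal (Goursat) region covers only the two wedges $\II^{\pm}(S)$. The theorem asserts the existence of $\K$ in a \emph{full} open neighborhood $V$ of $S$, including the two exterior, spacelike-separated wedges, where the extension problem is ill posed and no energy estimate or finite-speed-of-propagation argument applies. The paper is explicit that the part you prove was already known (Friedrich--R\'acz--Wald \cite{FrRaWa}), and that the novelty of Theorem \ref{Hawking1} is exactly the passage from the domain of dependence $O$ to the exterior wedges. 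That passage is carried out by applying the unique continuation machinery of Theorems \ref{extthm0}--\ref{extthm1}: the strict null convexity of $\pr O$ at points of $S$ is \emph{not} ruled out by \eqref{fail}, since that computation concerns a single smooth null hypersurface, whereas at $S$ the boundary of $O$ is a corner formed by two transversal null hypersurfaces, and it is the bifurcation that supplies the needed convexity (at the price of Carleman estimates adapted to the non-smooth corner, as in \cite[Section 3.2]{Ion-K2}). Your remark that ``no null-convexity hypothesis is available, but none is needed'' is therefore the step that fails.

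The omission is not a technicality: Theorem \ref{mkathm} exhibits smooth stationary vacuum perturbations of Kerr, supported on one side of the horizon away from the bifurcation sphere, for which the rotational Killing field does not extend across the horizon. So crossing a null hypersurface from its domain of dependence to the other side genuinely requires the bifurcate structure together with a Carleman/unique-continuation argument; no purely characteristic, well-posed formulation can deliver it. Your preliminary steps --- Raychaudhuri plus Ricci-flatness forcing $\chih=0$, the prescription $\K=sL$ on $\NN$ and $\K=-\underline{s}\,\Lb$ on $\NNb$, and the vanishing of the deformation tensor along the horizon --- are consistent with the standard construction and would serve as input to the real argument, but as written the proposal establishes only the previously known half of the theorem.
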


\begin{theorem}
\label{Hawking2}
Under the same assumptions as  above, if in addition    there exists a Killing  vectorfield $\T$  in $ U$ 
 tangent to  $\NN\cup\NNb$ and not identically vanishing on $S$, then   there exists an  open neighborhood 
 $V\subset U$ and   a non-trivial rotational  Killing vector-field $\Z$ in $U$ which commutes with $\T$.

\end{theorem}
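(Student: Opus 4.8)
The plan is to reduce the statement to Theorem \ref{Hawking1} and then to extract $\Z$ as a distinguished linear combination of the horizon generator and $\T$. First I would apply Theorem \ref{Hawking1} to produce the Killing field $\K$ tangent to the null generators of $\NN$ and $\NNb$ in a neighborhood $V\subseteq U$ of $S$. Since the generators of $\NN$ and $\NNb$ leave $S$ in the two distinct null normal directions, a field tangent to both families must vanish on the bifurcation sphere; thus $\K|_S=0$, while its covariant derivative $\D\K|_S$ acts on the normal plane $N_qS$, $q\in S$, as a boost with a fixed, nonzero surface gravity $\ka$ (the standard bifurcate structure, here inherited from the construction in Theorem \ref{Hawking1}). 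By contrast, $\T$ is tangent to $\NN\cup\NNb$ and does not vanish identically on $S$; being tangent to both $\NN$ and $\NNb$ along their intersection, $\T|_S$ is a nontrivial vector field tangent to $S$, hence a nontrivial Killing field of the induced metric on the $2$-sphere $S$, whose flow therefore generates a circle action---an infinitesimal rotation of $S$ with closed orbits.

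The second step is the commutation $[\T,\K]=0$. The cleanest route is equivariance: the flow $\psi_s$ of $\T$ consists of isometries preserving $\NN\cup\NNb$ (because $\T$ is Killing and tangent to it), hence preserving the null generators, the bifurcation sphere $S$, and the non-expanding structure. Since $\K$ is produced \emph{canonically} from exactly this data by Theorem \ref{Hawking1}, uniqueness of that construction forces $(\psi_s)_*\K=\K$, i.e. $\Lie_\T\K=[\T,\K]=0$. Equivalently, one may argue algebraically: $W:=[\T,\K]$ is again a Killing field, one checks that its value and antisymmetric derivative vanish along $S$, and then the rigidity of Killing fields (a Killing field being determined by its $1$-jet at a point of a connected manifold) gives $W\equiv 0$ on $V$.

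With commutation in hand, set $\Z:=c_1\T+c_2\K$ with constants chosen so that $\D\Z|_S$ carries \emph{no} boost component in the normal plane; concretely, one matches the surface gravity of $\K$ against that of $\T$ along the generators, which in the model case of Kerr reduces to $\Z\propto \T-\K\propto\partial_\phi$. Because $[\K,\T]=0$, any such combination automatically commutes with $\T$, since $[\Z,\T]=c_2[\K,\T]=0$. It remains to check that $\Z$ is genuinely rotational. On $S$ one has $\Z|_S=c_1\T|_S$, already an infinitesimal rotation of the $2$-sphere with closed orbits; the boost-free choice of $c_1,c_2$ ensures that the action of $\D\Z|_S$ on the normal plane is a pure (elliptic) rotation rather than a boost, so the flow of $\Z$ closes up near $S$ as well. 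Normalizing the period to $2\pi$ then fixes the overall scale and yields a nontrivial Killing field with closed orbits, commuting with $\T$, as claimed.

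The main obstacle is precisely this last point---showing that $\Z$ is \emph{rotational} rather than merely a Killing field tangent to a rotation of $S$. The difficulty is that the stationary field $\T$ spirals near the horizon, combining a rotation of $S$ with a normal-plane boost of surface gravity $\ka_\T$; one must show that subtracting the correct multiple of the horizon generator $\K$ cancels this boost exactly, so that the residual flow closes. Establishing the requisite identity $\ka_\T=\ka$ (after normalization) hinges on the non-expanding, bifurcate structure together with the vacuum input already encoded in Theorem \ref{Hawking1}; once it is in place, promoting the infinitesimal statement to genuine $2\pi$-periodicity of the flow again follows from the $1$-jet rigidity of Killing fields, since an isometry fixing $S$ pointwise and acting trivially on $N_qS$ to first order must be the identity.
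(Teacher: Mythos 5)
Your overall route---produce the Hawking field $\K$ from Theorem \ref{Hawking1}, prove $[\T,\K]=0$, and take a constant-coefficient combination of $\T$ and $\K$ whose flow closes up---is the one the paper intends (it appears verbatim as step (4) in the discussion of Theorem \ref{Thm:Gl2}: extend $\K$ with $[\K,\T]=0$ and ``find a combination $\Z$ of $\T,\K$ which has closed orbits''). The preliminary steps are sound: $\K$ vanishes on $S$ and $\D\K|_S$ is a pure boost of constant, nonzero rate $\ka$ on the normal planes; $\T|_S$ is tangent to $S$ because $T_q\NN\cap T_q\NNb=T_qS$ by transversality; a nontrivial Killing field of a Riemannian metric on a $2$-sphere does generate a circle action; and either of your two arguments for $[\T,\K]=0$ can be completed (the paper instead gets the commutation for free, since the extension in Theorem \ref{extthm1} is constructed to commute with $\T$).

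The gap sits exactly where you flag ``the main obstacle,'' and as written that step does not go through: to cancel the normal-plane boost of $\T$ by subtracting a \emph{constant} multiple of $\K$ you need the boost rate $\ka_\T(q)$ of $\D\T|_S$ acting on $N_qS$ to be \emph{constant over all of} $S$ (a zeroth-law type statement), not merely to be matched against $\ka$ ``after normalization'' at one point. A priori, invariance under the flow of $\T$ only makes $\ka_\T$ constant along the orbits of $\T|_S$, i.e.\ on the latitude circles of the rotation, and you offer no argument for constancy across latitudes beyond saying it ``hinges on the non-expanding, bifurcate structure.'' The clean repair makes the constancy unnecessary as an input: every nontrivial Killing field of a metric on $S^2$ has a zero, so pick a pole $q_0\in S$ with $\T(q_0)=0$. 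Since the flow of $\T$ fixes $q_0$ and preserves $S$, $\D\T|_{q_0}$ splits into a rotation of $T_{q_0}S$ of rate $\om\neq 0$ and a boost of $N_{q_0}S$ of some rate $\la$, with no mixed components. Set $\Z=\T-(\la/\ka)\K$. Then $\Z(q_0)=0$, $\D\Z|_{q_0}$ is a pure spatial rotation, the time-$2\pi/\om$ map of $\Z$ fixes $q_0$ with derivative equal to the identity, and the $1$-jet rigidity of isometries (which you already invoke at the end) forces that map to be the identity on a connected neighborhood. This gives closed orbits directly, commutation with $\T$ as you note, and the constancy of $\ka_\T$ over $S$ drops out a posteriori rather than being assumed.
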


It was  already known, see \cite{FrRaWa}, that such a Killing vector-field exists in a small neighborhood        of $S$ intersected with the domain of dependence of $\NN\cup\underline{\NN}$, which we  could call $O$ in reference  to  theorem  \ref{extthm0}. The extension  of $\K$ to a full neighborhood of $S$ has been known to hold only under the restrictive  additional  assumption of  analyticity of the space-time (see \cite{HE}, \cite{IsMon}, \cite{FrRaWa}). The novelty of   both    theorems is the ability to construct these   local Killing  fields  in a full neighborhood of the 2-sphere $S$, without making any analyticity assumption.   Both    theorems  can be viewed as  applications of  theorems \ref{extthm0}, \ref{extthm1}
 to domains     $O$     which are    obtained by intersecting  neighborhoods of $S$ in $\M$  with the      domain of dependence   of   the bifurcate sphere  $S$.   The strict null convexity    condition is an easy consequence of  the   bifurcation (non-degeneracy)  property   of  the   boundary $(\NN\cup \NNb)\cap U$ of $O$.  Note however that $O$  is not smooth  at  points of $S$. This requires a  slight modification  of the Carleman estimates      needed in the proof of    theorems \ref{extthm0}, \ref{extthm1}. A full account of such Carleman estimates  is           given in \cite[section 3.2] {Ion-K2}.

\subsection{Counterexamples} We review a  counterexample   to Haking's rigidity theorem  in the non-analytic case. 
 Let $(\mathcal{K}(m,a),\g)$ denote the (maximally extended) Kerr space-time of mass $M$ and angular momentum $Ma$, $0\leq a<M$.  Let $\M^{(end)}$ denote an asymptotic region, $\mathbf{E}=\mathcal{I}^-(\M^{(end)})\cap \mathcal{I}^+(\M^{(end)})$ the corresponding domain of outer communications, and $\mathcal{H}^-=\delta(\mathcal{I}^+(\M^{(end)})$ the boundary (event horizon) of the corresponding white hole\footnote{A similar statement can be made on  the future event horizon $\HH^+$.}. Let $\T=d/dt$ denote the stationary (timelike in $\M^{(end)}$) Killing vector-field  of $(\mathcal{K}(m,a),\g)$, and let $Z=d/d\phi$ denote its rotational (with closed orbits) Killing vector-field. The following theorem was proved in \cite{Ion-K3}.

\begin{theorem}\label{mkathm}
Assume that $0<a<M$ and $U_0\subseteq \mathcal{K}(m,a)$ is an open set such that $U_0\cap\mathcal{H}^-\cap\overline{\mathbf{E}}\neq\emptyset.$
 Then, 
\begin{enumerate}
\item  There is an open set $U\subseteq U_0$, $U\cap \mathcal{H}^-\neq\emptyset$, and a smooth Lorentz metric $\widetilde{\g}$ in $U$
such that,
\begin{equation}\label{mka1}
{}^{\widetilde{\g}}\mathbf{Ric}=0\,\,\text{ in }U,\qquad \Lie_\T{\widetilde{\g}}=0\,\,\text{ in }U,\qquad {\widetilde{\g}}=\g\,\,\text{ in }U\setminus\mathbf{E};
\end{equation}

\item The vector-field $Z=d/d\phi$ does not extend to a Killing vector-field for $\widetilde{\g}$, commuting with $\T$, in $U$.
\end{enumerate}
\end{theorem}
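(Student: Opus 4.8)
The plan is to construct $\widetilde\g$ by solving the vacuum Einstein equations, reduced by the symmetry $\T$, as a local characteristic problem near a point $p\in \mathcal H^-\cap\overline{\mathbf E}\cap U_0$, prescribing data that coincides with Kerr's on the white-hole side $U\setminus\mathbf E$ and a genuinely non-axisymmetric perturbation supported on the $\mathbf E$-side. Since $p$ lies in the closure of the ergoregion, $\T$ is spacelike near $p$, so after shrinking $U_0$ we may pass to the $3$-dimensional quotient $Q=U/\T$, on which $\g$ induces a Lorentzian metric. The conditions $\mathbf{Ric}(\widetilde\g)=0$ and $\Lie_\T\widetilde\g=0$ together are then equivalent to a quasilinear second-order \emph{hyperbolic} system on $Q$ for the reduced unknowns (the norm $V=\g(\T,\T)$, a twist potential, and the quotient metric $h$), for which Kerr furnishes a reference solution. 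Building $\widetilde\g$ as a $\T$-invariant metric from the outset is precisely what legitimizes this reduction and makes the middle identity $\Lie_\T\widetilde\g=0$ of \eqref{mka1} automatic.

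The essential geometric point is that $\mathcal H^-$ projects to a hypersurface $\Sigma\subset Q$ which is \emph{characteristic} for the reduced system, i.e. optical in the sense of \eqref{fail}. Indeed $\T$ is tangent to $\mathcal H^-$, so its null generator $\ell$ is a null geodesic tangent to the horizon with $\g(\ell,\T)=0$ (as $\ell$ is the null normal and $\T$ is tangent). Choosing a $\T$-invariant optical defining function $h$ for $O=U\setminus\mathbf E$ and $X=\ell$ in \eqref{qual1.1} gives $\D^2h(\ell,\ell)=0$, so strict null-convexity fails, and since $\ell\perp\T$ even the weaker $\T$-null convexity fails. Thus neither Theorem \ref{extthm0} nor Theorem \ref{extthm1} forces an extension of $\Z$ across $\mathcal H^-$; these trapped null geodesics perpendicular to $\T$ are exactly the obstruction being exploited. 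Concretely I would prescribe smooth characteristic data on $\mathcal H^-$, together with a transversal null hypersurface, agreeing with Kerr to all orders from the $U\setminus\mathbf E$ side, and perturb it by a compactly supported, $\T$-invariant but non-$\Z$-invariant (genuinely $\phi$-dependent) term whose domain of influence, using $\mathcal H^-$ as a null causal wall, meets only $\mathbf E$.

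Local existence for the characteristic initial value problem for the reduced hyperbolic system then yields a solution on a neighborhood $U$ of a piece of $\mathcal H^-$; lifting it through the $\T$-action produces a smooth, $\T$-invariant, Ricci-flat metric $\widetilde\g$. The domain-of-dependence property guarantees $\widetilde\g=\g$ on the whole region determined by the unperturbed data, in particular on $U\setminus\mathbf E$, establishing \eqref{mka1} and part (1). For part (2), suppose $\Z=d/d\phi$ extended to a Killing field $\widetilde\Z$ of $\widetilde\g$ on $U$ commuting with $\T$. Since $\widetilde\Z$ commutes with $\T$ it descends to a vectorfield $\widehat\Z$ on $Q$ that is an isometry of $h$ preserving the remaining reduced fields, hence a symmetry of the entire reduced solution; but a symmetry of the solution necessarily restricts to a symmetry of the induced characteristic data on $\Sigma$ and its transversal. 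As this data was chosen non-$\Z$-invariant on the $\mathbf E$-facing portion, we reach a contradiction, so $\Z$ does not extend.

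The main obstacle I expect is the bookkeeping of the symmetry reduction together with causality near the non-smooth corner where $\mathcal H^-$ meets the transversal hypersurface. One must verify that the reduced system stays genuinely hyperbolic (the quotient metric Lorentzian, i.e. $\T$ spacelike on all of $U$), that the perturbation can be chosen to solve the characteristic constraint equations while remaining non-$\Z$-invariant, and that its backward domain of influence is trapped inside $\mathbf E$ by the null wall $\mathcal H^-$, so that exact agreement with Kerr on all of $U\setminus\mathbf E$ survives. Controlling the mild singularity of $O$ along the characteristic $\mathcal H^-$ is the same difficulty addressed by the refined Carleman and energy estimates underlying Theorems \ref{Hawking1} and \ref{Hawking2}.
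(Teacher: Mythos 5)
Your proposal is correct and follows essentially the same route as the paper: a symmetry reduction by $\T$ to a $3$-dimensional Lorentzian/Ernst-type system (made hyperbolic precisely because $\T$ is spacelike in the ergoregion, i.e. $0<a<M$), a characteristic initial-value problem solved via Rendall's local existence with data equal to Kerr's on the $U\setminus\mathbf{E}$ side and genuinely $\phi$-dependent on the $\mathbf{E}$ side, reconstruction of the four-dimensional $\T$-invariant metric, and the observation that an extended $\Z$ commuting with $\T$ would force a symmetry of the non-axisymmetric data. The only cosmetic difference is that the paper works with the conformally rescaled reduced metric $h_{\al\be}=X\g_{\al\be}-\T_\al\T_\be$ on a transversal slice rather than the quotient metric itself.
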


In other words, one can modify the Kerr space-time smoothly, on one side of the horizon $\mathcal{H}^-$, in such a way that the resulting metric still satisfies the Einstein vacuum equations, has $\T=d/dt$ as a Killing vector-field, but does not admit an extension of the Killing vector-field $Z$. 
The crucial point here is    that the neighborhood under consideration         is away from the bifurcate sphere, where    theorems  \ref{Hawking1} - \ref{Hawking2} apply.         The     result illustrates one  of the major   difficulties  one  faces in trying to extend Hawking's rigidity result to the more realistic setting of  smooth
     stationary  solutions of the Einstein vacuum equations: unlike in the analytic situation, one cannot hope to construct an additional symmetry of stationary solutions of the Einstein-vacuum equations (as in Hawking's Rigidity Theorem) by relying only on the local information provided by the equations.
     
     The  proof   relies on  a symmetry  reduction induced by  the Killing  vectorfield   $\T$.   We  denote    the   fixed Kerr  metric  by $\g$
and      define   the  \textit{reduced}  metric
\begin{equation*}
h_{\al\be}=X\g_{\al\be}-\T_\al \T_\be,\qquad\text{ where }X=\g(\T,\T),
\end{equation*}
on a hypersurface $\Pi$ passing through the point $p$ and transversal to $\T$. The metric $h$ is nondegenerate (Lorentzian) as long as $X>0$ in $\Pi$, which explains our assumption $0<a<m$. It is well-known, see for example \cite[Section 3]{We}, that the Einstein vacuum equations together with stationarity $\mathcal{L}_\T\g=0$ are equivalent to the system of equations
\begin{equation}\label{brac}
\begin{split}
&{}^h \mbox{Ric}_{ab}=\frac{1}{2X^2}(\nabla_a X\nabla_b X+\nabla_a Y\nabla_b Y),\\
&{}^h\square (X+iY)=\frac{1}{X}h^{ab}\partial_a(X+iY)\partial_b(X+iY),
\end{split}
\end{equation}
in $\Pi$, where $X+i Y$ is the   complex      Ernst potential associated\footnote{See  section  \ref{sect:Ernst} }   to $\T$.

We then modify the metric $h$ and the functions $X$ and $Y$ in a neighborhood of the point $p$ in such a way that the identities \eqref{brac} are still satisfied. The existence of a large family of smooth triplets $(\widetilde{h},\widetilde{X},\widetilde{Y})$ satisfying \eqref{brac} and agreeing with the Kerr triplet in $\Pi\setminus\mathbf{E}$ follows by  a classic local   existence result,  solving a characteristic initial-value problem, using, for example,  the main  existence  result  in \cite{Re}. 

 One can then  we construct the     new  space-time metric $\widetilde{\g}$,
\begin{equation*}
\widetilde{\g}_{ab}=\widetilde{X}^{-1}\widetilde{h}_{ab}+\widetilde{X}\widetilde{A}_a\widetilde{A}_b,\qquad \widetilde{\g}_{a4}=\widetilde{X}\widetilde{A}_a,\qquad\widetilde{\g}_{44}=\widetilde{X},\qquad a,b=1,2,3,
\end{equation*}
 associated to the triplet $(\widetilde{h},\widetilde{X},\widetilde{Y})$, the vector-field $\T=\partial_4$, and a suitable $1$-form $\widetilde{A}$ which is defined in $\Pi$. By construction    (see  \cite[Theorem 1]{We})   this metric verifies the identities ${}^{\widetilde{\g}}\mathbf{Ric}=0$ and $\Lie_\T\widetilde{\g}=0$, in a suitable open set $U$.  Finally  one can  show that we have enough flexibility to choose initial conditions for $\widetilde{X},\widetilde{Y}$ such that the vector-field $Z$ cannot be extended as a Killing vector-field for $\widetilde{\g}$ commuting with $\T$, in the open set $U$.

\begin{remark} 
Note that the construction     of the extended metric      in theorem  \ref{mkathm}  relies  in an essential way on     the  fact  that  $\neq 0$  to  allow for a non-trivial ergo-region
 near the horizon where $\T$ is space-like.  No such result is known for $a=0$.
\end{remark}

\section{Mars-Simon tensor and global results}

\subsection{Killing vector-fields }
\label{sect:Ernst}

 In what follows we    consider $1+3$   dimensional Lorentzian manifolds
    endowed  with a Killing vectorfield  $\K$, i.e.
\bea
\D_\al\K_\be+\D_\be\K_\al=0\label{xi-killing}
\eea
We define the 2-form,
\beaa
\F_{\a\b}=\D_\al\K_\be=\frac 1 2 \big(\D_\a \K_\b-\D_\b \K_\a)
\eeaa
 as well as its Hodge dual,
 \beaa
 \dual F_{\a\b}&=&\frac 1 2 \in_{\a\b}\, ^{\mu\nu} F_{\mu\nu}
 \eeaa 
 Note that  $\dual( \dual F)=-F$.   We also define    the left and right  Hodge duals of
 the curvature tensor,
 \beaa
 \dual \R_{\a\b\ga\de} &=& \frac 1 2  \in_{\a\b}\,^{\mu\nu} \R_{\mu\nu\ga \de}, 
 \qquad  \R^*_{\a\b\ga\de} =\frac 1 2 \R_{\a\b\mu\nu}\in^{\mu\nu}\,_{\ga\de}
 \eeaa
 and note that    for a vacuum manifold, i.e.  {\bf Ric}$(\g)=0$, we have $\dual \R=\R^*$,  $\dual(\dual \R)=-\R$.
 We   also define the complex tensors, 
 \beaa
 \FF_{\a\b}= F_{\a\b}+i \dual F_{ab},\qquad 
 \RR_{\a\b\ga\de}=\R_{\a\b\ga\de}+\dual \R_{\a\b\ga\de}
 \eeaa
 Note  that  $\RR$ verifies all  the symmetries of the  curvature tensor as well  
 as\footnote{i.e. $\RR$ is a complex  valued  Weyl tensor.} $ \g^{\a\ga} \R_{\a\b\ga\de}=0$.
 Note also that both   $\FF$ and $\RR$ are self dual i.e $\dual \FF=-i \FF, \ \dual \RR=-i \RR$.

    We recall the following well known,
\begin{lemma}
\label{le:comm-K1}
For all tensor-fields  $\U$ in $\M$, if $\K$ is Killing we  have,
\bea
\D_\mu \D_\a \K_\b&=&\R_{ \la \mu\a\b}  \K^\la           \label{eq:Ricci}\\
\, [\Lie_\K, \D] \U&=&0\label{eq:com-K}
\eea
In particular, if $(\M, \g)$  has vanishing Ricci curvature then,
\beaa
\D_\mu \FF_{\a\b}&=& \R_{ \la \mu\a\b}  \K^\la  
\eeaa

\end{lemma}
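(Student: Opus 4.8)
The plan is to establish each identity in Lemma~\ref{le:comm-K1} directly from the Killing equation \eqref{xi-killing} and the defining relations for $\FF$, $\RR$, working in the Ricci-flat case only where stated. First I would prove \eqref{eq:Ricci}, the second-covariant-derivative formula $\D_\mu\D_\a\K_\be=\R_{\la\mu\a\be}\K^\la$. The standard route is to start from the Ricci identity for the vectorfield $\K$,
\beaa
\D_\mu\D_\a\K_\be-\D_\a\D_\mu\K_\be=\R_{\mu\a\be}{}^\la\K_\la,
\eeaa
and then exploit the antisymmetry $\D_\a\K_\be=-\D_\be\K_\a$ coming from \eqref{xi-killing}. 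Cyclically permuting the three lower indices $\mu,\a,\be$ in the Ricci identity and adding the three relations with appropriate signs, while repeatedly substituting the Killing antisymmetry, collapses the left-hand side to a single term $2\D_\mu\D_\a\K_\be$ and the right-hand side to a sum of curvature terms that reduces, via the first Bianchi identity, to the claimed $\R_{\la\mu\a\be}\K^\la$. This is the classical computation showing that the second derivatives of a Killing field are algebraically determined by the curvature and the field itself.

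Next I would dispatch \eqref{eq:com-K}, the statement that $\Lie_\K$ commutes with $\D$ on arbitrary tensorfields when $\K$ is Killing. The cleanest argument invokes Lemma~\ref{le:commute} with $X=\K$: that lemma gives the commutator $\D_\be(\Lie_\K V)-\Lie_\K(\D_\be V)$ in terms of the tensor $\GaX_{\a_j\be\rho}$ built from $\piX=\Lie_\K\g$. But for a Killing field the deformation tensor vanishes identically, $\,^{(\K)}\pi=\Lie_\K\g=0$, so every $\,^{(\K)}\Ga$ term is zero and the commutator vanishes. Thus \eqref{eq:com-K} is an immediate corollary of the already-stated Lemma~\ref{le:commute}, which makes this the shortest of the three steps.

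Finally I would prove the last formula $\D_\mu\FF_{\a\be}=\R_{\la\mu\a\be}\K^\la$ in the vacuum case. Since $\FF_{\a\be}=F_{\a\be}+i\,\dual F_{\a\be}$ with $F_{\a\be}=\D_\a\K_\be$, I would apply \eqref{eq:Ricci} to the real part directly, giving $\D_\mu F_{\a\be}=\R_{\la\mu\a\be}\K^\la$. For the imaginary part I would compute $\D_\mu\dual F_{\a\be}=\frac12\in_{\a\be}{}^{\rho\si}\D_\mu F_{\rho\si}=\frac12\in_{\a\be}{}^{\rho\si}\R_{\la\mu\rho\si}\K^\la$, which by the definition of the left Hodge dual of $\R$ equals $\dual\R_{\mu\a\be}{}^{\phantom{}}\K^\la$-type terms, and then use the vacuum identity $\dual\R=\R^*$ together with the symmetries of $\RR=\R+\dual\R$ to recombine. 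The point is that in vacuum the curvature term $\R_{\la\mu\a\be}\K^\la$ is already self-dual in the $\a\be$ slot, so adding $i$ times its dual reproduces $\R_{\la\mu\a\be}\K^\la$ with the complex structure absorbed into $\RR$; the bookkeeping with the $\in$-tensor and the sign conventions $\dual(\dual F)=-F$, $\dual(\dual\R)=-\R$ is the only delicate part.

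The main obstacle I anticipate is the index gymnastics in the third step: correctly tracking the Hodge-dual contractions so that $\frac12\in_{\a\be}{}^{\rho\si}\R_{\la\mu\rho\si}$ is recognized as the relevant dual of $\R$, and verifying that the vacuum condition ${\bf Ric}(\g)=0$ is exactly what forces $\dual\R=\R^*$ and hence the self-duality that makes $\FF$ and $\RR$ pair up cleanly. The first step's cyclic-permutation argument is classical but requires care with signs; once \eqref{eq:Ricci} is in hand, \eqref{eq:com-K} is free from Lemma~\ref{le:commute} and the complexified identity follows by a short self-duality computation.
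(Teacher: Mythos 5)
The paper offers no proof of this lemma at all---it is introduced with ``We recall the following well known''---so there is nothing internal to compare against; your argument is the standard one and is essentially correct. Step one (Ricci identity, cyclic permutation over $\mu,\a,\b$, the Killing antisymmetry $\D_\a\K_\b=-\D_\b\K_\a$, and the first Bianchi identity) is the classical derivation and does yield $\D_\mu\D_\a\K_\b=\R_{\la\mu\a\b}\K^\la$; step two is indeed immediate from Lemma \ref{le:commute} since $\Lie_\K\g=0$ kills every $\,^{(\K)}\Ga$ term. The one point to fix is in your third step: the tensor $\R_{\la\mu\a\b}\K^\la$ is \emph{not} self-dual in the $(\a\b)$ slot, and a real tensor cannot equal $\D_\mu\FF_{\a\b}$, whose imaginary part is $\D_\mu\dual F_{\a\b}\neq 0$. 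What your computation actually shows is $\D_\mu F_{\a\b}=\R_{\la\mu\a\b}\K^\la$ and $\D_\mu\dual F_{\a\b}=\R^*_{\la\mu\a\b}\K^\la$, whence in vacuum ($\dual\R=\R^*$) one gets $\D_\mu\FF_{\a\b}=\RR_{\la\mu\a\b}\K^\la$ with the \emph{complex} curvature $\RR=\R+i\,\dual\R$ on the right. The lemma as printed, with plain $\R$, is a typo: the appendix (proof of Proposition \ref{prop:BianchiS}) uses exactly $\D_\a\FF_{\b\ga}=\T^\la\RR_{\la\a\b\ga}$. So state and prove the identity with $\RR$, and your argument goes through.
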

\begin{corollary}
If $(\M, \g)$  is a vacuum Lorentzian manifold  endowed with a Killing vectorfield $\K$  we 
have,
\beaa
\D_{[\mu}\FF_{\a\b]}=\D_\mu \FF _{\a\b }+\D_\a \FF_{\b\mu}+\D_\b \FF_{\mu\a}=0, \qquad \D^\b \FF_{\a\b}=0.
\eeaa
\end{corollary}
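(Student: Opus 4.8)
The plan is to read off both identities from the single pointwise relation in Lemma~\ref{le:comm-K1}, thereby reducing each of them to a purely algebraic property of the self-dual Weyl tensor $\RR$. First I would record the master identity
\[
\D_\mu \FF_{\a\b}=\RR_{\a\b\la\mu}\,\K^\la .
\]
Its real part is the Killing--Ricci identity \eqref{eq:Ricci}, $\D_\mu F_{\a\b}=\R_{\la\mu\a\b}\K^\la=\R_{\a\b\la\mu}\K^\la$; its imaginary part is obtained by applying $\D_\mu$ to $\dual F_{\a\b}=\f12\in_{\a\b}{}^{\rho\si}F_{\rho\si}$, using that the volume form is parallel, which gives $\D_\mu\dual F_{\a\b}=\dual\R_{\a\b\la\mu}\K^\la$. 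Adding, and invoking the vacuum relation $\dual\R=\R^*$ that makes $\RR$ a genuine trace-free, self-dual complex Weyl tensor obeying all the Riemann symmetries, yields the displayed identity. From here the Ricci-flat hypothesis enters only through these algebraic properties of $\RR$.

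For the cyclic relation I would substitute the master identity into the three terms and then use the pair symmetry $\RR_{\a\b\la\mu}=\RR_{\la\mu\a\b}$ to push the contracted index into the first slot:
\[
\D_\mu\FF_{\a\b}+\D_\a\FF_{\b\mu}+\D_\b\FF_{\mu\a}
=\big(\RR_{\la\mu\a\b}+\RR_{\la\a\b\mu}+\RR_{\la\b\mu\a}\big)\K^\la .
\]
The bracket is the cyclic sum over the last three indices, namely $3\,\RR_{\la[\mu\a\b]}$, and it vanishes by the first (algebraic) Bianchi identity for $\RR$. Hence $\D_{[\mu}\FF_{\a\b]}=0$.

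For the divergence I would contract the master identity with $\g^{\b\mu}$ and exploit the antisymmetry in each index pair, $\RR_{\a\b\la\mu}=\RR_{\b\a\mu\la}$:
\[
\D^\b\FF_{\a\b}=\g^{\b\mu}\RR_{\a\b\la\mu}\K^\la=\g^{\b\mu}\RR_{\b\a\mu\la}\K^\la .
\]
The right-hand side is the $(1,3)$-trace of $\RR$ contracted against $\K$, which vanishes because $\RR$ is trace-free, $\g^{\a\ga}\RR_{\a\b\ga\de}=0$. Thus $\D^\b\FF_{\a\b}=0$.

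I do not expect a serious obstacle: the statement is a short algebraic corollary of Lemma~\ref{le:comm-K1}, and the only delicate points are the index bookkeeping and the honest use of the Ricci-flat hypothesis, which is exactly what promotes the full Riemann tensor to the trace-free, self-dual $\RR$ and, in particular, makes $\dual\R$ inherit both the first Bianchi identity and the trace-free property. Conceptually the two identities are one and the same: since $\FF$ is self-dual ($\dual\FF=-i\FF$), closedness $\D_{[\mu}\FF_{\a\b]}=0$ and co-closedness $\D^\b\FF_{\a\b}=0$ are equivalent, and both amount to $F=\f12\,d\K$ being a harmonic two-form: automatically closed, and co-closed because $\D^\b F_{\a\b}=\mathbf{Ric}_{\a\la}\K^\la=0$ in vacuum.
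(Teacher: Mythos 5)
Your argument is correct and is exactly the route the paper intends: the corollary is stated as an immediate consequence of Lemma~\ref{le:comm-K1}, with the cyclic identity following from the first Bianchi identity for $\RR$ and the divergence identity from the trace-free property $\g^{\a\ga}\RR_{\a\b\ga\de}=0$, both of which hold precisely because Ricci-flatness makes $\RR$ a complex self-dual Weyl field. Your closing observation that self-duality of $\FF$ makes the two identities equivalent is a nice bonus consistent with the paper's framework.
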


We now  define the complex valued 2-form,
\begin{equation}\label{s1}
\FF_{\al\be}=\F_{\al\be}+i {\dual \F}_{\al\be}.
\end{equation}
Clearly, $\FF$ is self-dual solution of the Maxwell equations,  i.e. $ \FF\dual=(-i) \FF$ 
and
\bea
\D_{[\mu} \FF_{\al\be]}=0\label{eq:Max1'},\quad \D^\b \FF_{\a\b}=0.\label{eq:Max2'}
\eea
We define also the Ernst $1$-form associated to the Killing vector-field  $\K$,
\bea
\si_\mu&=&2(i_\K \F_\a)=2\K^\a\FF_{\a\mu}=\D_\mu(-\K^\al \K_\al)- i\in_{\mu\b\ga\de}\K^\b \D^\ga\K^\de.
\eea
\begin{proposition} The following are true,
\begin{equation}\label{Ernst1}
\begin{cases}
&\D_\mu\si_\nu-\D_\nu\si_\mu=0;\\
&\D^\mu\si_\mu=-\FF^2;\\
&\si_\mu\si^\mu=g(\K,\K) \FF^2.
\end{cases}
\end{equation}
\end{proposition}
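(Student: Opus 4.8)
The plan is to read all three identities off the structure of the self-dual $2$-form $\FF=\F+i\dual\F$, using only the differential facts already recorded for it --- the Maxwell system $\D_{[\mu}\FF_{\a\be]}=0$, $\D^\be\FF_{\a\be}=0$ and the Killing identity $\D_\mu\FF_{\a\be}=\R_{\la\mu\a\be}\K^\la$ (Lemma \ref{le:comm-K1}, valid since $(\M,\g)$ is Ricci flat) --- together with the pointwise algebra of self-dual forms. Throughout I write $\si_\mu=2\K^\a\FF_{\a\mu}$ and use $\D_\mu\K^\a=\F_\mu{}^\a$, which holds because $\F_{\mu\a}=\D_\mu\K_\a$ is antisymmetric. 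The algebraic inputs I would isolate first are the standard four-dimensional identities: $\F_\mu{}^\a\F_{\a\nu}$ is symmetric in $\mu,\nu$, the contraction $\F_\mu{}^\a\dual\F_{\a\nu}=-\tfrac14\g_{\mu\nu}\,\F_{\ga\de}\dual\F^{\ga\de}$ is pure trace, and, for the self-dual $\FF$,
\begin{equation*}
\FF_{\a\mu}\FF_\be{}^\mu=\tfrac14\g_{\a\be}\FF^2,\qquad \F^{\a\be}\FF_{\a\be}=\tfrac12\FF^2,\qquad \FF^2:=\FF_{\a\be}\FF^{\a\be}.
\end{equation*}
These express the dual-invariance and vanishing trace of the Maxwell stress tensor and the orthogonality of the self-dual and anti-self-dual parts ($2\F=\FF+\bar\FF$ with $\bar\FF^{\a\be}\FF_{\a\be}=0$). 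In particular $\F_\mu{}^\a\FF_{\a\nu}=\F_\mu{}^\a\F_{\a\nu}+i\F_\mu{}^\a\dual\F_{\a\nu}$ is symmetric in $\mu,\nu$.

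For the first identity I would differentiate $\si_\nu=2\K^\a\FF_{\a\nu}$ to get $\D_\mu\si_\nu=2\F_\mu{}^\a\FF_{\a\nu}+2\K^\a\D_\mu\FF_{\a\nu}$ and antisymmetrize in $\mu,\nu$. The first term drops out by the symmetry of $\F_\mu{}^\a\FF_{\a\nu}$ just noted. For the second term the cyclic Maxwell relation $\D_{[\mu}\FF_{\a\nu]}=0$ together with the antisymmetry of $\FF$ gives $\D_\mu\FF_{\a\nu}-\D_\nu\FF_{\a\mu}=\D_\a\FF_{\mu\nu}$, so it equals $2\K^\a\D_\a\FF_{\mu\nu}=2\K^\a\K^\la\R_{\la\a\mu\nu}$ by the Killing identity; this vanishes because the curvature is antisymmetric in its first two indices $\la,\a$ while $\K^\a\K^\la$ is symmetric. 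Hence $\D_\mu\si_\nu-\D_\nu\si_\mu=0$.

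For the second identity I would take the divergence, $\D^\mu\si_\mu=2\F^{\mu\a}\FF_{\a\mu}+2\K^\a\D^\mu\FF_{\a\mu}$: the last term vanishes by the co-closed equation $\D^\mu\FF_{\a\mu}=0$, and $2\F^{\mu\a}\FF_{\a\mu}=-2\F^{\mu\a}\FF_{\mu\a}=-\FF^2$ by the contraction identity above. For the third identity I would simply compute $\si_\mu\si^\mu=4\K^\a\K^\be\FF_{\a\mu}\FF_\be{}^\mu=4\K^\a\K^\be\cdot\tfrac14\g_{\a\be}\FF^2=\g(\K,\K)\FF^2$, using the self-dual square identity.

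The only genuinely substantive point is the pointwise self-dual algebra; everything differential reduces to the Killing identity and the Maxwell system. I expect the care to be needed in verifying the symmetry and pure-trace statements for $\FF$ (equivalently, that its complexified Maxwell stress tensor $\FF_{\a\mu}\FF_\be{}^\mu-\tfrac14\g_{\a\be}\FF^2$ vanishes), since these are precisely what force the curl, divergence, and norm of $\si$ into their stated closed forms.
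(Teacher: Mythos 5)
Your proof is correct and all three identities go through as you describe; the inputs (the Maxwell system for $\FF$, the Killing identity of Lemma \ref{le:comm-K1}, and the pointwise algebra of self-dual $2$-forms in four dimensions) are exactly those the paper uses, but the bookkeeping differs in two places. For the curl identity, the paper groups the terms of $\D_\mu\si_\nu-\D_\nu\si_\mu$ so as to recognize the whole expression as $-2\Lie_\K\FF_{\mu\nu}$, which vanishes because $\FF$ is built naturally out of $\K$ and $\g$ and $\K$ is Killing; you instead kill the two groups of terms separately --- the $\D\K\cdot\FF$ terms by the symmetry of $\F_\mu{}^\a\FF_{\a\nu}$ in $\mu,\nu$, and the $\K^\a\D_\a\FF_{\mu\nu}$ term by the antisymmetry of the curvature in its first index pair. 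This is the same computation packaged differently, and is arguably more self-contained since it does not require separately justifying $\Lie_\K\FF=0$. For the norm identity the paper routes through the decomposition Lemma \ref{le:decomp}, expressing $\g(\K,\K)\FF_{\a\b}$ in terms of $i_\K\FF$ and then squaring, whereas you contract $\si_\mu\si^\mu$ directly against the self-dual square identity $\FF_{\a\mu}\FF_\b{}^\mu=\tfrac14\g_{\a\b}\FF^2$; these are equivalent pieces of self-dual algebra and your version is shorter. The divergence identity is handled identically in both. The point you flag as the only substantive one --- verifying the symmetry and pure-trace statements for the complexified stress tensor of $\FF$, equivalently $\FF_{\a\mu}\FF_\b{}^\mu-\tfrac14\g_{\a\b}\FF^2=0$ --- is indeed where all the content sits, and it is standard four-dimensional $2$-form algebra consistent with the paper's convention $\dual(\dual\F)=-\F$.
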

\begin{proof}
We have,
\beaa
2^{-1}\big(\D_\mu\si_\nu-\D_\nu\si_\mu\big)&=&\K^\a\big(\D_\nu \FF_{\a\mu}-\D_\mu\FF_{\a\nu}\big)+ \D_\nu\K^\a\FF_{\a\mu}             -\D_\mu\K^\a \FF_{\a\nu}\\
&=&-\K^\a \D_\a \FF_{\mu\nu} -  \D_\mu\K^\a \FF_{\a\nu}-\D_\nu\K^\a\FF_{\mu\a} =-\Lie_\K \FF_{\mu\nu}\\
&=&0.
\eeaa
Also,
\beaa
2^{-1}\D^\mu\si_\mu&=&\K^\a \D^\mu  \FF_{\a\mu}+\D^\mu \K^\a \FF_{\a\mu}=
-F^{\a\mu}\FF_{\a\mu}=-2^{-1}\FF^2
\eeaa
The last formula    in \eqref{Ernst1}   follows easily from the lemma below.
\end{proof}
 \begin{lemma}
 \label{le:decomp}
  Introduce    the decomposition
  \beaa
  i_\K( \F)_\a=\K^\mu \F_{\mu\a},\quad    i_\K( \dual \F)_\a=\K^\mu \dual \F_{\mu\a},\quad   i_\K(\FF)_\a=\K^\mu\FF_{\mu\a}.
  \eeaa
   Clearly, $i_\K (\F), i_\K(\dual \F)$,  $i_K(\FF) $ are orthogonal to $K$
   and,
    \bea
 g(\K, \K) \F_{\a\b}=\K_\a\, i_\K(\F)_\b-\K_\b\, i_\K( \F)_\a+ \in_{\a\b\mu\nu} \K^\mu \,i_\K(\dual \F)^\nu
\eea
Also,
 \bea
 g(\K, \K) \FF_{\a\b}=\K_\a\,  i_\K(\FF)_\b-\K_\b\,  i_\K( \FF)_\a-i \in_{\a\b\mu\nu} \K^\mu\,  i_\K(\FF)^\nu
 \label{decomp-F}
\eea
In particular,
\bea
 g(\K, \K) \FF^2&=&4i_\K(\FF)^\mu  i_\K(\FF)_\mu=\si^\mu\si_\mu
 \eea
\end{lemma}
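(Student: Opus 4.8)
The orthogonality statements are immediate and require no hypothesis beyond antisymmetry: $\K^\a i_\K(\F)_\a = \K^\a\K^\mu\F_{\mu\a} = 0$ since $\K^\a\K^\mu$ is symmetric while $\F_{\mu\a}$ is antisymmetric, and the identical argument applies to $\dual\F$ and to $\FF$. The plan is then to prove the reconstruction formula for the real two-form $\F$ first, and to obtain the formula for $\FF$ and the final scalar identity as formal consequences.

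For the $\F$-identity I would stress at the outset that it is a purely pointwise algebraic statement in four dimensions, using neither that $\K$ is Killing nor the field equations. I would prove it in one of two equivalent ways. The direct route inserts $i_\K(\dual\F)^\nu = \K_\rho\dual\F^{\rho\nu} = \frac12\K_\rho\in^{\rho\nu}{}_{\ka\la}\F^{\ka\la}$ into the last term and expands $\in_{\a\b\mu\nu}\in^{\rho\nu}{}_{\ka\la}$ by the four-dimensional $\in$--$\delta$ contraction identity (a single contracted index, giving a $3\times 3$ determinant of Kronecker symbols); contracting the six resulting terms against $\K^\mu\K_\rho\F^{\ka\la}$ and using $\g(\K,\K)=\K^\mu\K_\mu$ and $i_\K(\F)_\a=\K^\mu\F_{\mu\a}$ collapses them to exactly $\g(\K,\K)\F_{\a\b}-\K_\a i_\K(\F)_\b+\K_\b i_\K(\F)_\a$. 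The conceptual route, which I find cleaner to organize, uses that $\K$ is non-null on the region at hand: set $\F^{(1)}_{\a\b}=\g(\K,\K)^{-1}(\K_\a i_\K(\F)_\b-\K_\b i_\K(\F)_\a)$ and verify, using $\K^\a i_\K(\F)_\a=0$, that $i_\K(\F-\F^{(1)})=0$; a two-form annihilated by $i_\K$ lies in $\Lambda^2(\K^\perp)$ and is therefore forced to have the Hodge form $\g(\K,\K)^{-1}\in_{\a\b\mu\nu}\K^\mu C^\nu$, and the relation $i_\K\dual\F^{(1)}=0$---which holds because $\F^{(1)}$ is proportional to $\K^\flat\wedge i_\K(\F)$ and $i_\K\dual\omega = \pm\dual(\omega\wedge\K^\flat)$ vanishes once $\omega\wedge\K^\flat=0$---identifies $C=i_\K(\dual\F)$. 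The one genuine obstacle in either route is the careful bookkeeping of orientation and signature in the $\in$--$\delta$ and Hodge-contraction identities; the signs displayed in the lemma are exactly those forced by the convention $\dual\F_{\a\b}=\frac12\in_{\a\b}{}^{\mu\nu}\F_{\mu\nu}$ together with the ambient orientation fixed earlier.

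The $\FF$-identity \eqref{decomp-F} then follows with no further geometry. The reconstruction formula is linear in its two-form argument, so it applies verbatim to the complex two-form $\FF$; since $\FF$ is self-dual, $\dual\FF=-i\FF$, we have $i_\K(\dual\FF)^\nu=-i\,i_\K(\FF)^\nu$, and substituting this into the complexified $\F$-formula turns the term $\in_{\a\b\mu\nu}\K^\mu i_\K(\dual\F)^\nu$ into $-i\in_{\a\b\mu\nu}\K^\mu i_\K(\FF)^\nu$, which is precisely the asserted expression.

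Finally, for the scalar identity I would contract \eqref{decomp-F} with $\FF^{\a\b}$. Writing $w_\a:=i_\K(\FF)_\a$ and using $\K_\a\FF^{\a\b}=w^\b$ together with $\K_\b\FF^{\a\b}=-w^\a$, the two algebraic terms give $(\K_\a w_\b-\K_\b w_\a)\FF^{\a\b}=2\,w_\mu w^\mu$. For the remaining term I would use self-duality in the form $\in_{\a\b\mu\nu}\FF^{\a\b}=2\dual\FF_{\mu\nu}=-2i\FF_{\mu\nu}$ to reduce $-i\in_{\a\b\mu\nu}\K^\mu w^\nu\FF^{\a\b}$ to a further multiple of $w_\mu w^\mu$; consistently with the sign conventions above the two contributions add rather than cancel, yielding $\g(\K,\K)\FF^2=4\,i_\K(\FF)^\mu i_\K(\FF)_\mu$. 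The concluding equality $4\,i_\K(\FF)^\mu i_\K(\FF)_\mu=\si^\mu\si_\mu$ is then nothing but the definition $\si_\mu=2\,i_\K(\FF)_\mu$ recorded above.
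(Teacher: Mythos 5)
Your overall plan is the standard one and, since the paper states this lemma without giving any proof, there is no ``paper's route'' to compare against: the orthogonality claims are indeed pure antisymmetry, the real decomposition is a pointwise algebraic identity in four dimensions provable either by the $\in$--$\de$ contraction or by your ``$i_\K$ annihilates the difference'' argument, the complex formula follows from it by linearity together with $\dual\FF=-i\FF$, and the scalar identity is obtained by contracting with $\FF^{\a\b}$. All of that is the right architecture.

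The genuine gap sits in your last paragraph, precisely at the sentence ``consistently with the sign conventions above the two contributions add rather than cancel.'' If you take the displayed signs at face value, they cancel. With the paper's conventions one has $\in_{\a\b\mu\nu}\FF^{\a\b}=2\,\dual\FF_{\mu\nu}=-2i\,\FF_{\mu\nu}$, so writing $w_\a=i_\K(\FF)_\a$, the contraction of \eqref{decomp-F} as printed gives $(\K_\a w_\b-\K_\b w_\a)\FF^{\a\b}=2\,w^\mu w_\mu$ from the first two terms and $-i\in_{\a\b\mu\nu}\K^\mu w^\nu\FF^{\a\b}=(-i)(-2i)\,\K^\mu w^\nu\FF_{\mu\nu}=-2\,w^\mu w_\mu$ from the third, hence $\g(\K,\K)\FF^2=0$, which is absurd (the identity $\g(\K,\K)\FF^2=\si^\mu\si_\mu$ is correct and easily checked in Kerr). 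The resolution is that, under the stated conventions $\dual F_{\a\b}=\frac12\in_{\a\b}{}^{\mu\nu}F_{\mu\nu}$ and $\dual(\dual F)=-F$, the decomposition must carry the opposite sign on the $\in$-term: $\g(\K,\K)\F_{\a\b}=\K_\a i_\K(\F)_\b-\K_\b i_\K(\F)_\a-\in_{\a\b\mu\nu}\K^\mu i_\K(\dual\F)^\nu$, equivalently $+\,i\in_{\a\b\mu\nu}\K^\mu i_\K(\FF)^\nu$ in the self-dual version. One sees this without the full $\in$--$\de$ expansion by applying $\dual$ and then $i_\K$ to the candidate real formula $\g(\K,\K)\F=\K\wedge i_\K(\F)+c\in(\K,i_\K(\dual\F))$: the left side produces $\g(\K,\K)\,i_\K(\dual\F)$ while the right side produces $-c\,\g(\K,\K)\,i_\K(\dual\F)$, because $\dual\dual=-1$ and $i_\K\in(\K,\cdot)=0$; hence $c=-1$. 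With the corrected sign the contraction with $\FF^{\a\b}$ yields $2w^2+2w^2=4\,i_\K(\FF)^\mu i_\K(\FF)_\mu=\si^\mu\si_\mu$ as required. So you cannot defer to ``the signs displayed in the lemma are exactly those forced by the convention'' --- they are not, and the contraction computation you only sketch has to be carried out; done correctly it both fixes the sign of the $\in$-terms and delivers the scalar identity, which is the part of the lemma the paper actually uses.
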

\begin{remark}
Since $d(\si_\mu dx^\mu)=0$,  if $\M$ is simply connected,  we infer that there exists a function $\si:\M \to\mathbb{C}$, called the Ernst potential, such that $ \si_\mu =\D_\mu\si$.  Note also  that    $\D_\mu\g(\K, \K)=2 \F_{\mu\la} \K^\la = - \Re\si_\mu $.   Hence 
 we can choose  the potential $\si$  such that,
 \bea
 \Re\si=-\g(\K, \K).
 \eea
 Moreover, if $(\M, \g)$ is asymptotically flat, we can choose $\si=1$   at space like infinity. 
\end{remark}
As a corollary of the lemma we also deduce,
\bea
\square \si&=& -\g(\K, \K)^{-1} \D_\mu\si \D^\mu \si \label{WM-si}
\eea
or, writing $\si=-f-i  f^* $  we deduce,
\bea
\square f&=&f^{-1}\big(\D^\mu f \D_\mu f-\D^\mu f^* \D_\mu  f ^*)\label{WM-f}\\
\square f^* &=& f^{-1} \D^\mu  f\,\D_\mu  f ^* \label{WM-f*}
\eea
In other words the pair $(x=f,  y=f^*)$ defines, whenever $f=g(K,K)\neq 0$,
a wave map  to the Poincar\'e  plane $\H:=\{(x, y)/  x>0\}$ with metric,
\beaa
ds^2=\frac{dx^2+dy^2}{x^2}
\eeaa

\subsection{Stationary Vacuum Spacetimes} 
\label{sect:stationary}    We consider  vacuum,  asymptotically flat,    $1+3$  dimensional     spacetimes    which are stationary, i.e. they possess 
 a  smooth, non degenerate, Killing  vectorfield   $\T$ which is timelike in the   asymptotic region (i.e. a neighborhood of null infinity).  More precisely we make the following assumptions:
 \begin{enumerate}
 
\item  (Asymptotic flatness.)  We assume that there is an open subset $\M^{(end)}$ of $\M$ which is diffeomorphic to $\mathbb{R}\times(\{x\in\mathbb{R}^3:|x|>R\})$ for some $R$ sufficiently large. In local coordinates $\{t,x^i\}$ defined by this diffeomorphism, we assume that, with $r=\sqrt{(x^1)^2+(x^2)^2+(x^3)^2}$,
\begin{equation}\label{As-Flat}
\g_{00}=-1+\frac{2M}{r}+O(r^{-2}),\quad \g_{ij}=\delta_{ij}+O(r^{-1}),\quad\g_{0i}=-\ep_{ijk}\frac{2S^jx^k}{r^3}+O(r^{-3}),
\end{equation}
for some   constants   $M>0, S_1, S_2, S_3$ such that   $J=[(S^1)^2+(S^2)^2+(S^3)^2]^{1/2}\in[0,M^2).$ We also assume   $\quad \T=\pr_t$ with  $t=x^0$.
We define the domain of outer communication (exterior region)
\begin{equation*}
\E=\II^{-}(\M^{(end)})\cap\II^{+}(\M^{(end)}).
\end{equation*}
  \item  (Completeness.)           We also  assume  that  $\E$ is globally hyperbolic    and      every orbit of $\T$ in $\E$ is complete and intersects  a given  spacelike   Cauchy  hypersurface   $\Sigma_0$.   We also  assume, for convenience,   that  $\Si_0$    is diffeomorphic to $\{x\in\mathbb{R}^3:|x|>1/2\}$ and agrees with the hypersurface corresponding to $t=0$  in $\M^{(end)}$.

\item  (Smooth bifurcate sphere.) Let
$
S_0=\pr(\II^{-}(\M^{(end)}))\cap\pr(\II^+(\M^{(end)})).
$
We assume that $S_0\subseteq\Sigma_0$ and $S_0$ is an imbedded $2$-sphere which agrees with the sphere of radius $1$ in $\mathbb{R}^3$ under the identification of $\Sigma_0$ with $\{x\in\mathbb{R}^3:|x|>1/2\}$. Furthermore, we assume that there is a neighborhood $\mathbf{O}$ of $S_0$ in $\mathbf{M}$ such that the sets
\begin{equation*}
\HH^+=\mathbf{O}\cap \pr(\II^{-}(\M^{(end)})\quad \text{ and }\quad \HH^-=\mathbf{O}\cap \pr(\II^{+}(\M^{(end)})
\end{equation*}
are smooth imbedded hypersurfaces diffeomorphic to $S_0\times(-1,1)$, We assume that these hypersurfaces are null, non-expanding\footnote{A null hypersurface is said to be non-expanding if the trace of its null second fundamental form vanishes identically.}, and intersect transversally in $S_0$.
\medskip
\item (Tangency at Horizon)        Finally, we assume that the vector-field $\T$ is tangent to both hypersurfaces $\HH^+=\mathbf{O}\cap \delta(\II^{-}(\M^{(end)}))$ and $\HH^-=\mathbf{O}\cap \delta(\II^{+}(\M^{(end)}))$.

\end{enumerate}
\begin{definition}
A space-time verifying the above assumptions   will be   called a  regular, nondegenerate  stationary  vacuum spacetime. 
\end{definition}  
\begin{remark}
Note that  the definition pre-supposes    the presence of a  unique connected horizon.   
\end{remark}

\subsection{Kerr spacetime}
In Boyer-Lindquist coordinates the Kerr metric 
takes the form,
\bea
ds^2=-\frac{q^2\Delta}{\Sigma^2}(dt)^2+\frac{\Sigma^2(\sin\theta)^2}{q^2}\Big(d\phi-\frac{2aMr}{\Sigma^2}dt\Big)^2+\frac{\rh^2}{\Delta}(dr)^2+q^2(d\theta)^2,
\label{metric-Kerr}
\eea
where, 
\beaa
q^2&=&r^2+a^2\cos^2\th,\qquad
\De= r^2+a^2-2Mr,\qquad
\Sigma^2=(r^2+a^2)q^2+2Mra^2(\sin\theta)^2.
\eeaa
On the horizon we have $r=r_+:=M+\sqrt{M^2-a^2}$ and $\De=0$.   The domain of outer communication $\mathbf{E}$ is given by $r> r_+$. 
One can show that  the   complex Ernst potential $\si$ and the complex scalar
 $\FF^2$  associated to the Killing vectorfield $\T=\pr_t$   are  given by
\bea
\si&=&1-\frac{2M}{r+ia\cos\th},\qquad \FF^2=-\frac{4M^2}{(r+ia\cos\th)^4}.
\eea
Thus,
\bea
-4M^2\FF^2&=&(1-\si)^4
\eea
everywhere in the exterior region. 
Writing $y+iz:=(1-\si)^{-1}$ we  observe that,
\beaa
y=\frac{r}{2M}\ge \frac{r_+}{2M}>\frac 12.
\eeaa
everywhere in the exterior region.

\subsection{Mars-Simon}   In \cite{Ma1}   M.  Mars   gave a  very useful  local  characterization of  the    Kerr  family
 in terms of  the vanishing of  complex  4-tensor  $\SS$, called the Mars-Simon  tensor.  In other words 
 $\SS$ plays the same role  in detecting    a  Kerr   spacetime as  the Riemann curvature tensor plays  in 
  detecting      flat space.    
  
  \begin{definition} 
  Given a  stationary spacetime    with Killing field   $\T$ and associated    Ernst potential $\si$, 
    we define  the Mars-Simon tensor\footnote{in regions where $\si\neq 1$.},
    \beaa
    \SS_{\a\b \mu\nu}:=\RR_{\a\b \mu\nu}+6(1-\si)^{-1} \QQ_{\a\b \mu\nu}.
    \eeaa 
    where,
    \beaa
    \QQ_{\a\b \mu\nu}:=\FF_{\a\b }\FF_{\mu\nu}-\frac{1}{3}\FF^2\II_{\a\b \mu\nu}
    \eeaa
    and,
    \beaa
    \II_{\a\b \mu\nu}:=(\g_{\a\mu}\g_{\b\nu}-\g_{\a \nu}\g_{\b\mu}+i\in_{\a\b \mu\nu})/4
    \eeaa
    
  \end{definition}
\begin{remark}
Note that $\RR, \QQ, \SS$ are all   complex, self dual    Weyl fields  in the sense defined above. 
\end{remark} 

Here is  an important     property of  $\SS$,  derived and made  use of  in  \cite{Ion-K2}.    
\begin{proposition}
\label{prop:BianchiS}
The tensor $\SS$  verifies the  equation,   with $h=(1-\si)$,
\beaa
\D^\a\SS_{\a\b\mu\nu}
&=&-6h^{-1}\T^\si\SS_{\si\rho\ga\de}\big[\FF_\b\,^\rho\, \de_\mu^\ga\, \de_\nu^\de+\frac 2 3 
\II^{\rho}\,_{\b\mu\nu}\FF^{\ga\de}\big]
\eeaa

\end{proposition}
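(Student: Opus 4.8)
The plan is to differentiate the definition $\SS=\RR+6h^{-1}\QQ$, with $h=1-\si$, and then re-express the outcome in terms of $\SS$ via $\RR=\SS-6h^{-1}\QQ$. By the Leibniz rule,
\[
\D^\a\SS_{\a\b\mu\nu}=\D^\a\RR_{\a\b\mu\nu}+6(\D^\a h^{-1})\QQ_{\a\b\mu\nu}+6h^{-1}\D^\a\QQ_{\a\b\mu\nu}.
\]
The first term vanishes: since $(\M,\g)$ is Ricci flat, the contracted second Bianchi identity gives $\D^\a\RR_{\a\b\mu\nu}=0$ for the self-dual curvature $\RR$. For the second term, $h=1-\si$ and $\si_\a=\D_\a\si$ yield $\D^\a h^{-1}=h^{-2}\si^\a$, where $\si^\a=2\T_\rho\FF^{\rho\a}$ is the raised Ernst one-form.

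First I would compute the divergence of $\QQ$. Writing $\QQ_{\a\b\mu\nu}=\FF_{\a\b}\FF_{\mu\nu}-\frac13\FF^2\II_{\a\b\mu\nu}$ and using that $\II$ is parallel together with the Maxwell equation $\D^\a\FF_{\a\b}=0$, only two terms survive:
\[
\D^\a\QQ_{\a\b\mu\nu}=\FF_{\a\b}\,\D^\a\FF_{\mu\nu}-\frac13(\D^\a\FF^2)\,\II_{\a\b\mu\nu}.
\]
Lemma \ref{le:comm-K1} then replaces all derivatives of $\FF$ by curvature terms, $\D^\a\FF_{\mu\nu}=\RR_\la{}^\a{}_{\mu\nu}\T^\la$ and $\D^\a\FF^2=2\FF^{\rho\tau}\RR_\la{}^\a{}_{\rho\tau}\T^\la$, so that $\D^\a\QQ$ becomes a contraction of $\RR$ with $\FF$ and $\T$.

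Next I would substitute $\RR=\SS-6h^{-1}\QQ$. The terms linear in $\SS$ regroup, using the antisymmetry of $\FF$ and of $\SS$ in their first index pair, into precisely
\[
-6h^{-1}\T^\si\SS_{\si\rho\ga\de}\Big[\FF_\b{}^\rho\,\de_\mu^\ga\,\de_\nu^\de+\frac23\II^\rho{}_{\b\mu\nu}\FF^{\ga\de}\Big],
\]
which is the asserted right-hand side. It therefore remains to show that the terms quadratic in $\QQ$, together with the term $6h^{-2}\si^\a\QQ_{\a\b\mu\nu}$ coming from $\D^\a h^{-1}$, cancel identically; equivalently, after dividing by $6h^{-2}$,
\[
\si^\a\QQ_{\a\b\mu\nu}=6\Big[\FF_{\a\b}\,\QQ_\la{}^\a{}_{\mu\nu}\T^\la-\frac23\FF^{\rho\tau}\QQ_\la{}^\a{}_{\rho\tau}\T^\la\,\II_{\a\b\mu\nu}\Big].
\]

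The hard part is this last, purely algebraic identity; the differential part above is a mechanical application of the Bianchi, Maxwell, and Ernst relations. To establish it I would insert $\si^\a=2\T_\rho\FF^{\rho\a}$ and the definition of $\QQ$, turning both sides into expressions cubic in $\FF$ contracted with $\T$ and with $\II$. The decisive input is the self-duality of $\FF$, which forces the associated stress tensor to be pure trace,
\[
\FF_{\a\ga}\FF_\b{}^\ga=\frac14\g_{\a\b}\FF^2,
\]
together with the projection property of $\II$ on self-dual Weyl fields and the trace-freeness of $\QQ$. These reduce every $\FF$--$\FF$ contraction to multiples of $\g$ and of $\FF^2$, after which the coefficients $\frac13$ and $\frac23$ on the two sides are expected to match. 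I anticipate the bookkeeping of these self-dual contraction identities to be the only genuinely delicate point of the argument.
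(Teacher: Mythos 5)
Your proposal is correct and follows essentially the same route as the paper's appendix proof: differentiate $\SS=\RR+6h^{-1}\QQ$, kill $\D^\a\RR_{\a\b\mu\nu}$ by the contracted Bianchi identity, convert $\D\FF$ into curvature contracted with $\T$, substitute $\RR=\SS-6h^{-1}\QQ$, and cancel the residual $\QQ$-quadratic terms against $6(\D^\a h^{-1})\QQ_{\a\b\mu\nu}$. The algebraic identity you leave as ``bookkeeping'' is exactly the paper's claim that its tensor $K_{\b\mu\nu}$ vanishes, and it does close using precisely the two self-dual identities you name, namely $\FF_{\a\ga}\FF_{\b}{}^{\ga}=\frac14\g_{\a\b}\FF^2$ and $\FF_{\mu}{}^{\si}\II_{\nu\si\a\b}+\FF_{\nu}{}^{\si}\II_{\mu\si\a\b}=\frac12\g_{\mu\nu}\FF_{\a\b}$.
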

 We   give a complete proof of the proposition in appendix \ref{sect:B2}. As a corollary we derive,
\begin{corollary}
The tensor $\SS$ verifies   a covariant wave equation of the form,
\bea
\label{wave-SS}
\square \SS=\MM(\D\SS, \SS)
\eea

\end{corollary}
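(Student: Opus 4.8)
The plan is to obtain the wave equation by differentiating the first-order (divergence) identity of Proposition \ref{prop:BianchiS} a second time and commuting covariant derivatives, using crucially the self-duality and the Weyl symmetries of $\SS$ to prevent any uncontrolled second-order term from surviving. The first observation is that the right-hand side of Proposition \ref{prop:BianchiS} is itself of the form $\MM(\SS)$ in the sense of Definition \ref{def:1}: it is $\SS$ contracted against the tensor built out of $h^{-1}\T$, $\FF$ and $\II$, all of which are smooth on the region where $\si\neq 1$ (the region where $\SS$ is defined). Thus, writing the divergence as $\D^\a\SS_{\a\b\mu\nu}=\MM(\SS)_{\b\mu\nu}$, the problem is reduced to showing that $\square\SS$ can be expressed through $\D\big(\D^\a\SS_{\a\cdots}\big)$ plus curvature contractions.

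To carry this out I would first decompose a single covariant derivative of $\SS$ via the algebraic cyclic identity in the first three slots,
\begin{equation*}
\D_\mu\SS_{\a\b\ga\de}=3\D_{[\mu}\SS_{\a\b]\ga\de}-\D_\a\SS_{\b\mu\ga\de}-\D_\b\SS_{\mu\a\ga\de},
\end{equation*}
and then apply $\D^\mu$ and contract. In the two non-antisymmetric terms one commutes $\D^\mu$ past $\D_\a$ (resp. $\D_\b$); using the antisymmetry of $\SS$ in its first pair, the contracted derivative $\D^\mu\SS_{\b\mu\ga\de}$ is again the divergence up to sign, so $\D_\a\big(\D^\mu\SS_{\b\mu\ga\de}\big)=\D_\a\MM(\SS)=\MM(\D\SS,\SS)$, while the commutator $[\D^\mu,\D_\a]\SS$ produces only curvature contractions $\R\ast\SS$, which lie in $\MM(\SS)$ because $\R$ is a smooth coefficient.

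The one genuinely structural step is the totally antisymmetric term $\D^\mu\big(\D_{[\mu}\SS_{\a\b]\ga\de}\big)$. Here I would use self-duality, $\dual\SS=-i\SS$: antisymmetrizing a derivative over the first three indices is, up to a constant, the Hodge dual of the divergence of the left dual $\dual\SS$, and since $\dual\SS=-i\SS$ this is proportional to the Hodge dual of $\D^\a\SS_{\a\b\mu\nu}=\MM(\SS)$. Consequently $\D_{[\mu}\SS_{\a\b]\ga\de}=\MM(\SS)$ as well, and applying $\D^\mu$ yields $\MM(\D\SS,\SS)$. Collecting the three contributions gives $\square\SS=\MM(\D\SS,\SS)$, as claimed.

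I expect the main obstacle to be precisely this self-duality step: it is what guarantees that the differential-Bianchi defect $\D_{[\mu}\SS_{\a\b]\ga\de}$, which need not vanish for a general Weyl field, collapses onto the divergence already controlled by Proposition \ref{prop:BianchiS}, so that no second derivative of $\SS$ escapes the $\MM(\D\SS,\SS)$ form. A secondary bookkeeping point is that differentiating the coefficient $h^{-1}=(1-\si)^{-1}$ generates factors $h^{-2}\si_\mu$; these are singular at $\si=1$ but perfectly smooth on the domain where $\SS$ is defined, and are harmless inside the $\MM(\cdot)$ notation, provided one checks that every coefficient produced in the computation remains smooth there.
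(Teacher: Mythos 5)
Your proof is correct and follows essentially the route the paper intends: the corollary is obtained by differentiating the divergence identity of Proposition \ref{prop:BianchiS} and using the Weyl symmetries together with the self-duality $\dual\SS=-i\SS$ (which converts the divergence equation into the full Bianchi-type ``curl'' equation), exactly as in the paper's parallel computation of $\square_\g\W$ from \eqref{Bianchi:W}. You have correctly isolated the one structural point — that self-duality is what lets the single divergence identity control the antisymmetrized derivative $\D_{[\mu}\SS_{\a\b]\ga\de}$ — so nothing is missing.
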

\subsection{A Maxwell   System}
In the appendix we also   derive a Maxwell type   equation    for the following  slightly   modified 
 version of   the Mars-Simon tensor,
 \beaa
  \SS_{\a\b \mu\nu}:=\RR_{\a\b \mu\nu}+6 h^{-1} \QQ_{\a\b \mu\nu}.
 \eeaa
 where, for some constant\footnote{The precise constant in Kerr is   $   C=(4M^2)^{1/4} $ } $C$,
 \bea
 h&:=&C(- \FF^2)^{1/4}
 \eea
\begin{proposition} 
\label{prop-Max'}
The self-dual complex $2$-form $\HH_{\a\b}:=h^{-3}\SS_{\a\b \mu\nu}\FF^{\mu\nu}$,  
 verifies the  Maxwell equations,
 \begin{equation}\label{eq:Max'}
\D^\a\HH_{\a\b}= -h^{-3}\T^\si(\SS\cdot\SS)_{\si\b} -3h^{-1}E^\rho\HH_{\rho\b}, \qquad
\end{equation}
where,
\beaa
(\SS\cdot\SS)_{\si\b}=\SS_\b\,^{\rho\mu\nu}\SS_{\si\rho \mu\nu } 
\eeaa
and,
\bea
 E_\rho :=\si_\rho +\D_\rho  h=-\frac 1 2  C^4 \T ^\si\HH_{\si\rho}
 \eea
 \end{proposition}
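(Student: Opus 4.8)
The plan is to compute $\D^\a\HH_{\a\b}$ directly from $\HH_{\a\b}=h^{-3}\SS_{\a\b\mu\nu}\FF^{\mu\nu}$ and then reorganize the output into the two structures appearing in the claim. The first preparatory step is to evaluate the contraction of $\SS$ with $\FF$. Since $\FF$ is self-dual one has $\II_{\a\b\mu\nu}\FF^{\mu\nu}=\FF_{\a\b}$ and $\FF_{\mu\nu}\FF^{\mu\nu}=\FF^2$, so $\QQ_{\a\b\mu\nu}\FF^{\mu\nu}=\tfrac{2}{3}\FF^2\FF_{\a\b}$; combined with $h^4=C^4(-\FF^2)$ this gives
\[
\SS_{\a\b\mu\nu}\FF^{\mu\nu}=\GG_{\a\b}-\frac{4h^3}{C^4}\FF_{\a\b},\qquad \GG_{\a\b}:=\RR_{\a\b\mu\nu}\FF^{\mu\nu},
\]
and hence $\HH_{\a\b}=h^{-3}\GG_{\a\b}-\tfrac{4}{C^4}\FF_{\a\b}$.

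Next I would differentiate this identity. The corollary of Lemma \ref{le:comm-K1} supplies the source-free Maxwell equation $\D^\a\FF_{\a\b}=0$, so the $\FF$ summand contributes nothing; the contracted second Bianchi identity for the vacuum self-dual curvature, $\D^\a\RR_{\a\b\mu\nu}=0$, removes the term in which the derivative falls on $\RR$; and the Killing identity $\D_\rho\FF_{\mu\nu}=\RR_{\la\rho\mu\nu}\T^\la$ of Lemma \ref{le:comm-K1} converts the remaining derivative of $\FF$ into curvature. The result is
\[
\D^\a\HH_{\a\b}=-3h^{-4}(\D^\a h)\,\GG_{\a\b}+h^{-3}\T^\la\,\RR_{\a\b\mu\nu}\RR_\la{}^{\a\mu\nu}.
\]
At the same time I would record the stated formula for $E$, which is really encoded in the definition of $h$: differentiating $h=C(-\FF^2)^{1/4}$ and using the same Killing identity together with $\si_\rho=2\T^\a\FF_{\a\rho}$ yields $\D_\rho h=-\tfrac{1}{2}C^4\T^\la\HH_{\la\rho}-\si_\rho$, i.e. $E_\rho=\si_\rho+\D_\rho h=-\tfrac{1}{2}C^4\T^\si\HH_{\si\rho}$.

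The core of the argument is to insert $\RR=\SS-6h^{-1}\QQ$ into the quadratic term. By the antisymmetry and pair-symmetry of a Weyl field the pure-$\SS$ contribution is exactly $h^{-3}\T^\la\SS_{\a\b\mu\nu}\SS_\la{}^{\a\mu\nu}=-h^{-3}\T^\si(\SS\cdot\SS)_{\si\b}$, the first term of the claim. It then remains to show that the leftover pieces — the two $\SS$–$\QQ$ cross terms, the $\QQ$–$\QQ$ term, and the weight term $-3h^{-4}(\D^\a h)\GG_{\a\b}$ — collapse to $-3h^{-1}E^\rho\HH_{\rho\b}$. For this I would use four self-dual identities: $(\mathrm{i})$ $\FF_{\a\rho}\FF_\b{}^\rho=\tfrac{1}{4}\g_{\a\b}\FF^2$; $(\mathrm{ii})$ $\II_{\a\b\mu\nu}\FF^{\mu\nu}=\FF_{\a\b}$; $(\mathrm{iii})$ the vanishing of every three-index contraction of a trace-free self-dual Weyl field against $\II$, namely $\SS_{\a\b\mu\nu}\II_\la{}^{\a\mu\nu}=0=\II_{\a\b\mu\nu}\SS_\la{}^{\a\mu\nu}$ (the metric parts vanish by trace-freeness, the $\in$-part by the first Bianchi identity); and $(\mathrm{iv})$ the purely algebraic value $\QQ_{\a\b\mu\nu}\QQ_\la{}^{\a\mu\nu}=-\tfrac{1}{6}(\FF^2)^2\,\g_{\la\b}$, which follows from $(\mathrm{i})$–$(\mathrm{ii})$ and the Lorentzian contraction $\in_{\a\b\mu\nu}\in_\la{}^{\a\mu\nu}=6\,\g_{\la\b}$. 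By $(\mathrm{iii})$ each cross term reduces to a single $\FF$–$\HH$ product, $\SS_{\a\b\mu\nu}\QQ_\la{}^{\a\mu\nu}=h^3\FF_\la{}^\a\HH_{\a\b}$ and $\QQ_{\a\b\mu\nu}\SS_\la{}^{\a\mu\nu}=h^3\FF_{\a\b}\HH_\la{}^\a$; contracting with $\T^\la$ and using $\T^\la\FF_\la{}^\a=\tfrac{1}{2}\si^\a$ and the $E$-formula turns these into $\si$– and $E$–terms. By $(\mathrm{iv})$ the $\QQ$–$\QQ$ term is proportional to $\T_\b$, while expanding $\GG_{\a\b}$ and $\D^\a h=E^\a-\si^\a$ in the weight term produces matching $\si^\a\HH_{\a\b}$, $\T_\b$ (through $\si^\a\FF_{\a\b}=-\tfrac{1}{2}\FF^2\T_\b$) and $E^\a\FF_{\a\b}$ contributions.

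I expect the hard part to be the bookkeeping of this final step rather than any single idea: one must check that the two $\si^\a\HH_{\a\b}$ terms cancel, the two $\T_\b$ terms cancel, and the two $E^\a\FF_{\a\b}$ terms cancel (using $E_\rho\FF^\rho{}_\b=E^\a\FF_{\a\b}$), so that the \emph{only} surviving contribution is $-3h^{-1}E^\rho\HH_{\rho\b}$. The single most delicate point is identity $(\mathrm{iv})$: it is exactly the clean proportionality of $\QQ\cdot\QQ$ to $\g_{\la\b}$ — which hinges on the Lorentzian $\in$–$\in$ sign and on $(\mathrm{i})$ — that forces the spurious $\T_\b$ terms to cancel rather than obstruct the Maxwell structure, so I would verify that computation before assembling the rest.
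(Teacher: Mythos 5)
Your proposal is correct, but it is organized quite differently from the paper's proof. The paper applies the Leibniz rule to $\HH_{\a\b}=\SS_{\a\b\mu\nu}\cdot\big(h^{-3}\FF^{\mu\nu}\big)$, handles the first piece by invoking the divergence identity of Proposition \ref{prop:BianchiS} for $\D^\a\SS_{\a\b\mu\nu}$ (showing that its contraction with $h^{-3}\FF^{\mu\nu}$ vanishes outright), and handles the second piece via the precomputed formula \eqref{eq:interm-H} for $\D_\a[h^{-3}\FF_{\b\ga}]$, which already has $\RR=\SS-6h^{-1}\QQ$ built in; the two terms of \eqref{eq:Max'} then drop out directly. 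You instead bypass Proposition \ref{prop:BianchiS} entirely: you rewrite $\HH_{\a\b}=h^{-3}\RR_{\a\b\mu\nu}\FF^{\mu\nu}-4C^{-4}\FF_{\a\b}$ (the paper's \eqref{eq:forHH}, which it states but never uses in its proof), differentiate using only the contracted second Bianchi identity, the Maxwell equation for $\FF$, and the Killing identity $\D_\rho\FF_{\mu\nu}=\T^\la\RR_{\la\rho\mu\nu}$, and substitute $\RR=\SS-6h^{-1}\QQ$ only at the very end inside the quadratic term $\T^\la\RR_{\a\b\mu\nu}\RR_\la{}^{\a\mu\nu}$. The price is the final bookkeeping you flag: one must verify $\QQ_{\a\b\mu\nu}\QQ_\la{}^{\a\mu\nu}=-\tfrac16(\FF^2)^2\g_{\la\b}$ and then see three independent cancellations (the $\si^\a\HH_{\a\b}$ terms, the $\T_\b$ terms via $\si^\a\FF_{\a\b}=-\tfrac12\FF^2\T_\b$ and $\FF^2=-h^4C^{-4}$, and the $E^\a\FF_{\a\b}$ terms); I checked that all three do cancel and only $-h^{-3}\T^\si(\SS\cdot\SS)_{\si\b}-3h^{-1}E^\rho\HH_{\rho\b}$ survives. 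The benefit is that your argument is self-contained and makes transparent exactly which structural identities of self-dual Weyl fields are responsible for the Maxwell structure, whereas the paper's route is shorter at each step but leans on the machinery of Proposition \ref{prop:BianchiS}. The derivation of $E_\rho=-\tfrac12 C^4\T^\si\HH_{\si\rho}$ is the same in both.
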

\begin{remark} 
\label{rem:Max-linear} 
Remark that  the right hand side of \eqref{eq:Max'}    is  quadratic  in  $\SS$ and thus,  if $\SS=O(\ep)$,  sufficiently  small,
we can  ignore it in a first approximation   and derive the linearized equation 
\bea
\label{eq:Max-lin}
\D^\a\HH^{(lin)}_{\a\b}&=&0.
\eea

\end{remark}
\subsection{Rigidity  results based on $\SS$.}
\begin{theorem}[Ionescu-Klainerman \cite{Ion-K2}]
\label{Thm:Gl1}
Assume  that   $(\M, \g)$  is regular, nondegenerate  stationary  vacuum spacetime.  Assume also  that 
the following conditions  are  verified 

\begin{equation}\label{Main-Cond1}
-4M^2\FF^2=(1-\si)^4\quad\text{ on }S_0,
\end{equation}
and
\begin{equation}\label{Main-Cond2}
\qquad \qquad\qquad\Re\big((1-\si)^{-1}\big)>1/2\quad\,\text{  at some point on }S_0.
\end{equation}
Then $(\M, \g)$ is  isometric to the domain of outer communication of  of a Kerr space-time with mass $M$ and $0<a<M$.
\end{theorem}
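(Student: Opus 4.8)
The plan is to reduce the theorem to the assertion that the Mars--Simon tensor $\SS$ vanishes identically on the domain of outer communications $\E$, and then to invoke Mars's local characterization of the Kerr family \cite{Ma1} together with the global hypotheses of Subsection \ref{sect:stationary}. Accordingly I would organize the proof of $\SS\equiv 0$ in three stages: first, the vanishing of $\SS$ on the bifurcate horizon $\HH^+\cup\HH^-$; second, the covariant wave equation \eqref{wave-SS} it satisfies; and third, a Carleman-based unique continuation propagating the vanishing off the horizon into all of $\E$.

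For the first stage I would examine $\SS$ on the bifurcate sphere $S_0$. On the non-expanding null hypersurfaces $\HH^\pm$ the trace of the null second fundamental form vanishes, which, together with the tangency of $\T$ to $\HH^\pm$, forces rigid algebraic relations among the null components of the complex curvature $\RR$ and of the self-dual $2$-form $\FF$. Feeding in the pointwise identity \eqref{Main-Cond1}, $-4M^2\FF^2=(1-\si)^4$, and the normalization \eqref{Main-Cond2}, $\Re((1-\si)^{-1})>1/2$ (which in particular keeps $1-\si\neq 0$, so that $\SS$ is well defined), these relations should collapse to $\SS=0$ on $S_0$. I would then propagate this along the generators of $\HH^+$ and $\HH^-$ by restricting the divergence identity of Proposition \ref{prop:BianchiS} to the horizon and contracting with the null generator: this yields a homogeneous linear transport equation for $\SS$ along each generator whose data vanish on $S_0$, giving $\SS\equiv 0$ on all of $\HH^+\cup\HH^-$.

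The third stage is where I expect the genuine difficulty. By the corollary to Proposition \ref{prop:BianchiS}, $\SS$ solves the homogeneous wave equation $\square\SS=\MM(\D\SS,\SS)$, and since $\T$ is Killing we also have $\Lie_\T\SS=0$. Near $S_0$ the horizon is strictly null-convex (a consequence of the bifurcation/non-degeneracy property), so the Carleman estimate of Proposition \ref{prop:Carleman1} applies directly and gives $\SS=0$ in a full neighborhood of $S_0$ in $\E$. The obstacle is to continue the vanishing throughout $\E$: the exterior of Kerr contains trapped null geodesics (the photon region), and across any hypersurface meeting such a geodesic the strict null-convexity condition \eqref{qual1.1} fails, so scalar unique continuation breaks down. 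The decisive point, exactly as stressed in the introduction, is that none of these trapped geodesics is orthogonal to $\T$; hence the weaker $\T$-null-convexity of Definition \ref{def:null-convex-T}, quantified in \eqref{null-convex-strong-T}, persists. Exploiting the invariance $\Lie_\T\SS=0$, I would replace Proposition \ref{prop:Carleman1} by its $\T$-adapted refinement (as in \cite[Section 3.2]{Ion-K2}) and run a bootstrap: on any open set where $\SS$ already vanishes the metric is locally Kerr by Mars, so $\T$-null-convexity is available up to the boundary of that set, and the $\T$-conditional Carleman estimate pushes the vanishing of $\SS$ across. Since $\E$ is connected, iterating this continuation yields $\SS\equiv 0$ throughout $\E$.

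Finally, with $\SS\equiv 0$ in $\E$, Mars's theorem \cite{Ma1} identifies $(\M,\g)$ locally with a Kerr metric; the asymptotic expansion \eqref{As-Flat} fixes the mass as $M$ and the angular momentum as $J=Ma$, and the conditions \eqref{Main-Cond1}--\eqref{Main-Cond2} together with the non-degeneracy of the horizon place the solution in the non-extremal rotating range $0<a<M$. The global hyperbolicity, completeness, and topological normalizations on $\E$ and $S_0$ then upgrade the local isometry to a global isometry onto the domain of outer communications of $\KK(a,M)$. The crux of the whole argument remains the third stage: turning vanishing on the horizon into global vanishing despite the trapped null geodesics, which is precisely what the $\T$-conditional pseudoconvexity of the Kerr exterior is engineered to defeat.
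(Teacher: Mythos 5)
Your overall architecture (vanishing of $\SS$ on the horizon via \eqref{Main-Cond1}, the wave equation \eqref{wave-SS}, unique continuation off the bifurcate sphere, and Mars's theorem to conclude) matches the paper's, and your first, second and fourth stages are essentially what is done in \cite{Ion-K2}. The gap is in your third stage, which you correctly identify as the crux but then resolve with an argument that does not work as stated. You propose to bootstrap over the open set where $\SS$ already vanishes, claiming that $\T$-null-convexity ``is available up to the boundary of that set'' because the metric is locally Kerr inside it. But $\T$-null convexity (Definition \ref{def:null-convex-T}) is a condition on a \emph{smooth defining function} of the boundary of the domain of vanishing, and the boundary of the interior of $\{\SS=0\}$ is an a priori arbitrary closed set: it need not be a hypersurface, let alone admit a defining function whose Hessian satisfies the quantitative condition \eqref{null-convex-strong-T}. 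Moreover, even in the exact Kerr exterior many hypersurfaces fail to be $\T$-null convex (the trapped set obstructs convexity of generic foliations), so knowing the metric is Kerr on one side of the putative boundary gives no control over the convexity of that boundary. This is precisely the failure mode that naive open-and-closed continuation arguments for unique continuation always encounter, and it is why the paper does something different.

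The paper's key step is to introduce $y+iz=(1-\si)^{-1}$ and to prove, using the already-established vanishing of $\SS$ (which forces identities such as $-4M^2\FF^2=(1-\si)^4$ on the region of vanishing), that the \emph{level sets of the specific function $y$} define a regular foliation of all of $\E$ and that these level sets satisfy the strict $\T$-null convexity condition; the continuation then proceeds level set by level set along this explicit foliation, for which \eqref{null-convex-strong-T} can actually be verified. This is also where hypothesis \eqref{Main-Cond2} does its real work: the inequality $\Re\big((1-\si)^{-1}\big)>1/2$ (the analogue of $y=r/(2M)>1/2$ in Kerr) is propagated from one point of $S_0$ to all of $\E$ and is exactly what makes the level sets of $y$ regular and $\T$-null convex; it is not there merely to keep $1-\si\neq 0$. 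Without constructing such a global family of pseudo-convex hypersurfaces, your continuation stalls at the first point where the boundary of the vanishing set fails to be a smooth $\T$-null convex hypersurface, and nothing in your argument rules that out.
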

The proof of the theorem is based on the following ingredients:
\begin{enumerate}
\item  Assumption \eqref{Main-Cond1}   is  used to show that $\SS$ vanishes  along the horizon. 
\item  Due to the non-degeneracy of the horizon  one can check that  the null  convexity   condition 
is verified at all points of the   bifurcate sphere.   Thus  our   unique continuation  results applied to  equation  \eqref{wave-SS} 
  can be applied  to prove that $\SS$ vanishes in a neighborhood of the horizon.  
    \item    This is the key step.   Define  functions $y, z$ such that    $ y+i z= (1-\si)^{-1} $.  Note that in    the particular case  of  $\KK(a, M)$ they are 
  $y=   (2M)^{-1} r, z=(2M)^{-1} \cos \th$.  Use the vanishing or $\SS$  to show that the level  set of $y$ 
      define a regular    foliation  of the  entire   domain of outer communication  and   verify the strong   $\T$-null convexity condition. 
      We can thus  apply   our unique continuation results  to     \eqref{wave-SS}  to deduce  that $\SS$ vanishes everywhere.
      \item According to Mars theorem \cite{Ma1}  we conclude that      our space-time is  isometric to $\KK(a, M)$, with $a, M$ determined
      from the asymptotic conditions of the metric $\g$.

\end{enumerate}

We now  state our second main theorem. Roughly the theorem  shows that any  stationary spacetime close  to  a   non-extremal  Kerr solution 
 $\KK(a, M)$, $|a|<M$,   must be     a  non-extremal  Kerr solution. The closeness  to Kerr   is expressed in terms of the  smallness of the  Mars-Simon tensor $\SS$.
\begin{theorem}[Alexakis-Ionescu-Klainerman \cite{AlIoKl2}]
\label{Thm:Gl2}
Assume  that   $(\M, \g)$  is regular, nondegenerate  stationary  vacuum spacetime. 
Replace assumption  \eqref{Main-Cond1} in   theorem 
by the condition,

\begin{equation}\label{Main-Cond0}
|(1-\sigma)\SS(\T,E_\al,E_\be,E_\ga)|\leq \varep \,\,\text{ on }\Sigma^0\cap\E ,
\end{equation}
for some sufficiently small constant  $ \varep$
(depending only on     our regularity  assumption on the  metric $\g$)  where $E_0, E_1, E_2, E_3$
is a  fixed  orthonormal frame along $\Si_0$ with $E_0$ the future unit normal.   Then, if $\varep$
 is sufficiently small,   the entire domain of communication $\E$ is isometric to  the exterior 
   region of  a Kerr solution  $\KK(a, M)$.  

\end{theorem}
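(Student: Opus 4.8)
The plan is to construct, on the whole domain of outer communication $\E$, a rotational Killing vector-field $\Z$ commuting with $\T$, and then to invoke the Carter--Robinson theorem \cite{Ca1},\cite{Rob} to identify $(\M,\g)$ with a member of the Kerr family, the parameters $a,M$ being fixed by the ADM data \eqref{As-Flat}. The essential difference from Theorem \ref{Thm:Gl1} is that I would \emph{not} try to propagate an exact vanishing of $\SS$; instead, the smallness hypothesis \eqref{Main-Cond0} is used only to supply the geometric input---a globally defined, strictly $\T$-null convex foliation---required by the extension machinery of Section 2.

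First I would upgrade the smallness of $\SS$ on $\Si_0\cap\E$ to smallness on all of $\E$. Since the Ernst potential $\si$ (indeed $\Lie_\T\si=\T^\mu\si_\mu=2\T^\a\T^\mu\FF_{\a\mu}=0$), the self-dual form $\FF$, and the curvature $\RR$ are all $\T$-invariant, so is the Mars--Simon tensor, i.e. $\Lie_\T\SS=0$; as every orbit of $\T$ meets $\Si_0$ by the completeness assumption, the bound on the frame components in \eqref{Main-Cond0}---which, together with the algebraic symmetries and self-duality of the complex Weyl field $\SS$, controls the whole of $\SS$ on $\Si_0$---propagates by Lie transport to every point of $\E$. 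Next, writing $y+iz=(1-\si)^{-1}$, I would study the level sets of $y$. In exact Kerr one has $y=r/(2M)>1/2$, and the domains $\{r_\ast<r<R\}$ are strictly $\T$-null convex at their outer boundary; here a decisive simplification is that where $\T$ is timelike the condition is vacuous, since a null vector cannot be orthogonal to a timelike one, so it carries genuine content only inside the (compact) ergoregion. Because the deviation of the relevant second fundamental form from its Kerr value is controlled by $\SS$, the smallness from the previous step guarantees that $dy\neq0$ everywhere and that each hypersurface $\{y=c\}$ satisfies the quantitative inequality \eqref{null-convex-strong-T} uniformly. Thus $y$ furnishes a regular, strictly $\T$-null convex foliation of $\E$, from the horizon out to spatial infinity.

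With this foliation in hand I would construct the symmetry. Near the bifurcate sphere, Theorem \ref{Hawking2} produces a non-trivial rotational Killing field $\Z$, with closed orbits, commuting with $\T$ in a neighborhood of $S_0$. I would then extend $\Z$ outward across the foliation: taking $O=\{y<c\}$ and applying Theorem \ref{extthm1} at each boundary point of $\{y=c\}$---legitimate precisely because $\{y<c\}$ is strictly $\T$-null convex there---extends $\Z$ as a Killing field still commuting with $\T$. A continuity argument on the set of levels $c$ for which $\Z$ is defined on $\{y<c\}$ (open by the local theorem, and closed because overlapping local extensions coincide by the uniqueness inherent in the unique-continuation step) then delivers $\Z$ on all of $\E$; its orbits stay closed, the period being fixed near the horizon and preserved by the extension. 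Hence $(\M,\g)$ is stationary and axisymmetric, and the Carter--Robinson theorem finishes the proof.

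The step I expect to be the main obstacle is the uniform verification, from the mere smallness of $\SS$, of the strict $\T$-null convexity \eqref{null-convex-strong-T} throughout the ergoregion, together with its two boundary regimes: up to the horizon, where the domain fails to be smooth at $S_0$ and the modified Carleman estimates of \cite[section 3.2]{Ion-K2} are needed, and out to spatial infinity, where one must control the asymptotics of $\si$ so that the foliation indeed closes up. A secondary difficulty is proving the global single-valuedness of the extended field and the closedness of its orbits, which is exactly what legitimizes the final appeal to Carter--Robinson.
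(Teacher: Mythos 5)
Your proposal follows essentially the same route as the paper's own proof: construct the second Killing field near the bifurcate sphere via the local Hawking-type rigidity theorem, use the smallness of the Mars--Simon tensor to show that the level sets of $y=\Re\big((1-\si)^{-1}\big)$ give a global strictly $\T$-null convex foliation of $\E$, extend the rotational field across this foliation by the unique-continuation machinery of Section 2, and conclude with Carter--Robinson. The only cosmetic difference is that the paper extends the horizon-tangent field $\K$ and only at the end takes the combination of $\T$ and $\K$ with closed orbits, whereas you carry the rotational field $\Z$ from the start; this does not change the substance of the argument.
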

The proof of the theorem is based on the following ideas.
\begin{enumerate}
\item A  simple  argument, due   to Hawking\footnote{Hawking's original argument     also applies to 
 degenerate horizons. In the case   of a non-degenerate horizon, assumed here,  
    the  proof is  completely  trivial.}   ,   shows that  one can construct a second Killing vectorfield  $\K$  on  
  the horizon  $\HH^+\cup\HH^{-}$,
 with $\K$   tangent to the generators.  
 \item   We check that the strict-null convexity condition  is   verified at all point of  the bifurcate sphere $S_0$
 and extend  $\K$  in a full  neighborhood of $S_0$.  Moreover    $\K$ commutes with $\T$, $[\K,\T]=0$. 
     \item Introduce the coordinates $y, z$  such that $y+iz=(1-\si)^{-1}$  and show, using the smallness assumption on $\SS$,
     that  $y$ verifies the strict $\T$- null-convexity  condition.  
     \item   Extend $\K$ everywhere in $\E$  as a Killing vectorfield such that $[\K,\T]=0$ and find a combination  $\Z$ of $\T, \K$ which   has closed  
    orbits.
    \item Use  the Carter-Robinson theorem to conclude  that   $\E$ is isometric to the exterior domain of a Kerr solution.

\end{enumerate}

\begin{remark}
Theorem \ref{Thm:Gl2} has been significantly strengthened  by Wong and Yu in \cite{Wong-Yu} in which they show,  by 
 a  clever application of the mountain pass lemma,     that  the assumption of a  connected  horizon, implicit  in    both    
    theorems   \ref{Thm:Gl1}, \ref{Thm:Gl2}, is unnecessary.
         \end{remark}

\begin{remark}
The reliance on the Carter-Robinson theorem  in the last step of the proof is somewhat unsatisfactory
since we are   in a  small $\SS$ regime. In fact the authors      believe that an alternative argument
 can be given  relying on proposition \ref{prop-Max'}  and the study of stationary solutions to the linearized      
 system  \eqref{eq:Max-lin}.
 \end{remark}

\subsection{   Third  rigidity   result}
In this  section  we   review    a   recent    black hole rigidity result 
for slowly rotating stationary solutions of the Einstein  vacuum  equations. 
 The result states    that the domain of outer communications of any  stationary  vacuum black hole\footnote{verifying   the assumptions   
 in  subsection \ref{sect:stationary}}
with the stationary Killing vector-field $\T$ being small on the 
bifurcation sphere of the horizon must be isometric to the domain of outer communications of a Kerr solution $K(a,M)$ with 
small  $a$. More precisely,
\begin{theorem}[Alexakis-Ionescu-Klainerman \cite{AlIoKl3}]  \label{thm:Gl3}
Assume  that   $(\M, \g)$  is regular, nondegenerate  stationary  vacuum spacetime,    as  in subsection  \ref{sect:stationary}. 
Assume in addition that   there exists a    regular    maximal    hypersurface $\Sigma_1$ passing through the bifurcation  sphere 
 and that 
\begin{equation}\label{Tsmall}
\|\g(\T,\T)\|_{L^\infty(\SS_0)}<\epsilon,
\end{equation}
where $\epsilon$ is a sufficiently small constant\footnote{We note that the smallness depends  on the entire geometry of  $(\M, \g)$, in particular on  its ADM mass  $M$.}. 
Then  $(\M,\g)$ is stationary and axially symmetric,  
thus, in view of the Carter-Robinson theorem,     isometric 
to a Kerr spacetime $\KK(a, m) $   with small  $a$. 
\end{theorem}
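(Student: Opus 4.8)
The plan is to combine a local construction of a rotational Killing field near the bifurcate sphere with a \emph{global elliptic} argument that propagates smallness of the Mars--Simon tensor throughout $\E$, thereby placing us in the small-$\SS$ regime already handled by Theorem \ref{Thm:Gl2}. First I would proceed exactly as in Theorems \ref{Hawking1}--\ref{Hawking2}: invoke Hawking's rigidity theorem to produce a Killing field $\K$ on $\HH^+\cup\HH^-$ tangent to the null generators, and extend it to a full neighborhood of $S_0$, commuting with $\T$, via Theorem \ref{extthm1}; the required strict null convexity is the automatic consequence of the non-degeneracy of the bifurcate horizon. The hypothesis \eqref{Tsmall} enters here: since $\T$ is tangent to both $\HH^+$ and $\HH^-$, at a point of $S_0$ it must lie in $T_p\HH^+\cap T_p\HH^-=T_pS_0$, so $\T$ is tangent to the spacelike sphere $S_0$ and $\g(\T,\T)\ge 0$ there. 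Thus \eqref{Tsmall} says precisely that $\T$ nearly vanishes on the bifurcation sphere, the Schwarzschild limit $a=0$ being the case $\T|_{S_0}=0$. Using the Carleman estimates near $S_0$ together with \eqref{Tsmall}, I would quantify this by showing that the horizon angular velocity is $O(\epsilon)$ and that $\SS$ is of size $O(\epsilon)$ in a neighborhood of the horizon.

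The decisive new ingredient is global and elliptic rather than hyperbolic. Because $\T$ is Killing, every quantity built covariantly from $\g$ and $\T$---in particular $\si$, $\FF$ and $\SS$---is $\T$-invariant, $\Lie_\T\SS=0$. Consequently the covariant wave equation $\square\SS=\MM(\D\SS,\SS)$ of \eqref{wave-SS}, when restricted to $\T$-invariant tensors, has leading symbol the spatial operator $\g^{ij}\pr_i\pr_j$ on the quotient, which is \emph{definite precisely where $\T$ is timelike}; there the equation for $\SS$ becomes a genuine elliptic system. I would exploit the assumed regular maximal hypersurface $\Sigma_1$ through $S_0$ to organize global elliptic estimates (on a maximal slice the lapse and shift are controlled by maximum-principle and elliptic arguments), and use them to propagate the near-horizon bound $\SS=O(\epsilon)$ to the whole exterior region, simultaneously showing that the ergoregion and the angular momentum $a$ are both $O(\epsilon)$. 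This is exactly the step that couples local considerations near the horizon, via Carleman estimates, with global elliptic estimates: the local estimates cover a full neighborhood of the horizon that, for $a=O(\epsilon)$, contains the thin ergoregion where ellipticity degenerates, while the elliptic estimates take over in the exterior region where $\T$ is timelike.

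Once $\SS=O(\epsilon)$ is established on $\Sigma_0\cap\E$, the argument rejoins that of Theorem \ref{Thm:Gl2} with $\varep=O(\epsilon)$: one introduces $y,z$ with $y+iz=(1-\si)^{-1}$ and verifies, using the global smallness of $\SS$, that the level sets of $y$ foliate $\E$ and satisfy the strict $\T$-null convexity condition everywhere. This last point holds in exact Kerr because there are no trapped null geodesics perpendicular to $\T$, and it is stable under the $O(\epsilon)$ perturbation just obtained. Theorem \ref{extthm1} then extends $\K$, commuting with $\T$, across every leaf and hence to all of $\E$; an appropriate combination $\Z$ of $\T$ and $\K$ with closed orbits yields axial symmetry, and the Carter--Robinson theorem identifies $(\M,\g)$ with $\KK(a,M)$ for some small $a$.

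The hardest part, I expect, is the global elliptic step. The ellipticity of the reduced Mars--Simon system degenerates exactly on the boundary of the ergoregion, where $\T$ becomes null, so one must carefully match the local Carleman estimates (valid up to and across this degeneracy) to the elliptic estimates (valid only in the timelike region) and close a bootstrap in which the size of the ergoregion is itself part of what is being controlled. Keeping all constants consistent---so that the single threshold $\epsilon$ depends, as the statement warns, on the full global geometry including the ADM mass $M$---is the real technical burden; by contrast the purely local extension and the final Carter--Robinson reduction are essentially already available from Section 2 and Theorem \ref{Thm:Gl2}.
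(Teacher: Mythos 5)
Your first and last steps (the local construction of $\K$ near $S_0$, and the final reduction to Carter--Robinson once axial symmetry is in hand) match the paper. But the middle of your argument --- deducing $\SS=O(\epsilon)$ near the horizon from \eqref{Tsmall} and then propagating that smallness over all of $\E$ by elliptic estimates for the reduced Mars--Simon system --- contains a genuine gap, and it is not the route the paper takes. The hypothesis \eqref{Tsmall} controls only $\g(\T,\T)$ on the bifurcation sphere, i.e.\ (as you correctly observe) the horizon angular velocity; it gives no pointwise control whatsoever on the curvature or on the Ernst potential, hence none on $\SS$, even in a neighborhood of the horizon. (The footnoted dependence of $\epsilon$ on the global geometry cannot rescue this: for a fixed spacetime $\SS$ is whatever it is, and no choice of threshold turns ``$\T$ small on $S_0$'' into ``$\SS$ small''.) Moreover your elliptic propagation step is circular in exactly the place where it is needed: the reduced equation for $\SS$ is elliptic only where $\T$ is timelike, the degeneration occurs on the ergosurface, and the size of the ergo-region is one of the unknowns of your bootstrap. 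As set up, the argument cannot close, and the paper explicitly abandons any global smallness statement for $\SS$ in Theorem \ref{thm:Gl3} (that was the mechanism of Theorem \ref{Thm:Gl2}, the ``second ingredient'' that is being replaced).

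What the paper does instead is to make the maximal hypersurface $\Sigma_1$, not the Mars--Simon tensor, the carrier of the global elliptic information. Following Sudarsky--Wald \cite{Su-Wa}, there is an integral identity on a maximal slice through the bifurcation sphere relating the total extrinsic curvature of $\Sigma_1$ to the angular momentum of the horizon; the smallness \eqref{Tsmall} makes the latter small, hence the total extrinsic curvature small, and a Hopf-lemma argument then shows that the entire ergo-region is contained in the small neighborhood of the horizon on which the rotational Killing field $\Z$ has already been constructed by the local rigidity result. Outside the ergo-region $\T$ is timelike, and there the strict $\T$-null convexity condition of Definition \ref{def:null-convex-T} is satisfied automatically (there are no nonzero null vectors orthogonal to a timelike $\T$, so the condition is vacuous), so the extension of $\Z$ to the rest of $\E$ requires neither the $y$-foliation nor any control of $\SS$. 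If you want to salvage your write-up, replace your second paragraph with this Sudarsky--Wald/Hopf mechanism; the surrounding structure you propose is then essentially the paper's.
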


  This  result should  be compared with     that  stated in    theorem \ref{Thm:Gl2}               in which  rigidity 
was proved, for the entire range $0\le a <M$, under a  global smallness  assumption on the Mars-Simon tensor associated
 to  the stationary  space-time.    We recall   that        the proof  of    theorem \ref{Thm:Gl2}    rested on  the following 
  ingredients:
\begin{enumerate}
\item  An unconditional local rigidity result,  discussed in section 2,   according to which  a second, rotational Killing vector-field  $\Z$ can be constructed in a small neighborhood of the 
 bifurcate  sphere of the horizon.
 \item An extension argument  for the  Killing vector-field $\Z$ based on    a   global  foliation of the  space-time  with   $\T$- conditional pseudo-convex    hypersurfaces.   
 The crucial $\T$- conditional pseudo-convexity  condition is ensured by the  assumed smallness of the  Mars-Simon tensor.
 \item Once $\Z$ is  globally extended, and thus the space-time is shown  to be  both stationary  and axisymmetric, one can appeal to the classical 
 Carter-Robinson theorem  to conclude the desired  rigidity. 
\end{enumerate} 

Theorem \ref{thm:Gl3}   is still based on the  first and third ingredients above but replaces  the second  one with  a new ingredient 
inspired  from    the classical work of Sudarsky and Wald \cite{Su-Wa} (see also \cite{Ca2})      on the staticity of stationary, axially 
symmetric,  black hole solutions 
with zero angular momentum.  The Sudarski-Wald   result  was based on a   simple  integral formula linking the total extrinsic curvature of   a regular maximal  hypersurface      $\Sigma$ imbedded   in    the space-time and passing through the bifurcate sphere,      with the angular momentum of the horizon.    It can be easily shown\footnote{This step 
 is based on   the additional   assumption of axial symmetry.}           that    zero ADM angular momentum
implies  vanishing angular momentum of the horizon and thus, in view of the above mentioned  formula,   the maximal hyper-surface has to be totally geodesic.  This then implies   the desired conclusion of \cite{Su-Wa}, 
 i.e the space-time  is static. The main observation  in the proof of theorem    \ref{thm:Gl3}     is that    a simple smallness   assumption of $\T$ on the bifurcate sphere\footnote{ This is  equivalent 
with  a   small angular momentum assumption   on the horizon.  It remains open whether this   condition can be replaced with 
  a smallness assumption  of the ADM angular momentum.}  implies  the smallness 
of  the total curvature of the  maximal  hypersurface. This can then be combined   with a  simple application of the  classical    Hopf   Lemma 
to conclude that the entire ergo-region of the black hole     can be covered  by   the   local    neighborhood of the horizon in which the 
second,  rotational, Killing vector-field  $\Z$ has been extended, according to step (1) above.   Away  from the ergo-region $\T$ is time-like   and thus   
 $\T$-conditional  pseudo-convexity  is automatically  satisfied. Thus,  the second Killing  vector-field   $\Z$ can   be easily extended      to the entire space-time  by the results  discussed in section 2.
 
 \subsection{ Einstein-Maxwell case}
 The results   of theorems    \ref{Thm:Gl1},  \ref{Thm:Gl2}       presented in this section have been extended to the  Einstein-Maxwell equations
   by W. Wong and P. Yu.             The analogue of the Mars Simon  tensor  was  discovered by Wong in \cite{Wong1}.   It consists of a pair of tensors,  
    one     related to the  curvature tensor and   the   second related to the Maxwell  field.   A Kerr-Newman solution is characterized by their simultaneous vanishing.   A slight modification of the    pair appears in \cite{Wong-Yu}.  The applications  to the rigidity problem appear in  \cite{Wong-Yu} as well as      \cite{Wong2} and \cite{Yu}.
   
\section{Conclusions}
Despite      statements to the contrary made   often  in   physics literature,   the rigidity  conjecture      remains wide open.   
     Though        a lot of progress was   made in the last ten years,       the  full scope of the conjecture   remains out of reach.
       The  global  results presented in this survey  are  mostly   limited    to  perturbative regimes.      Under  somewhat different   assumptions 
        they all assert that     that stationary solutions  closed   to a non-extremal Kerr   must be  isometric   to  a  a non-extremal Kerr    whose 
         parameters $a, M$ are determined  by  their    ADM mass   and angular momentum.   Despite  their   limitations 
       they offer  however a perspective   of  what one might expect  to encounter     in the general case.   
       To start with,    the    results illustrate 
        the important role played by null geodesics  perpendicular  to  the   stationary  Killing vectorfield $\T$.     Based on the experience
          we have accumulated so far,  we conjecture,  together with our collaborator  Spyros Alexakis,    the following   general conjecture.
        
        \medskip
          
    \noindent      {\bf Conjecture}[Alexakis-Ionescu-Klainerman].  \textit{ Any  asymptotically flat,  regular,   stationary    vacuum  solution, as  
     as  in subsection  \ref{sect:stationary}, which   admits no trapped null geodesics    perpendicular   to $\T$  must be 
     isometric to the exterior   part of a  non-extremal Kerr solution. }

\medskip
 All the three global results     discussed in the survey      are based on   the  fact
      that small, stationary perturbations  of  a     non-extremal   Kerr spacetime\footnote{Note however that the Kerr 
    family admits plenty of trapped  null geodesics.}    verifies 
       the hypothesis of the conjecture.
    It is conceivable   that  the conjecture can be proved  with current techniques,  based on unique continuation  methods. The conjecture leaves 
    however open the   question whether such     null geodesics   can be ruled  out in general.   It thus   illustrates an important aspect of the  general  case,
     namely      the fact that    we cannot  hope to prove the    full rigidity     conjecture  based  only on a continuation argument starting  form the horizon.
    Indeed such an argument may not distinguish  between   the given stationary     Killing vectorfield $\T$  and any other  possible Killing   vectorfield, such
     as  $\T+c \Z$  in  Kerr.   While, in Kerr, there are no trapped null geodesics perpendicular to $\T$  there are plenty of those perpendicular\footnote{In fact for any given trapped null geodesic  we can find a constant   $c$  such that  $\T+c \Z$  is perpendicular to it.  } to 
     $\T+c\Z$.  Thus  a full proof of the  rigidity  conjecture  must   rely    on  global properties of the space-time.

\begin{appendix}
\section{Proof of proposition \ref{prop:Carleman1}}
We    first restate  the proposition in a general   setting  of  an arbitrary Lorentzian  manifold  $(M, g)$, a domain 
$O\subset M$, $p\in\pr O$ and    $h$ a    defining, nondgenerate,  function for  $\pr O$         in a full neighborhood $ U_1$  of 
  $p$, i.e. $h<0$ in    $O\cap U_1$  and  $h=0$   on      $\pr O\cap U_1 $.   Moreover we assume that  $h$
  verifies      the   condition  \eqref{null-convex-strong}  in $U_1$. 
  \bea
\label{null-convex-strong'}
\begin{cases}
&\quad \qquad \qquad \qquad \qquad \qquad \qquad| d  h|  \ge  M^{-1}\\
&Y^\a Y ^\b  \left( \mu    g_{\a \b}- D_\a   D_\b h\right)+M|Y (h )|^2\,\, \geq  M^{-1}|Y|^2,
\end{cases}
\eea
uniformly, at all points of $U$, with $|Y|^2= (Y ^0)^2+(Y^1)^2+\ldots+(Y^d)^2$..

    Let  $U_\ep$ be small  neighborhoods of 
 $p$ such that $|h|\le  2^{-1}\ep$    in $U_\ep$ and define  the weight functions,
  $f_\ep: U_\ep \longrightarrow \RRR$
 \beaa
 f_\ep :=\log (\ep+h+e_p)
 \eeaa
 where $e_p$ is a small  perturbation of  $\ep+h$. More precisely, we say that  $e_\ep$ is a negligible perturbation 
 if 
 \beaa
 \sup_{U\ep}  |D^j e_\ep|\le \ep^2     \qquad \mbox{for}  \,\, j=0,1,2.
 \eeaa
 In particular   the    weights  $f_\ep$     verify \eqref{null-convex-strong} in $U_\ep$, uniformly in $\ep>0$.
 Also, uniformly in $U_\ep$,
 \beaa
 |Df_\ep|\le C\ep^{-1}
 \eeaa

\begin{proposition}
\label{prop:Carleman1}
 If $f_\ep$ are as above,     there exists  a sufficiently  small     $\ep>0$  and a  large constant $C_\ep>0$ 
   such that, for all $\phi \in C_0^2(U_\ep)$ and all sufficiently large  $\la>0$,
     \begin{equation}\label{ca1}
\lambda \cdot \|e^{-\lambda f_\ep }\cdot \phi\|_{L^2}+
\|e^{-\lambda f_\ep }\cdot D \phi\|_{L^2}\leq C_\ep    \lambda^{-1/2}\cdot \|e^{-\lambda f_\ep}\cdot\square_\g \phi\|_{L^2},
\end{equation}

\end{proposition}
\begin{remark}
Note that $C_\ep$    denotes a constant which depends only on  the small parameter $\ep$ but not on $\la$.   Throughout  the proof below we shrink $\ep>0$  whenever necessary and enlarge the constant  $C_\ep$.

\end{remark}
\begin{proof}
 We first  fix $\ep>0$.     
  Since all derivatives of $f=f_\ep $ are bounded on  $U=U_\ep$      it suffixes to prove ( with a different $C_\ep$ !), 
\begin{equation}\label{car1}
\lambda \cdot \|e^{-\lambda f}\cdot \phi\|_{L^2}+\| D(e^{-\lambda  f}\cdot \phi\big)\|_{L^2}\leq C_\ep\lambda ^{-1/2}\cdot \|e^{-\lambda  f}\cdot\square_\g \phi\|_{L^2}.
\end{equation}
To prove  estimate \eqref{car1} we start by setting,
\bea
\phi=e^{\la f}\psi
\eea
Observe that,
\beaa
e^{-\la f}\square (e^{\la f}\psi)&=&\square \psi+\la(2 D^\b f D_\b \psi+\square f\psi)+\la^2( D^\b f D_\b f)\psi\\
&=& L\psi+\square  f \psi
\eeaa
where,
\beaa
L\psi=\square\psi+ 2\la X(\psi) +\la^2 G\psi,\qquad   X=D^\a f\pr_\a,\qquad G=D^\be f  D_\b f.
\eeaa
Thus estimate  \eqref{car1}
follows from,
\bea
\label{car2}
\la\|\psi\|_{L^2}+ \|D\psi\|_{L^2}\le C_\ep \la^{-1/2}
 \|L\psi\|_{L^2},
\eea
Recall  the energy moment tensor of $\square=\square_g$,
\beaa
 Q_{\mu\nu }=D_\mu \psi D_\nu \psi-\frac{1}{2} g_{\mu\nu }( D^\si \psi  D_\si \psi).
\eeaa
Given a vectorfield  $X$  and  a  scalar function  $w$   we   define $P_\mu =P_\mu[X, w ]$
\beaa
P_\mu : &=& Q_{\mu \nu } X^\nu  - w  \phi\pr_\mu  \phi  +\frac 1 2 \pr_\mu w \phi^2
\eeaa
\begin{lemma}
The   one form $P_\mu =P_\mu[X, w]$  verifies the identity,
\bea
D^\mu P_\mu&=&(X(\psi) - w \psi )\square \psi +\frac{1}{2}Q_{\mu\nu}{}^{(X)}\pi^{\mu\nu}+\frac{1}{2}\square w\psi^2
 -w g(d\psi, d\psi)  \label{Main-pointwise}
\eea

\end{lemma}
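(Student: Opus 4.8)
The plan is to prove \eqref{Main-pointwise} by a direct computation: expand the divergence $D^\mu P_\mu$ term by term over the three pieces of $P_\mu = Q_{\mu\nu}X^\nu - w\psi\,D_\mu\psi + \frac12 D_\mu w\,\psi^2$ (with the scalar taken consistently to be $\psi$, so that $Q_{\mu\nu}$ and the left-hand side refer to the same field) and then collect. The only structural inputs are the divergence identity for the energy-momentum tensor and the symmetry of $Q_{\mu\nu}$; everything else is the Leibniz rule together with the symmetry of the Hessian of a scalar.

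First I would record the fundamental identity $D^\mu Q_{\mu\nu} = (\square\psi)\,D_\nu\psi$. Applying $D^\mu$ to $Q_{\mu\nu} = D_\mu\psi\,D_\nu\psi - \frac12 g_{\mu\nu}(D^\sigma\psi\,D_\sigma\psi)$, the Leibniz rule produces $(\square\psi)D_\nu\psi + D_\mu\psi\,D^\mu D_\nu\psi - D^\sigma\psi\,D_\nu D_\sigma\psi$, and the last two terms cancel because $D^\mu D_\nu\psi = D_\nu D^\mu\psi$ for a scalar. Differentiating the first piece of $P_\mu$ then gives
\[
D^\mu(Q_{\mu\nu}X^\nu) = (\square\psi)\,X(\psi) + Q_{\mu\nu}\,D^\mu X^\nu,
\]
and since $Q_{\mu\nu}$ is symmetric I may symmetrize the second factor, $Q_{\mu\nu}D^\mu X^\nu = \frac12 Q_{\mu\nu}(D^\mu X^\nu + D^\nu X^\mu) = \frac12 Q_{\mu\nu}\,{}^{(X)}\pi^{\mu\nu}$. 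This already recovers the $\square\psi\,X(\psi)$ contribution and the deformation-tensor term $\frac12 Q_{\mu\nu}{}^{(X)}\pi^{\mu\nu}$ of \eqref{Main-pointwise}.

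For the two lower-order pieces I would simply differentiate. The middle piece yields $D^\mu(-w\psi\,D_\mu\psi) = -\psi\,D^\mu w\,D_\mu\psi - w\,g(d\psi,d\psi) - w\psi\,\square\psi$, while the last piece yields $D^\mu(\frac12 D_\mu w\,\psi^2) = \frac12(\square w)\psi^2 + \psi\,D_\mu w\,D^\mu\psi$. The essential observation is that the cross terms $-\psi\,D^\mu w\,D_\mu\psi$ and $+\psi\,D_\mu w\,D^\mu\psi$ are exact opposites and cancel. Summing the three contributions and grouping $(\square\psi)X(\psi) - w\psi\,\square\psi = (X(\psi)-w\psi)\square\psi$ reproduces exactly the right-hand side of \eqref{Main-pointwise}.

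There is no genuine obstacle here: this is the standard multiplier (current) computation for the scalar wave operator, and the identity is forced once the divergence formula for $Q$ and its symmetry are in hand. The only places where care is needed are the two cancellations — the Hessian-symmetry cancellation inside $D^\mu Q_{\mu\nu}$ and the cancellation of the scalar cross terms — both of which are purely algebraic and leave no remainder.
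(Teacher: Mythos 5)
Your computation is correct and complete: the divergence identity $D^\mu Q_{\mu\nu}=(\square\psi)D_\nu\psi$, the symmetrization $Q_{\mu\nu}D^\mu X^\nu=\frac12 Q_{\mu\nu}{}^{(X)}\pi^{\mu\nu}$, and the cancellation of the cross terms $\mp\psi\,D^\mu w\,D_\mu\psi$ are exactly the ingredients needed, and you are right to read the $\phi$ in the paper's definition of $P_\mu$ as $\psi$. The paper states this lemma without proof, and your direct multiplier computation is the standard argument the authors evidently intend.
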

In our case we have $\square\psi =  L\psi -    2\la X(\psi) -\la^2 G\psi$.   Hence,
\beaa
D^\mu P_\mu&=&(X(\psi)-    w \psi) \left(L  \psi   -   2\la  X(\psi)      -\la^2 G\psi    \right)   \\
 &+&\frac{1}{2}Q_{\mu\nu}    \piX^{\mu\nu}
-   w   D^\mu \psi D_\mu \psi+   \frac 1 2   \square_g  w |\psi|^2
\eeaa
or, 
\beaa
( Div \, P) +\la  |X(\psi)|^2   -\frac{1}{2}Q_{\mu\nu}    \piX^{\mu\nu}+  w  D^\mu \psi D_\mu \psi&=& E
\eeaa
where,
\beaa
E&=& (X(\psi)-    w \psi) \left(L  \psi   -   2\la  X(\psi)      -\la^2 G\psi    \right) +\la |X(\psi)|^2 +    \frac 1 2   \square_g  w |\psi|^2\\
&=&  (X(\psi)-    w \psi)    L  \psi + (X(\psi)-    w \psi) \left[   -\la (X(\psi)- w \psi)  -\la (X(\psi)+ w \psi)     -\la^2 G\psi    \right)\\
&+&\la |X(\psi)|^2 +    \frac 1 2   \square_\g  w |\psi|^2\\
&=&(X(\psi)-    w \psi)    L  \psi - \la  (X(\psi)-    w \psi)^2-\la\left( |X(\psi)|^2-w^2\psi^2\right)+\la |X(\psi)|^2 +    \frac 1 2   \square_\g  w |\psi|^2\\\
&-&\la^2 G \psi(X(\psi)-    w \psi) \\
&=&(X(\psi)-    w \psi)    L  \psi - \la  (X(\psi)-    w \psi)^2+|\psi|^2 \left( \la w ^2  +\frac 1 2   \square_g  w  \right)-\la^2 G \psi(X(\psi)-    w \psi) 
\eeaa
  Note that,
  \beaa
  G\psi (X(\psi)-   w \psi) =\frac 1 2 G X^\mu  D_\mu  (\psi^2) -w G\psi^2 = D_\mu \left( \frac 1 2 G   X^\mu \psi^2  \right) -\frac 1 2 \psi^2\left[    D_\mu ( G   X^\mu ) +2 G w      \right] 
  \eeaa
Thus,
\begin{lemma}
We have the point wise identity,
  \beaa
   D^\mu P'_\mu+   \la |X(\psi)|^2    -\frac{1}{2}Q_{\mu\nu}    \piX^{\mu\nu}+ w  D^\mu \psi D_\mu \psi  -\frac{\la^2}{2} \psi^2\left[    \D_\mu ( G   X^\mu ) +2 G w      \right]&= & E'
  \eeaa
  where,
  \beaa
  P'_\mu=P_\mu+ \frac 1 2\la^2  G X_\mu \psi^2 
  \eeaa
and,
\beaa
E'&=&(X(\psi)-    w \psi)    L  \psi - \la  (X(\psi)-    w \psi)^2
+|\psi|^2 \left( \la w ^2  +\frac 1 2   \square_\g  w  \right)
\eeaa
\end{lemma}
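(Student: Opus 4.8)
The plan is to obtain the identity by a direct computation, starting from the divergence identity \eqref{Main-pointwise} of the preceding lemma and substituting the relation $\square\psi = L\psi - 2\la X(\psi) - \la^2 G\psi$, which is merely the definition of $L$ rearranged. Feeding this into the factor $(X(\psi) - w\psi)\square\psi$ on the right of \eqref{Main-pointwise}, and moving $-\frac{1}{2}Q_{\mu\nu}\piX^{\mu\nu}$ and $+w\,D^\mu\psi D_\mu\psi$ to the left together with an auxiliary term $+\la|X(\psi)|^2$, I arrive at the intermediate identity
\beaa
D^\mu P_\mu + \la|X(\psi)|^2 - \frac{1}{2}Q_{\mu\nu}\piX^{\mu\nu} + w\,D^\mu\psi D_\mu\psi = E,
\eeaa
with $E$ the expression displayed just before the lemma. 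The motivation for inserting $+\la|X(\psi)|^2$ by hand is that this is precisely the term that renders the left side coercive once the weight is chosen; the content of the lemma is that doing so leaves behind only controllable lower-order remainders.

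The algebraic core of the computation is the splitting
\beaa
-2\la X(\psi) = -\la\big(X(\psi) - w\psi\big) - \la\big(X(\psi) + w\psi\big),
\eeaa
so that multiplication by $(X(\psi) - w\psi)$ yields a perfect square $-\la(X(\psi) - w\psi)^2$ together with $-\la(|X(\psi)|^2 - w^2\psi^2)$. The $-\la|X(\psi)|^2$ here cancels exactly the auxiliary $+\la|X(\psi)|^2$ inserted above, leaving the clean contribution $+\la w^2\psi^2$. Gathering this with the $\frac{1}{2}\square_\g w\,\psi^2$ already present produces the factor $\la w^2 + \frac{1}{2}\square_\g w$ multiplying $|\psi|^2$, and reproduces the first three summands of the final $E'$.

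The only term not yet in final form is the cross term $-\la^2 G\psi(X(\psi) - w\psi)$. To handle it I would integrate by parts: writing $\psi X(\psi) = \frac{1}{2}X^\mu D_\mu(\psi^2)$ and applying the Leibniz rule gives
\beaa
G\psi\big(X(\psi) - w\psi\big) = D_\mu\Big(\frac{1}{2}G X^\mu\psi^2\Big) - \frac{1}{2}\psi^2\big[D_\mu(GX^\mu) + 2Gw\big].
\eeaa
Multiplying by $-\la^2$, the total-divergence part $-\la^2 D_\mu(\frac{1}{2}G X^\mu\psi^2)$ is absorbed into the flux by redefining $P'_\mu = P_\mu + \frac{1}{2}\la^2 G X_\mu\psi^2$, while the residual $\frac{\la^2}{2}\psi^2[D_\mu(GX^\mu) + 2Gw]$ is transferred to the left; this produces exactly the stated identity for $D^\mu P'_\mu$, with $E'$ now carrying only the source $(X(\psi)-w\psi)L\psi$, the favorable square $-\la(X(\psi)-w\psi)^2$, and the $|\psi|^2$ term. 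I do not expect any genuine obstacle, since the argument is purely algebraic; the only points demanding care are the bookkeeping of signs, the index placement $X_\mu$ versus $X^\mu$ when the cross term is folded into the divergence, and the verification that the hand-inserted $\la|X(\psi)|^2$ cancels as claimed — this last cancellation being the single non-obvious step that makes the final identity clean.
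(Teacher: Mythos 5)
Your proposal is correct and follows essentially the same route as the paper's own derivation: substitute $\square\psi = L\psi - 2\la X(\psi) - \la^2 G\psi$ into the divergence identity, insert $\la|X(\psi)|^2$, use the splitting $-2\la X(\psi) = -\la(X(\psi)-w\psi) - \la(X(\psi)+w\psi)$ to produce the perfect square and cancel the inserted term, and then absorb the cross term $-\la^2 G\psi(X(\psi)-w\psi)$ into the modified flux $P'_\mu = P_\mu + \frac{1}{2}\la^2 G X_\mu \psi^2$ via the pointwise Leibniz identity. All signs and bookkeeping check out against the paper's computation.
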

 Now,
  \beaa
   (X(\psi)-  w \psi)   L  \psi\le   \la^{-1}  | L\psi | ^2+\la |  X(\psi)-  w \psi|^2 
  \eeaa
Hence,
\beaa
E'\le   \la^{-1}  | L\psi | ^2+|\psi|^2 \left( \la w ^2  +\frac 1 2   \square_\g  w  \right)
\eeaa
 Since by   integration  $D^\a  P'_\a$ disappears,
 it suffices   to check that the desired   inequality
 \beaa
\la^2\|\psi\|_{L^2}^2+\|D\psi\|_{L^2}^2 \le C_\ep \la^{-1}
 \|L\psi\|_{L^2}^2
\eeaa
 for       $\la $  sufficiently large,  follows     by integrating the following pointwise inequality,
 \beaa
&&  \la |X(\psi)|^2    -\frac{1}{2}Q_{\mu\nu}    \piX^{\mu\nu}+ w  D^\mu \psi D_\mu \psi  -\frac{\la^2}{2} \psi^2\left[    D_\mu ( G   X^\mu ) +2 G w      \right]\\
&&\le  
    \la^{-1}  | L\psi | ^2+|\psi|^2 \left( \la w ^2  +\frac 1 2   \square_\g  w  \right)
 \eeaa
It  thus suffices\footnote{Note indeed  that the remaining  term $|\psi|^2 \left( \la w ^2  +\frac 1 2   \square_\g  w  \right)$   on the right  of the  above inequality 
 is  lower order and can be easily absorb.
}   to prove the following two  inequalities, for $C_\ep$ sufficiently  large  and $\la $ large,
\beaa
 \la |X(\psi)|^2 -\frac{1}{2}Q_{\mu\nu}    \piX^{\mu\nu}+ w  D^\mu \psi D_\mu \psi\ge  C_\ep^{-1}|D\psi|^2 \\
  -\frac{1 }{2} \left[    D_\mu ( G   X^\mu ) +2 G w      \right]\ge C_\ep^{-1} 
\eeaa
Recalling  the definition of $X=D^\a f  \pr_\a $ and $G= D_a f D^\a f $ we write,
\beaa
 -\frac{1}{2}Q_{\mu\nu}    \piX^{\mu\nu}+ w  D^\mu \psi D_\mu \psi=  \left[ -D^\mu D^\nu  f  +\frac 1 2  g^{\mu\nu}  (2w+ \square f)  \right] D_\mu \psi D_\nu \psi 
\eeaa
\beaa
    D_\mu ( G   X^\mu ) +2 G w   &=&   X(G)   +  G     D_\mu  X^\mu  +2 G w   = D^\a  f D_\a  ( D^\b f D_\b f)+G( \square f +2 w) \\
    &=&2 D^\a f   D^\b f  D_ \a D_\b f  +  D^\a  f D_\a f ( \square f +2 w)
 \eeaa 
Hence, it suffices to show the inequalities  (with  $2w'=  \square f +2 w$ ),
\bea
\label{car3}
 \la |D^\a f D_\a \psi |^2  +\left[-D_ \a  D_\b   f  +  w'  g_{\a\b } \right]    D^\a \psi D^\b  \psi &  \ge  &  C_\ep^{-1}|D\psi|^2     \\ 
-    \left(   D_ \a D_\b f  + w '    g_{\a\b} \right)   D^\a  f    D^\b  f  &\ge&   C_\ep^{-1}
 \label{car4}
\eea
Now recall that    $f=f_\ep=\log(h+\ep)    $. Therefore,
\beaa
D_\a  f&=& (h+\ep)^{-1}  D_\a  h\\
D_\a D_\b  f&=& (h+\ep)^{-1}  D_\a   D_\b h-   (h+\ep)^{-2} D_\a h   D_\b h
\eeaa
The inequality 
\eqref{car3} becomes,
   \beaa
   (\la  +1)  (h+\ep)^{-1}     |D^\a  h D_\a \psi |^2  +\left[-           D_\a  D_\b   h      +  w   (h+\ep) g_{\a\b } \right]    D^\a \psi D^\b \psi  \ge      C_\ep^{-1 }     (h+\ep)    |D\psi|^2
   \eeaa
   which follows if,
   \beaa
  \frac{1}{2\ep}   \la |D^\a  h D_\a \psi |^2+\left[  w   (h+\ep) g_{\a\b} -       D_\a  D_\b   h\right)]      D^\a \psi D^\b \psi   \ge   2\ep   C_ \ep^{-1}  |D\psi|^2
   \eeaa
   for some $C_\ep$   large. 
     The inequality \eqref{car4} becomes,   
     \beaa
     -\left[ (h+\ep)^{-1} D_\a  D_\b h-(h+\ep)^{-2}   D_\a h D_\b h \right] D^\a h D^\b h - w' \g_{\a\b} D^\a h D^\b h \ge C_\ep^{-1} (h+\ep)^2
     \eeaa
     or,
     \beaa
     (h+\ep)^{-1} (D_\a h D^\a h)^2-    D_\a  D_\b h  D^\a h D^\b h -w' (h+\ep)     (D_\a h D^\a h)  \ge C_\ep^{-1} (h+\ep)^3
     \eeaa
     It thus suffices to   have,    with a slightly different $C_{\ep}$,
     \beaa
     \frac{1}{2 \ep}  (D_\a h D^\a h)^2 -    D_\a  D_\b h  D^\a h D^\b h  -w' (h+\ep)     (D_\a h D^\a h)  \ge    C_\ep^{-1}  \ep^3 
     \eeaa           
     It thus  remains to show  that,  with the right choice of $\ep$ sufficiently small     and $\la $ sufficiently large  we can verify the inequalities,
     \bea
      \frac{1}{2 \ep }   \la |D^\a  h D_\a \psi |^2+\left[  w'   (h+\ep) g_{\a\b} -       D_\a  D_\b   h\right)]   D^\a \psi D^\b \psi  & \ge &    2 \ep C_1   |D\psi|^2\label{car5}\\
 \frac{1}{2 \ep}  (D_\a h D^\a h)^2 -    D_\a  D_\b h  D^\a h D^\b h  -w' (h+\ep)     (D_\a h D^\a h) & \ge &   C_\ep^{-1}  \ep^3    \label{car6}
     \eea
          In view of the  quantitative    strict  null convexity condition  \eqref{null-convex-strong} we have,    for  a sufficiently large  $M$, $\mu\in[-M, M]$  and    for all vector fields $Y$  in  $U$,
  \bea
Y^\a       Y ^\b (\mu    g_{\al\be}- D_\al  D_\be h )+M|Y h |^2\geq M ^{-1}|Y|^2,\qquad   | D h|\ge   M^{-1}.   \label{null-convex-strong'} 
\eea
Hence
 with $Y^\a =D^\a \psi$, $\la \ge M $, 
 \beaa
D ^\a \psi D ^\b  \psi ( \mu g_{\a\b }- D_\al  D_\be h )+\la |D^\a \psi  D_\a  (h )|^2\geq  M  ^{-1}|D \psi|^2,
 \eeaa
    Therefore, choosing $ w'=(h+\ep)^{-1}  \mu $, i.e. $w=(h+\ep)^{-1}\mu - \frac 1 2\square   f_\ep$, we deduce that  \eqref{car5} holds true
 provided that  $\ep\ll M^{-1}$.  
     
     If $D^\a h D_\a h (p)\neq 0$,  we can also  find        $\ep$ sufficiently small such   \eqref{car6}   holds  uniformly 
     on $U=U_\ep$.  Once again we need 
   $\ep\ll M^{-1}$. 
   If     $D_\a h D^\a h=0$  holds  at $p$,  we take  $Y= D^\a h  \pr_\a $   in   \eqref{null-convex-strong'}    and   derive,  at $p$,
   \bea
   - D_\a D_\b h    D_\a h D_\b h (p)  \ge M ^{-1}| dh(p) |^2\label{car7}
   \eea
   On the other hand   the inequality \eqref{car6} becomes, at $p$.
   \beaa
   -     D_\a   D_\b h D^\a h D^\b h      &\ge &   2 C_\ep^{-1}\ep ^3
   \eeaa
   In view of \eqref{car7}   this last inequality  is satisfied  if,
   \beaa
    M ^{-1}| dh(p) |^2\ge    2 C_\ep^{-1}\ep ^3.
   \eeaa
   Since    $|dh(p)|\ge M^{-1}$ 
   we   need   $M^{-3 } \gg  C_\ep   \ep^3  $.
   Thus,  to have both inequalities  satisfied on $U_\ep$   we  need,
   \beaa
   \ep \ll  M^{-1} 
   \eeaa
   for a sufficiently large    $C_\ep$
In other words given $M$, such that \eqref{null-convex-strong'}  is verified,   we  first choose $\ep  \ll  M^{-1}$   and  then choose  
 $C_\ep$  sufficiently large.

\end{proof}

\section{Proof of Proposition \ref{prop:BianchiS}  }
\label{sect:B2}
We  give a slightly  modified  definition of the Mars-Simon tensor.  
 \bea
    \SS_{\a\b \mu\nu}:&=&\RR_{\a\b \mu\nu}+6 h  ^{-1} \QQ_{\a\b \mu\nu},\label{eq:gen-SS}\\
    \QQ_{\a\b \mu\nu}:&=&\FF_{\a\b }\FF_{\mu\nu}-\frac{1}{3}\FF^2\II_{\a\b \mu\nu},\nn\\
    \II_{\a\b \mu\nu}:&=&(\g_{\a\mu}\g_{\b\nu}-\g_{\a \nu}\g_{\b\mu}+i\in_{\a\b \mu\nu})/4\nn
    \eea
 with  an arbitrary   function     $h$.
\begin{proposition}
\label{prop:BianchiS}
The tensor $\SS$  verifies the  equation, 
\beaa
\D^\a\SS_{\a\b\mu\nu}
&=&-6h^{-1}\T^\si\SS_{\si\rho\ga\de}\big[\FF_\b\,^\rho\, \de_\mu^\ga\, \de_\nu^\de+\frac 2 3 
\II^{\rho}\,_{\b\mu\nu}\FF^{\ga\de}\big]  - 6 h^{-2}   E^\rho   \QQ_{\rho\b \mu\nu}
\eeaa
where,
\beaa
E_\rho=\D_\rho h+ \si_\rho
\eeaa
\end{proposition}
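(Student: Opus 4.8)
The plan is to differentiate the defining relation \eqref{eq:gen-SS} directly and organize the result so that the two pieces of the claimed right-hand side appear separately: the $\SS$-linear piece and a purely algebraic $\QQ$-piece. Writing $\SS_{\a\b\mu\nu}=\RR_{\a\b\mu\nu}+6h^{-1}\QQ_{\a\b\mu\nu}$ and applying $\D^\a$, the curvature term drops out, since $\D^\a\RR_{\a\b\mu\nu}=0$ by the contracted second Bianchi identity in the Ricci-flat case. Leibniz on the factor $6h^{-1}$ then gives
\begin{equation*}
\D^\a\SS_{\a\b\mu\nu}=6h^{-1}\D^\a\QQ_{\a\b\mu\nu}-6h^{-2}(\D^\a h)\,\QQ_{\a\b\mu\nu}.
\end{equation*}
The last term is already the $\D_\rho h$ half of the claimed factor $E_\rho=\D_\rho h+\si_\rho$, so the whole problem reduces to computing $\D^\a\QQ_{\a\b\mu\nu}$ and showing that $6h^{-1}\D^\a\QQ$ supplies both the main $\SS$-term and the missing $-6h^{-2}\si^\rho\QQ_{\rho\b\mu\nu}$.

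To evaluate $\D^\a\QQ$ I would use $\QQ_{\a\b\mu\nu}=\FF_{\a\b}\FF_{\mu\nu}-\tfrac13\FF^2\II_{\a\b\mu\nu}$ together with three structural facts: the Maxwell equation $\D^\a\FF_{\a\b}=0$ (the corollary to Lemma \ref{le:comm-K1}); that $\II$ is built from $\g$ and the volume form and is therefore parallel, $\D\II=0$; and the self-dual form of Lemma \ref{le:comm-K1}, namely $\D_\rho\FF_{\mu\nu}=\T^\la\RR_{\la\rho\mu\nu}$. These collapse the divergence to
\begin{equation*}
\D^\a\QQ_{\a\b\mu\nu}=\FF_{\a\b}\,\T^\la\RR_\la{}^\a{}_{\mu\nu}-\tfrac{2}{3}\,\FF^{\si\tau}\T^\la\RR_\la{}^\a{}_{\si\tau}\,\II_{\a\b\mu\nu},
\end{equation*}
so that every surviving derivative of $\FF$ has been traded for the self-dual curvature $\RR$ contracted once with $\T$.

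The key move is to reinsert $\RR_\la{}^\a{}_{\mu\nu}=\SS_\la{}^\a{}_{\mu\nu}-6h^{-1}\QQ_\la{}^\a{}_{\mu\nu}$. After multiplying by the outer $6h^{-1}$, the $\SS$-part reproduces, upon relabelling dummy indices, using the antisymmetry $\FF_\b{}^\rho=-\FF^\rho{}_\b$ and the projection identity $\II_{\la\rho\si\tau}\FF^{\si\tau}=\FF_{\la\rho}$, precisely the main term
\begin{equation*}
-6h^{-1}\T^\si\SS_{\si\rho\ga\de}\Big[\FF_\b{}^\rho\de_\mu^\ga\de_\nu^\de+\tfrac23\II^\rho{}_{\b\mu\nu}\FF^{\ga\de}\Big]
\end{equation*}
of the proposition. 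The $\QQ$-part carries the second factor $h^{-1}$, and is thus a derivative-free, $h^{-2}$ expression in $\T$, $\FF$ and $\QQ$.

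It then remains to show that this residual $h^{-2}$ bundle equals $-6h^{-2}\si^\rho\QQ_{\rho\b\mu\nu}$. Here I would exploit the algebra of self-dual $2$-forms: the squaring identity $\FF_{\a\ga}\FF_\b{}^\ga=\tfrac14\g_{\a\b}\FF^2$, its consequence $\FF^{\si\tau}\QQ_{\la\rho\si\tau}=\tfrac23\FF^2\FF_{\la\rho}$ (itself a corollary of $\II_{\la\rho\si\tau}\FF^{\si\tau}=\FF_{\la\rho}$), and the definition $\si_\mu=2\T^\a\FF_{\a\mu}$ of the Ernst one-form. These reduce the derivative-free terms to a multiple of $\T^\a\FF_\a{}^\rho\QQ_{\rho\b\mu\nu}$; since $\si^\rho=2\T^\a\FF_\a{}^\rho$, tracking the coefficients yields exactly $-6h^{-2}\si^\rho\QQ_{\rho\b\mu\nu}$, which combines with the $-6h^{-2}(\D^\rho h)\QQ_{\rho\b\mu\nu}$ of the first step into $-6h^{-2}E^\rho\QQ_{\rho\b\mu\nu}$. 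I expect this final collapse to be the main obstacle: the $\QQ$-part expands into several distinct contractions (including those carrying $\in_{\a\b\mu\nu}$ through $\II$), and verifying that all of them—not merely the transparent $\T_\b\FF_{\mu\nu}$ contribution—reassemble into the single tensor $\si^\rho\QQ_{\rho\b\mu\nu}$ is the one genuinely intricate computation, for which the self-dual projection identities are indispensable.
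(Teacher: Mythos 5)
Your proposal is correct and follows essentially the same route as the paper's proof in Appendix B: differentiate the definition of $\SS$, kill $\D^\a\RR$ by the Bianchi identity, trade $\D\FF$ for $\T^\la\RR_{\la\cdot\cdot\cdot}$, reinsert $\RR=\SS-6h^{-1}\QQ$, and collapse the derivative-free quadratic residual to $-\si^\rho\QQ_{\rho\b\mu\nu}$ via the self-dual projection identities (the paper's identity \eqref{b8.2''} is exactly what makes its term $K_{\b\mu\nu}$ vanish). The only difference is bookkeeping — the paper substitutes $\SS$ into the formulas for $\D_\a\FF_{\b\ga}$ and $\D_\a(\FF^2)$ before taking the divergence of $\QQ$, while you substitute afterwards — and you correctly single out the residual collapse as the one genuinely intricate step.
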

\begin{remark}
Note that proposition \ref{prop:BianchiS} is an immediate consequence      for   the special case  $h=(1-\si)$.
\end{remark}
\begin{proof} In view of          \eqref{eq:Ricci}   and the  definitions of $\SS$ and $\si$,
\beaa
\D_\a\FF_{\b\ga}&=&\T^\la \RR_{\la\a\b\ga}=   \T^\la \SS_{\la\a\b\ga}-    6h^{-1}   \T^\la \QQ_{\la\a\b\ga}         \\
&=&\T^\la \SS_{\la\a\b\ga}-
6h^{-1}  \T^\la\big(\FF_{\la\a}\FF_{\b\ga}-\frac 1 3 \FF^2 \II_{\la\a\b\ga}\big) \\
&=&\T^\la \SS_{\la\a\b\ga}-3h^{-1}\sigma_\a\FF_{\b\ga}+2h^{-1}\FF^2\T^\la \II_{\la\a\b\ga}
\eeaa
i.e.,
\bea
\label{eq:deriv-FF}
\D_\a\FF_{\b\ga}&=&2h^{-1}\FF^2\T^\la \II_{\la\a\b\ga}-3h^{-1}\sigma_\a\FF_{\b\ga}+\T^\la\SS_{\la\a\b\ga}
\eea
Thus,
\beaa
\D_\a[h^{-3}\FF_{\b\ga}]&=&h^{-3}\left[2h^{-1}\FF^2\T ^\la \II_{\la\a\b\ga}-3h^{-1}\sigma_\a\FF_{\b\ga}+ \T^\la\SS_{\la\a\b\ga} \right]-
3h^{-4} \D_\a h \FF_{\b\ga}\\
&=&2h^{-4}\FF^2\T^\si\II_{\si\a\b\ga}-3h^{-4}(\si_\a+\D_\a h)\FF_{\b\ga}+h^{-3}\T^\la \SS_{\la\a\b\ga}\\
&=&2h^{-4}\FF^2\T^\si\II_{\si\a\b\ga}-3h^{-4}  E_\a \FF_{\b\ga}+h^{-3}\T^\la \SS_{\la\a\b\ga}
\eeaa
We record this  result  for future reference,
\bea
\D_\a[h^{-3}\FF_{\b\ga}]&=&2h^{-4}\FF^2\T^\si\II_{\si\a\b\ga}-3h^{-4}  E_\a \FF_{\b\ga}+h^{-3}\T^\la \SS_{\la\a\b\ga}\label{eq:interm-H}
\eea

Since  $ \II_{\la\a\b\ga}\FF^{\b\ga}=\FF_{\la\a}$,
\beaa
\FF^{\b\ga}D_\a\FF_{\b\ga}&=&-3h^{-1}\sigma_\a\FF^2+2h^{-1}\FF^2\T ^\la\FF_{\la\a}+\T^\la \SS_{\la\a\b\ga}\FF^{\b\ga}\\
&=&-2h^{-1}\sigma_\a\FF^2+\T^\la \SS_{\la\a\b\ga}\FF^{\b\ga}
\eeaa
Thus,
\bea
\D_\a(\FF^2)&=&-4h^{-1}\sigma_\a\FF^2+2\T ^\la \SS_{\la\a\b\ga}\FF^{\b\ga}\label{ge4}
\eea
We now calculate, using  \eqref{eq:deriv-FF}, \eqref{ge4}          and   \eqref{eq:Max2'},
\beaa
\D^\a \QQ_{\a\b\mu\nu}&=&\g^{\rho\a}\D_\rho\big(\FF_{\a\b}\FF_{\mu\nu}-(1/3)\FF^2\II_{\a\b\mu\nu}\big)\\
&=&\g^{\rho\a}\FF_{\a\b}\D_\rho\FF_{\mu\nu}+\g^{\rho\a}\FF_{\mu\nu}\D_\rho\FF_{\a\b}-(1/3)\g^{\rho\a}\II_{\a\b\mu\nu}\D_\rho\FF^2\\
&=& \g^{\rho\a}\FF_{\a\b}\D_\rho\FF_{\mu\nu}-(1/3)\g^{\rho\a}\II_{\a\b\mu\nu}\D_\rho\FF^2\\
&=&-\FF_\b\,^\rho
[2h^{-1}\FF^2\T^\si\II_{\si\rho\mu\nu}-3h^{-1}\sigma_\rho\FF_{\mu\nu}+\T^\si\SS_{\si\rho\mu\nu}]\\
&-&(1/3)\II_{\a\b\mu\nu}[-4h^{-1}\sigma^\a\FF^2+2\g^{\a\rho}( \T ^\si\SS_{\si\rho \ga\de})\FF^{\ga\de}]\\
&=&-2h^{-1}\FF^2\FF_\b^{\,\,\,\,\rho}\T^\si\II_{\si\rho\mu\nu}+3h^{-1}\FF_{\b\rho}\sigma^\rho\FF_{\mu\nu}+
(4/3)h^{-1}\FF^2\II_{\a\b\mu\nu}\sigma^\a\\
&-&\big[\FF_\b\,^\rho  \T^\si\SS_{\si\rho\mu\nu}+(2/3)\II_{\a\b\mu\nu}\g^{\a\rho}\FF^{\ga\de}\T^\si\SS_{\si\rho \ga\de}\big].
\eeaa
We deduce,
\beaa
\D^\a \QQ_{\a\b\mu\nu}&=&h^{-1} A_{\b\mu\nu}+B_{\b\mu\nu}\\
A_{\b\mu\nu}&=&-2\FF^2 \FF_\b\,^\rho \T^\si\II_{\si\rho\mu\nu}
+3\FF_{\b\rho}\sigma^\rho\FF_{\mu\nu}+(4/3)\FF^2\II_{\rho\b\mu\nu}\sigma^\rho.\\
B_{\b\mu\nu}&=&-\big[\FF_\b \,^\rho \T^\si\SS_{\si\rho\mu\nu}+(2/3)\II_{\a\b\mu\nu}\g^{\a\rho}\FF^{\ga\de}\T^\si\SS_{\si\rho \ga\de}\big]\\
\eeaa
Recalling the definition of $\QQ$ we derive, 
\beaa
A_{\b\mu\nu}&=&-2\FF^2\FF_\b\,^\rho  \T^\si\II_{\si\rho\mu\nu}+2\FF_{\b\rho}\sigma^\rho\FF_{\mu\nu}+\FF^2\II_{\rho\b\mu\nu}\sigma^\rho   -  \si^\rho\QQ_{\rho\b\mu\nu}\\
&=&K_{\b\mu\nu}-  \si^\rho\QQ_{\rho\b\mu\nu}
\eeaa
where, 
\beaa
K_{\b\mu\nu}&=&-2\FF^2\FF_\b\, ^\rho \T^\si\II_{\si\rho\mu\nu}+2\FF_{\b\rho}\sigma^\rho\FF_{\mu\nu}+\FF^2\II_{\rho\b\mu\nu}\sigma^\rho 
\eeaa
Making use of the identity,
 \bea
 \FF_{\mu}^{\,\,\,\si}\II_{\nu\sigma \a\b} +\FF_{\nu}^{\,\,\,\si} \II_{\mu\sigma\a\b}=
 \frac{1}{2}\g_{\mu\nu}\FF^{\ga\de}\II_{\ga\de\a\b}     =     \frac{1}{2}\g_{\mu\nu}  \FF_{\a\b}    .\label{b8.2''} 
\eea
and  $\si_\rho=2\T^\si\FF_{\si\rho}$ we derive,
\beaa
K_{\b\mu\nu}&=&-2\FF^2\FF_\b\,^\rho \T^\si\II_{\si\rho\mu\nu}+\FF^2\II_{\rho\b\mu\nu}\sigma^\rho+2\FF_{\b\rho}\sigma^\rho\FF_{\mu\nu}\\
&=&   -2\FF^2\big(  \FF_\b\,^\rho \T^\si\II_{\si\rho\mu\nu}-\II_{\rho\b\mu\nu} \T^\si\FF_{\si\rho} \big) +4\T^\si\FF_{\si}\,^\rho \FF_{\b\rho}\FF_{\mu\nu}\\
&=&-2\FF^2\T^\si(\FF_\b\,^\rho \II_{\si\rho\mu\nu}+\FF_\si\,^\rho \II_{\b\rho\mu\nu})+4\T^\si\FF_{\si}\,^\rho \FF_{\b\rho}\FF_{\mu\nu}\\
&=&-2\FF^2\T^\si\cdot (1/2)\g_{\b \si}\FF_{\mu\nu}+4\T^\si\FF_{\mu\nu}\cdot (1/4)\g_{\si\b}\FF^2\\
&=&0.
\eeaa
Consequently,
\beaa
\D^\a \QQ_{\a\b \mu\nu}&=&-h^{-1}\sigma^\rho\QQ_{\rho\b\mu\nu} +B_{\b\mu\nu}
\eeaa
  from which we deduce, recalling  $\si_\rho+\D_\rho h=E_\rho$
  \beaa
  \D^\a\big(h^{-1}  \QQ_{\a\b \mu\nu})&=&-h^{-2} \sigma^\rho\QQ_{\rho\b\mu\nu}+h^{-1} B_{\b\mu\nu}-  h^{-2}   \D^\rho  h   \QQ_{\rho\b \mu\nu}\\
  &=&h^{-1} B_{\b\mu\nu}-h^{-2} E^\rho  \QQ_{\rho\b \mu\nu}
  \eeaa
   Finally, recalling the definitions of $\SS$  and $B$,
  we deduce,
  \beaa
 \D^\a  \SS_{\a\b \mu\nu} &=&\D^\a \RR_{\a\b \mu\nu}+ 6 \D^\a \big(h^{-1}\QQ_{\a\b\mu\nu} \big)
 =6h^{-1} B_{\b\mu\nu}   - 6 h^{-2}    E^\rho  \QQ_{\rho\b \mu\nu}\\
 &=&-6h^{-1}\big[\FF_\b\,^\rho \T^\si\SS_{\si\rho\mu\nu}+(2/3)\II^\rho\,_{\b\mu\nu}\FF^{\ga\de}\T^\si\SS_{\si\rho \ga\de}\big]   - 6 h^{-2}   E^\rho     \QQ_{\rho\b \mu\nu}   \\
 &=&-6h^{-1}\T^\si\SS_{\si\rho\ga\de}\big[\FF_\b\,^\rho\, \de_\mu^\ga\, \de_\nu^\de+\frac 2 3 
\II^{\rho}\,_{\b\mu\nu}\FF^{\ga\de}\big] - 6 h^{-2}   E^\rho   \QQ_{\rho\b \mu\nu}
\eeaa

\end{proof}
\subsection{ Second Mars Tensor}
In \cite{Ma2} Mars   was able to give an alternative, stronger, characterization of the Kerr  solution in terms of         the vanishing  of the  tensor
 $\SS_{\a\b\mu\nu}\FF^{\mu\nu}$.  In what follows we show that a  simple modification of that tensor verifies a Maxwell equation.   We choose,
 \bea
 h=C(-\FF^2)^{1/4}
 \eea
 in the  generalized definition of $\SS$ in  \eqref{eq:gen-SS}, where $C$ is a constant      to be determined. With this choice of $h$ we define the complex self-dual 
 $2$-from,
\begin{equation}\label{ge30}
 \HH_{\a\b}:=h^{-3}\SS_{\a\b \mu\nu}\FF^{\mu\nu}.
 \end{equation}
Since $\QQ_{\a\b\mu\nu}\FF^{\mu\nu}=\frac 2 3 \FF^2 \FF_{\a\b}$ we also have,
\bea
\label{eq:forHH}
\HH_{\a\b}&=&h^{-3} \RR_{\a\b\mu\nu}\FF^{\mu\nu}- 4 C^{-4} \FF_{\a\b}.
\eea
\begin{proposition} 
\label{propp-Max}
The self-dual complex $2$-form $\HH_{\a\b}$ defined above
 verifies the  Maxwell equations,
 \begin{equation}\label{eq:Max}
\D^\a\HH_{\a\b}= -h^{-3}\T^\si(\SS\cdot\SS)_{\si\b} -3h^{-1}E^\rho\HH_{\rho\b}, \qquad
\end{equation}
where,
\beaa
(\SS\cdot\SS)_{\si\b}=\SS_\b\,^{\rho\mu\nu}\SS_{\si\rho \mu\nu } 
\eeaa
and,
\bea
 E_\rho :=\si_\rho +\D_\rho  h=-\frac 1 2  C^4 \T ^\si\HH_{\si\rho}
 \eea
 \end{proposition}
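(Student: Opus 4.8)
The plan is to differentiate the defining expression $\HH_{\a\b}=h^{-3}\SS_{\a\b\mu\nu}\FF^{\mu\nu}$ from \eqref{ge30} directly, feeding in the two divergence identities already available in this appendix: Proposition \ref{prop:BianchiS} for $\D^\a\SS_{\a\b\mu\nu}$, and the recorded formula \eqref{eq:interm-H} for $\D_\a(h^{-3}\FF_{\b\ga})$. Reading $\HH_{\a\b}=\SS_{\a\b\mu\nu}\,(h^{-3}\FF^{\mu\nu})$ and applying Leibniz, I would split
\[\D^\a\HH_{\a\b}=(\D^\a\SS_{\a\b\mu\nu})\,h^{-3}\FF^{\mu\nu}+\SS_{\a\b\mu\nu}\,\D^\a(h^{-3}\FF^{\mu\nu}).\]

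Before touching either term I would establish the auxiliary identity $E_\rho=-\frac12 C^4\T^\si\HH_{\si\rho}$. Differentiating $h=C(-\FF^2)^{1/4}$ gives $\D_\rho h=-\frac14 C^4 h^{-3}\D_\rho(\FF^2)$; inserting \eqref{ge4}, together with $\SS_{\si\rho\ga\de}\FF^{\ga\de}=h^3\HH_{\si\rho}$, the relation $h^4=C^4(-\FF^2)$ (equivalently $h^{-4}\FF^2=-C^{-4}$) and $\si_\rho=2\T^\si\FF_{\si\rho}$, the two copies of $\si_\rho$ cancel and leave the claim. This identity is both one of the stated conclusions and, crucially, the mechanism behind the final cancellation.

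For the first term I would substitute Proposition \ref{prop:BianchiS} and contract with $h^{-3}\FF^{\mu\nu}$, using $\II^\rho{}_{\b\mu\nu}\FF^{\mu\nu}=\FF^\rho{}_\b$, $\QQ_{\rho\b\mu\nu}\FF^{\mu\nu}=\frac23\FF^2\FF_{\rho\b}$, and again $\SS_{\si\rho\ga\de}\FF^{\ga\de}=h^3\HH_{\si\rho}$ with $h^{-4}\FF^2=-C^{-4}$; the Weyl-source terms collapse (after using $\FF^\rho{}_\b=-\FF_\b{}^\rho$) to $-2h^{-1}\T^\si\HH_{\si\rho}\FF_\b{}^\rho$, and the $E$-source term to $4C^{-4}h^{-1}E^\rho\FF_{\rho\b}$. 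For the second term I would raise the indices in \eqref{eq:interm-H} and contract against $\SS_{\a\b\mu\nu}$, producing three pieces. The $\II$-piece $2h^{-4}\FF^2\T^\si\SS_{\a\b\mu\nu}\II_\si{}^{\a\mu\nu}$ vanishes because $\SS$ is a trace-free self-dual Weyl field: the two real traces die by trace-freeness, while the Levi-Civita part, via the self-duality relation $\frac12\SS_{\a\b\mu\nu}\in^{\mu\nu}{}_{\ga\de}=-i\SS_{\a\b\ga\de}$, reduces to the trace $\SS^\a{}_{\b\si\a}$, which again vanishes. The $E$-piece gives exactly $-3h^{-1}E^\rho\HH_{\rho\b}$, and the remaining piece $h^{-3}\T^\si\SS_{\a\b\mu\nu}\SS_\si{}^{\a\mu\nu}$, after using $\SS_{\a\b\mu\nu}=-\SS_{\b\a\mu\nu}$ and the definition of $(\SS\cdot\SS)_{\si\b}$, becomes $-h^{-3}\T^\si(\SS\cdot\SS)_{\si\b}$.

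Adding the two terms, the desired right-hand side $-h^{-3}\T^\si(\SS\cdot\SS)_{\si\b}-3h^{-1}E^\rho\HH_{\rho\b}$ appears at once, provided the leftover $-2h^{-1}\T^\si\HH_{\si\rho}\FF_\b{}^\rho+4C^{-4}h^{-1}E^\rho\FF_{\rho\b}$ cancels. Substituting $E^\rho=-\frac12 C^4\T^\si\HH_\si{}^\rho$ turns the second summand into $-2h^{-1}\T^\si\HH_\si{}^\rho\FF_{\rho\b}$, and the antisymmetry $\HH_{\si\rho}\FF_\b{}^\rho=-\HH_\si{}^\rho\FF_{\rho\b}$ shows the two are exact negatives, so the leftover vanishes. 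I expect the main obstacle to be precisely the bookkeeping of self-duality in the vanishing $\II$-piece — the real traces and the $\in$-contraction must be handled with the Weyl symmetries of $\SS$ in hand — whereas the $E$- and $(\SS\cdot\SS)$-pieces and the concluding cancellation are routine once the identity for $E_\rho$ is in place.
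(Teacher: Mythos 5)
Your proposal is correct and follows essentially the same route as the paper's proof: the same Leibniz split of $\D^\a\HH_{\a\b}$, the same prior derivation of $E_\rho=-\tfrac12 C^4\T^\si\HH_{\si\rho}$ from \eqref{ge4}, and the same use of Proposition \ref{prop:BianchiS} and \eqref{eq:interm-H} for the two terms, with the identical final cancellation. The only (welcome) addition is that you actually justify the vanishing of $\SS^\rho{}_\b{}^{\mu\nu}\II_{\si\rho\mu\nu}$ via trace-freeness and self-duality, which the paper merely asserts.
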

\begin{proof}
Recall   \eqref{ge4},
\beaa
\D_\a(\FF^2)&=&-4h^{-1}\sigma_\a\FF^2+2\T ^\la \SS_{\la\a\b\ga}\FF^{\b\ga}=      4h^{-1}\sigma_\a (h C^{-1})^4 +2\T ^\la \SS_{\la\a\b\ga}\FF^{\b\ga}\\
&=&4 C^{-4} h^3 \si_\a +2\T ^\la \SS_{\la\a\b\ga}\FF^{\b\ga}
\eeaa
Hence,
\beaa
C^{-1}D_\a h&=&  \D_\a\left[     (-\FF^2) ^{1/4}\right]=       - \frac 1 4 \D_\a\FF^2(-\FF^2)^{-3/4}=-\frac 1 4 [ h^4 C^{-4}]^{-3/4}  ( \D_\a\FF^2)\\
&= &-\frac 1 4 C^3  h^{-3} ( \D_\a\FF^2)   = -\frac 1 4 C^3 h^{-3} \big(4 C^{-4} h^3 \si_\a +2\T ^\la \SS_{\la\a\b\ga}\FF^{\b\ga}\big)\\
&=&-C^{-1} \si_\a - \frac 1 2 C^3    \T^\la \HH_{\la \a}
\eeaa
or,
\bea
\D_\b h&=&-\si_\b-  \frac 1 2 C^4 \T^\si\HH_{\si\b}. \label{formula-Dh}
\eea
We deduce,
\begin{equation}\label{ge31}
\begin{split}
E_\b
&=-\frac 1 2  C^4 \T ^\si\HH_{\si\b}
\end{split}
\end{equation}
We now calculate,
\beaa
\D^\a \HH_{\a \b}=\D^\a[\SS_{\a \b\mu\nu}\cdot h^{-3}\FF^{\mu\nu}]=\D^\a\SS_{\a \b\mu\nu}\cdot h^{-3}\FF^{\mu\nu}+\SS_{\a\b\mu\nu}\D^\a[h^{-3}\FF^{\mu\nu}].
\eeaa
Using proposition \ref{prop:BianchiS} ,
\beaa
\D^\a\SS_{\a\b\mu\nu}
&=&-6h^{-1}\T^\si\SS_{\si\rho\ga\de}\big[\FF_\b\,^\rho\, \de_\mu^\ga\, \de_\nu^\de+\frac 2 3 
\II^{\rho}\,_{\b\mu\nu}\FF^{\ga\de}\big] - 6 h^{-2}   E^\rho   \QQ_{\rho\b \mu\nu}
\eeaa
we deduce,
\beaa
\D^\a\SS_{a\b\mu\nu}\cdot h^{-3}\FF^{\mu\nu}&=&-6h^{-4}\FF^{\mu\nu}\big[\FF_\b\,^\rho \T^\si\SS_{\si\rho\mu\nu}+(2/3)\II^\rho\,_{\b\mu\nu}\FF^{\ga\de}\T^\si\SS_{\si\rho \ga\de}\big]  \\
&-&    6 h^{-2}   E^\rho   \QQ_{\rho\b \mu\nu} h^{-3} \FF^{\mu\nu}  \\
&=&-6h^{-4}\FF^{\mu\nu}\big[\FF_\b\,^\rho \T^\si\SS_{\si\rho\mu\nu}+(2/3)\II^\rho\,_{\b\mu\nu}\FF^{\ga\de}\T^\si\SS_{\si\rho \ga\de}\big]\\
&-&    6 h^{-5}   E^\rho \left( \FF_{\rho\b }\FF_{\mu\nu}-\frac{1}{3}\FF^2\II_{\rho\b \mu\nu} \right)  \FF^{\mu\nu}\\
  &=&-6h^{-1}\FF_\b\,^\rho \T^\si\HH_{\si\rho}+4h^{-1}\FF_{\b}\,^\rho \T^\si\HH_{\si\rho}   -4  h^{-5}(\FF^2)    E^\rho    \FF_{\rho\b }   \\
 &=&-2h^{-1} \FF_\b\,^\rho \T^\si\HH_{\si\rho} +4  h^{-2}(\FF^2)    E^\rho    \FF_{\b\rho } \\
 &=&-2h^{-1} \FF_\b\,^\rho \T^\si\HH_{\si\rho} + 4  h^{-5}  ( -h ^4/C^4)E^\rho    \FF_{\b\rho }\\
&=& -2h^{-1} \FF_\b\,^\rho \T^\si\HH_{\si\rho} +  4  h^{-1} C^{-4} E^\rho    \FF_{\b\rho }\\
&=& -2h^{-1} \FF_\b\,^\rho\left( \T^\si\HH_{\si\rho}+2 C^{-4}E_\rho\right)
\eeaa
Thus, in view of \eqref{ge31},
\beaa
D^\a\SS_{a\b\mu\nu}\cdot h^{-3}\FF^{\mu\nu}&=&0.
\eeaa
 On the other hand, recalling \eqref{eq:interm-H} 
 \beaa
\D_\rho [h^{-3}\FF_{\mu\nu }]&=&2h^{-4}\FF^2\T^\si\II_{\si\rho\mu\nu }-3h^{-4}  E_\rho \FF_{\mu\nu }+h^{-3}\T^\la \SS_{\la\rho\mu\nu}
\eeaa
we have,
\beaa
\SS_{\rho\b\mu\nu}\D^\rho[h^{-3}\FF^{\mu\nu}]&=&\SS^{\rho}\,_{\b} \,^{\mu\nu}\left[     2h^{-4}\FF^2\T^\si\II_{\si\rho\mu\nu }+   h^{-3} \T^\si\SS_{\si\rho \mu\nu }-3h^{-4}  E_\rho \FF_{\mu\nu }\right]
\eeaa
Observe that $\SS^\rho \,_\b\, ^{\mu\nu}\II_{\si\rho\mu\nu }=0$
 Thus,
\beaa
\SS_{\rho\b\mu\nu}\D^\rho[h^{-3}\FF^{\mu\nu}]&=&\SS^ \rho \,_\b\,^{\mu\nu}  h^{-3} \T ^\si\SS_{\si\rho \mu\nu }     - 3 h^{-1}\HH^\rho\,_\b  E_\rho     
\eeaa
Finally we deduce,
\beaa
\D^\a \HH_{\a \b}&=&\D^\a\SS_{\a \b\mu\nu}\cdot h^{-3}\FF^{\mu\nu}+\SS_{\a\b\mu\nu}\D^\a[h^{-3}\FF^{\mu\nu}]\\
&=&\SS^ \rho \,_\b\,^{\mu\nu}  h^{-3} \T ^\si\SS_{\si\rho \mu\nu }     - 3 h^{-1}\HH^\rho\,_\b  E_\rho \\
&=& -h^{-3}\T^\si(\SS\cdot\SS)_{\si\b} -3h^{-1}E^\rho\HH_{\rho\b}
\eeaa

\end{proof}
\end{appendix}

\end{document}